\title{Nonparametric extensions of randomized\\response for private confidence sets}
\author{Ian Waudby-Smith, Zhiwei Steven Wu, and Aaditya Ramdas\vspace{0.1in}\\
%   \thanks{Use footnote for providing further information
    % about author (webpage, alternative address)---\emph{not} for acknowledging  funding agencies.} \\
    % Departments of $^1$Statistics and $^2$Machine Learning\\
    % $^3$Institute for Software Research\\
  Carnegie Mellon University \vspace{0.05in}\\
  \texttt{\{ianws,zstevenwu,aramdas\}@cmu.edu}}
\declaretheorem[name=Theorem]{theorem}
\declaretheorem[name=Lemma]{lemma}
\declaretheorem[name=Corollary]{corollary}
\theoremstyle{plain}
\theoremstyle{definition}
\newtheorem{definition}[theorem]{Definition}
\theoremstyle{remark}
\newtheorem{remark}[theorem]{Remark}
\newcommand{\PP}{\mathbb P}
\newcommand{\RR}{\mathbb R}
\newcommand{\EE}{\mathbb E}
\newcommand{\NN}{\mathbb N}
\newcommand{\Pcal}{\mathcal P}
\newcommand{\Qcal}{\mathcal Q}
\newcommand{\Fcal}{\mathcal F}
\newcommand{\Gcal}{\mathcal G}
\newcommand{\Zcal}{\mathcal Z}
\newcommand{\Xcal}{\mathcal X}
\newcommand{\Kcal}{\mathcal K}
\newcommand{\Hcal}{\mathcal H}
\newcommand{\seq}[4]{(#1_{#2})_{{#2}={#3}}^{#4}}
\newcommand{\infseq}[3]{\seq{#1}{#2}{#3}{\infty}}
\newcommand{\infseqt}[1]{(#1)_{t=1}^\infty}
\newcommand{\prodseq}[3]{\prod_{#2 = 1}^{#3} #1_{#2}}
\newcommand{\ci}{CI}
\newcommand{\cs}{CS}
\newcommand{\1}{\mathds{1}}
\newcommand{\dd}{\mathrm{d}}
\newcommand{\unifNoise}{\mathcal U}
\newcommand{\lapNoise}{\mathcal L}
\newcommand{\thickcline}[1]{%
    \@thickcline #1\@nil%
}
\newcommand{\eps}{\varepsilon}
\newcommand{\NPRR}{\texttt{NPRR}}
\newcommand{\PMEB}{\mathrm{EB}}
\newcommand{\Hoeff}{\mathrm{H}}
\newcommand{\opt}{\mathrm{opt}}
\newcommand{\GK}{\mathrm{GK}}
\newcommand{\PcalNPRR}[2]{\Pcal_{#1}^{#2}}
\newcommand{\PcalNPRRstarn}{\PcalNPRR{\mu^\star}{n}}
\newcommand{\PcalNPRRstarinf}{\PcalNPRR{\mu^\star}{\infty}}
\newcommand{\QcalNPRR}[2]{\Qcal_{#1}^{#2}}
\newcommand{\QcalNPRRstarn}{\QcalNPRR{\mu^\star}{n}}
\newcommand{\QcalNPRRstarinf}{\QcalNPRR{\mu^\star}{\infty}}
\newcommand{\ceil}{\mathrm{ceil}}
\newcommand{\floor}{\mathrm{floor}}
\newcommand{\EB}{\mathrm{EB}}
\newcommand{\onefone}{{_1F_1}}
\newcommand{\lpci}{LPCI}
\newcommand{\lpcs}{LPCS}
\newcommand{\Var}{\mathrm{Var}}
\newcommand{\Cov}{\mathrm{Cov}}
\newcommand{\ciplotwidth}{0.55\textwidth}
\newcommand{\nprrwidth}{0.4\textwidth}
\begin{document}
\setcounter{tocdepth}{1}
\makeatletter
\renewcommand\tableofcontents{%
    \@starttoc{toc}%
}
\makeatother

\maketitle

\begin{abstract}
This work derives methods for performing nonparametric, nonasymptotic statistical inference for population means under the constraint of local differential privacy (LDP). Given bounded observations $(X_1, \dots, X_n)$ with mean $\mu^\star$ that are privatized into $(Z_1, \dots, Z_n)$, we present confidence intervals (CI) and time-uniform confidence sequences (CS) for $\mu^\star$ when only given access to the privatized data. To achieve this, we study a nonparametric and sequentially interactive generalization of Warner's famous ``randomized response'' mechanism, satisfying LDP for arbitrary bounded random variables, and then provide CIs and CSs for their means given access to the resulting privatized observations. For example, our results yield private analogues of Hoeffding's inequality in both fixed-time and time-uniform regimes. We extend these Hoeffding-type CSs to capture time-varying (non-stationary) means, and conclude by illustrating how these methods can be used to conduct private online A/B tests.
%   We  demonstrate how our techniques can be adapted to provide CIs and CSs under more familiar LDP algorithms such as the Laplace mechanism, but experiments show that our new mechanism provides substantially sharper confidence sets.
\end{abstract}

%%% Local Variables:
%%% mode: latex
%%% TeX-master: "main_arxiv"
%%% End:

\tableofcontents

\newpage
\section{Introduction}
It is easier than ever for mobile apps and web browsers to collect massive amounts of sensitive data about individuals.
% This data can be harnessed for the benefit of all --- for example, health information can be aggregated to better understand diseases at the population level; Google Maps uses location data from smartphones to detect traffic buildups, suggest alternative routes, etc.~\cite{gmaps}.
% Meanwhile, the information being collected may be sensitive. 
\textit{Differential privacy} (DP) provides a framework that leverages statistical noise to limit the risk of sensitive information disclosure \cite{dwork2006calibrating}. The goal of private data analysis is to extract meaningful population-level information from the data (whether in the form of machine learning model training, statistical inference, etc.) while preserving the privacy of individuals via DP. In particular, this paper will focus on statistical inference (e.g. confidence intervals and $p$-values) for population means under DP constraints.
% Let us first briefly motivate the problem of estimating the population mean from a sample, and highlight some potential applications.
% This paper is concerned with constructing sharp nonparametric confidence sets (and hence powerful $p$-values) for the population mean from bounded observations under differential privacy constraints.

% Estimating the population mean from a sample is a canonical problem in statistical inference with a wide range of applications.
As motivating examples, suppose a city wishes to survey households to calculate the approval rating of their mayor, or an IT company aims to understand whether a redesigned homepage will lead to the average user spending more time on it. Both problems can be framed as estimating the mean of some (potentially large) population, but it may be  infeasible to query every single household or all possible website users. Fortunately, a \textit{sample mean} can still be used to estimate the population mean with some degree of precision. For example, a city may randomly choose households to query, or the technology company may show 10\% of users the redesigned webpage at random. This is often referred to as ``A/B testing'', and we expand on this application under privacy constraints in Section~\ref{section:a/b-testing}. When making decisions, however,
% --- such as a mayor deciding to run for re-election, or the technology company planning to roll out the redesign to all users ---
it is crucial to both calculate sample means \textit{and} quantify the uncertainty in those estimates (e.g. using confidence intervals, reviewed in Section~\ref{section:background-cs-martingale}). However, calculating confidence intervals under local differential privacy constraints (defined in Section~\ref{section:background-localdp}) poses a unique statistical challenge, because these intervals must incorporate both the uncertainty introduced from random sampling \textit{and} from the privacy mechanism. This paper studies and provides a nonparametric solution to precisely this challenge.

\subsection{Background I:~Local Differential Privacy}\label{section:background-localdp}
There are two main models of privacy within the DP framework: \textit{central} and \textit{local} DP (LDP) \citep{dwork2006calibrating,kasiviswanathan2011can,dwork2014algorithmic}. The former involves a centralized data aggregator that is trusted with constructing privatized output from raw data, while the latter performs privatization at the ``local'' or ``personal'' level (e.g. on an individual's smartphone before leaving the device) so that trust need not be placed in any data collector. Both models have their advantages and disadvantages: LDP is a more restrictive model of privacy and thus in general requires more noise to be added. On the other hand, the stronger privacy guarantees that do not require a trusted central aggregator make LDP an attractive framework in practice. This paper deals exclusively with LDP.

Making our setup more precise, suppose $X_1, X_2, \dots$ is a (potentially infinite) sequence of $[0, 1]$-valued random variables. We could instead have assumed boundedness on any known interval $[a,b]$ since we can always translate and scale the interval to $[0, 1]$ via the transformation $x \mapsto (x-a)/(b-a)$. We will refer to $\infseq Xt1$ as the ``raw'' or ``sensitive'' data that are yet to be privatized. Following the notation of \citet{duchi2013local-FOCS} the privatized views $Z_1, Z_2, \dots$ of $X_1, X_2, \dots$, respectively are generated by a sequence of conditional distributions $Q_1, Q_2, \dots$ which we refer to as the \textit{privacy mechanism}. Throughout this paper, we will allow this privacy mechanism to be \textit{sequentially interactive}, meaning that the distribution $Q_i$ of $Z_i$ may depend on the past privatized observations $Z_1^{i-1}:= (Z_1,\dots,Z_{i-1})$ \citep{duchi2013local-FOCS}.
% \footnote{We use $a_1^{i-1}$ to denote the tuple $(a_1, \dots a_{i-1})$.} 
In other words, the privatized view $Z_i$ of $X_i$ has a conditional  distribution $Q_i(\cdot \mid X_i = x, Z_1^{i-1} = z_1^{i-1})$.
Following \citet{duchi2013local-FOCS,duchi2018minimax} we say that $Q_i$ satisfies $\varepsilon$-local differential privacy if for all $z_1, \dots, z_{i-1} \in [0, 1]$ and $x, \widetilde x \in [0, 1]$, the following likelihood ratio is uniformly bounded:
\begin{equation}
\small
\label{eq:seqInteractiveLikelihood}
    \sup_{z \in [0, 1]} \frac{q_i(z \mid X_i = x, Z_1^{i-1} = z_1^{i-1})}{q_i(z \mid X_i = \widetilde x, Z_1^{i-1} = z_1^{i-1})} \leq \exp \{ \varepsilon \},
\end{equation}
% \begin{equation}
% \label{eq:seqInteractiveLDP}
% \sup_{S \in \sigma(Z_t)} \frac{Q_i(S \mid X_i = x, Z_1^{i-1} = z_1^{i-1})}{Q_i(S \mid X_i = \widetilde x, Z_1^{i-1} = z_1^{i-1})} \leq \exp \{ \eps \}.
% \end{equation}
where $q_i$ is the density (or Radon-Nikodym derivative) of $Q_i$ with respect to some dominating measure.
In the non-interactive case where the dependence on $Z_1^{i-1}$ is dropped,~\eqref{eq:seqInteractiveLikelihood} simplifies to the usual $\eps$-LDP definition \citep{dwork2014algorithmic}.
% \begin{equation}
% \small
% \label{eq:noninteractiveLDP}
%      \sup_{z \in [0, 1]} \frac{q_i(z \mid X_i = x) }{q_i(z \mid X_i = \widetilde x)} \leq \exp \{ \eps \}.
% \end{equation}
To put $\eps > 0$ in a real-world context, Apple uses privacy levels in the range of $\eps \in \{2,4,8\}$ on macOS and iOS devices for various $\eps$-LDP data collection tasks, including health data type usage, emoji suggestions, and lookup hints \citep{appleEps}. See \cref{fig:eps} to intuit how $\eps$ affects the widths of confidence intervals that we develop.
Next, we review time-uniform confidence sequences and how they differ from fixed-time confidence intervals.

\begin{figure}[!htbp]
    \centering
    \includegraphics[width=\ciplotwidth]{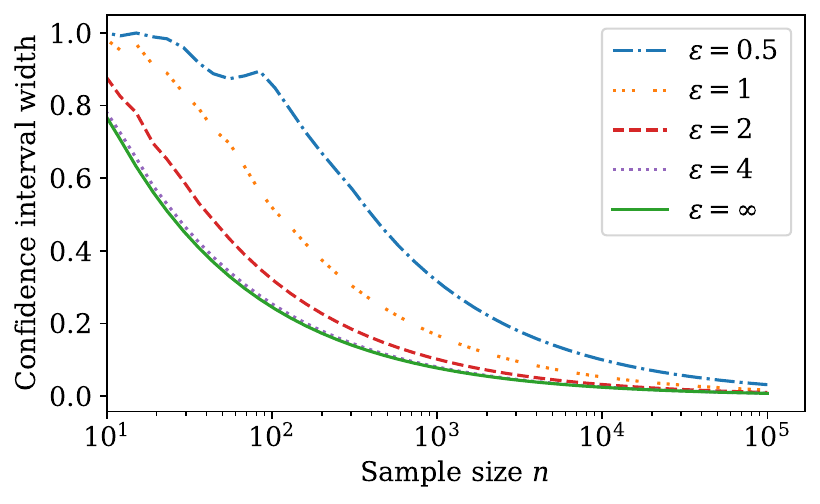}
    \caption{Widths of private 90\%-\ci{}s for the mean of a uniform distribution using our private Hoeffding \ci{} given in \eqref{eq:hoeffding-ci-simple-case} for various levels of $\eps$ ranging from $\eps = 0.5$ (very private) to $\eps = \infty$ (no privacy). Unsurprisingly, less privacy leads to sharper inference, but notice that inference is still practical, especially for $\eps \geq 2$. For context, Apple uses $\eps \in \{2,4,8\}$ for various data collection tasks on iPhones \citep{appleEps}. At these levels of privacy, our \ci{}s perform nearly as well as --- and are in some cases indistinguishable from --- the non-private Hoeffding \ci{}.}
    \label{fig:eps}
    \vspace{-0.1in}
\end{figure}

\subsection{Background II:~Confidence Sequences}\label{section:background-cs-martingale}
One of the most fundamental tasks in statistical inference is the derivation of confidence intervals (CI) for a parameter of interest $\mu^\star \in \RR$ (e.g. mean, variance, treatment effect, etc.). Given data $X_1, \dots, X_n$, the interval $\dot C_n \equiv C(X_1, \dots, X_n)$ is said to be a $(1-\alpha)$-\ci{} for $\mu^\star$ if
\begin{equation}
\small
  \label{eq:CI}
  \PP(\mu^\star \notin \dot C_n) \leq \alpha,
\end{equation}
where $\alpha \in (0, 1)$ is a prespecified error tolerance.
Notice that~\eqref{eq:CI} is a ``pointwise'' or ``fixed-time'' statement, meaning that it only holds for a single fixed sample size $n$.

The ``time-uniform'' analogue of \ci{}s are so-called \emph{confidence sequences} (CS) --- sequences of confidence intervals that are uniformly valid over a (potentially infinite) time horizon~\citep{darling1967confidence,robbins1970statistical}. We say that the sequence $\infseq{\bar C}{t}{1}$ is a $(1-\alpha)$-\cs{}\footnote{As a mnemonic, we will use overhead bars $\bar C_t$ and dots $\dot C_n$ for time-uniform \cs{}s and fixed-time \ci{}s, respectively.} for $\mu^\star$ if
\begin{equation}
\small
  \label{eq:CS}
  \PP(\exists t \geq 1 : \mu^\star \notin \bar C_t ) \leq \alpha.
\end{equation}
The guarantee \eqref{eq:CS} has important implications for data analysis, giving practitioners the ability to (a) update inferences as new data become available, (b) continuously monitor studies without any statistical penalties for ``peeking'', and (c) make decisions based on valid inferences at arbitrary stopping times: for any stopping time $\tau$, $\PP(\mu^\star \notin \bar C_\tau ) \leq \alpha$.

\subsection{Contributions and Outline}\label{section:contributions}
Our primary contributions are threefold: (a) privacy mechanisms, (b) \ci{}s, and (c) time-uniform \cs{}s.
\begin{enumerate}[(a), itemsep=0em]
    \item We prove local DP guarantees of ``Nonparametric randomized response'' (\NPRR{}) --- a sequentially interactive, nonparametric generalization of Warner's randomized response \citep{warner1965randomized} for bounded data (Section~\ref{section:nprr}).
    \item We derive several \ci{}s for the mean of bounded random variables that are privatized by \NPRR{} (Section~\ref{section:nprr-ci}). We believe Section~\ref{section:nprr-ci} introduces the first private nonparametric and nonasymptotic \ci{}s for means of bounded random variables.
    \item We derive time-uniform \cs{}s for the mean of bounded random variables that are privatized by \NPRR{}, enabling private nonparametric sequential inference (Section~\ref{section:nprr-cs}). We also introduce a \cs{} that is able to capture means that change over time under no stationarity conditions on the time-varying means (Section~\ref{section:cs-mean-so-far}). We believe Sections~\ref{section:nprr-cs} and \ref{section:cs-mean-so-far} are the first private nonparametric \cs{}s in the DP literature.
\end{enumerate}
Furthermore, we show how all of the aforementioned techniques can be used to conduct private online A/B tests (Section~\ref{section:a/b-testing}). Finally,~\cref{section:summary} summarizes our findings and discusses some additional results whose details can be found in the appendix. A Python package implementing our methods as well as code to reproduce the figures can be found on GitHub at \href{https://github.com/WannabeSmith/nprr}{github.com/WannabeSmith/nprr}.

% Before diving into (a)--(c), we tackle a simpler problem --- inference under Warner's randomized response for binary data (Section~\ref{section:warmup}).
% % We show how Cram\'er-Chernoff-based approaches can be used to derive \ci{}s under this locally private mechanism.
% The reason for this ``warmup'' is because the nonparametric methods described in (a)--(c) take inspiration from the binary setting and serve as generalizations.

\subsection{Related Work}
The literature on differentially private statistical inference is rich, including nonparametric estimation rates \citep{wasserman2010statistical,duchi2013local-FOCS, duchi2013local-NeurIPS,duchi2018minimax,kamath2020private,butucea2020local,acharya2021differentially}, parametric hypothesis testing and confidence intervals \citep{vu2009differential,wang2015revisiting,gaboardi2016differentially,awan2018differentially,karwa2018finite,canonne2019structure,joseph2019locally,ferrando2020general,covington2021unbiased}, median estimation \citep{drechsler2021non}, independence testing \citep{couch2019differentially}, online convex optimization \citep{jun2019parameter}, and parametric sequential hypothesis testing \citep{wang2020differential}. A more detailed summary  can be found in Section~\ref{section:related-detailed}.

The aforementioned works do not study the problem of private nonparametric confidence sets for population means. Prior work does exist on confidence intervals for the \emph{sample mean of the data} \citep{ding2017collecting,wang2019collecting}. The most closely related work is that of \citet[Section 2.1]{ding2017collecting} who introduce the ``1BitMean'' mechanism which can be viewed as a special case of \NPRR{} (Algorithm~\ref{algorithm:NPRR}). They derive a private Hoeffding-type confidence interval for the \emph{sample} mean of the data, but it is important to distinguish this from the more classical statistical task of \emph{population} mean estimation. For example, if $X_1, \dots, X_n$ are random variables drawn from a distribution with mean $\mu^\star$, then the \emph{population mean} is $\mu^\star$, while the \emph{sample mean} is $\widehat \mu_n := \frac{1}{n}\sum_{i=1}^n X_i$. A private \ci{} for $\mu^\star$ incorporates randomness from both the mechanism \emph{and} the data, while a \ci{} for $\widehat \mu_n$ incorporates randomness from the mechanism \emph{only}. Neither is a special case of the other, and some of our techniques allow for the (sequential) estimation of sample means (see \cref{section:confidence-sets-sample-means} for details and explicit bounds) but this paper is primarily focused on the problem of private \emph{population} mean estimation.

\section{Extending Warner's Randomized Response}
\label{section:nprr}

Before discussing a nonparametric extension of randomized response, let us briefly review Warner's classical randomized response mechanism as well as the Laplace mechanism, discuss their shortcomings, and present a different mechanism that remedies them.

\noindent \textbf{Warner's randomized response.}
When the raw data $\infseq Xt1$ are binary, one of the oldest and simplest privacy mechanisms is Warner's \textit{randomized response} (RR) \citep{warner1965randomized}. Warner's RR was introduced decades before the very definition of DP, but was later shown to satisfy LDP by~\citet{dwork2014algorithmic}.
RR was introduced as a method to provide plausible deniability to subjects when answering sensitive survey questions \citep{warner1965randomized}, and proceeds as follows: when presented with a sensitive Yes/No question (e.g. ``have you ever used illicit drugs?''), the subject flips a biased coin with $\PP(\text{Heads}) = r \in (0, 1]$. If the coin comes up heads, then the subject answers truthfully; if tails, the subject answers ``Yes'' or ``No'' (encoded as 1 and 0, respectively) with equal probability 1/2. It is easy to see that this mechanism satisfies $\varepsilon$-LDP with $\varepsilon = \log ( 1 + \frac{2r}{1-r} )$ by bounding the likelihood ratio of the privatized response distributions: for any true response $x \in \{0, 1\}$, let $q(z \mid X= x) = r\1(z=x) + (1-r)/2 $ denote the conditional probability mass function of its privatized view. Then for any $x, \widetilde x \in \{0, 1\}$,
\begin{equation}
\small
    \sup_{z \in \{0, 1\}}\frac{q(z \mid X=x)}{q(z \mid X = \widetilde x)}
  % = \sup_{z \in \{ 0, 1\}} \frac{(1-r)/2 + r\1(z=x)}{(1-r)/2 + r\1(z=\widetilde x)}
  \leq 1 + \frac{2r}{1-r}, 
\end{equation}
and hence RR satisfies $\eps$-LDP with $\varepsilon = \log ( 1 + \frac{2r}{1-r}  )$. In \cref{section:confidence-sets-rr}, we show how one can derive a \ci{} for the mean of Bernoulli random variables when they are privatized via RR, but as we will see in \cref{section:nprr-ci}, this will be an immediate corollary of a more general result for bounded random variables (\cref{theorem:hoeffding-nprr-ci}).

One downside of RR, however, is that it takes binary data as input. On the other hand, the famous Laplace mechanism satisfies $\eps$-LDP for \emph{bounded} data, including binary ones.

% \begin{proposition}
%     \label{proposition:bernoulliCgfMartingale}
%     Suppose $\seq{X}{t}{n} \sim \text{Bernoulli}(p^\star)$ and let $\seq{Z}{t}{n}$ be the private views of $\seq Xtn$ generated according to Algorithm~\ref{algorithm:seqInteractiveRR}. Define $\zeta_i^\star(p) := p r_i + (1-r_i)/2$ Then,
%     \[ M_t(p^\star) := \prod_{i=1}^t \exp\left \{ \lambda_i (Z_i - \zeta_i^\star(p^\star)) - \psi_B(\lambda_i) \right \} \]
%     forms a test martingale. Consequently,
%     \[ C_t := \left \{ p \in [0, 1] : M_t(p) \leq \frac{1}{\alpha}\right \} \]
%     forms a locally private $(1-\alpha)$-confidence sequence for $p^\star$.

%   \end{proposition}

\noindent \textbf{The Laplace mechanism.}
The Laplace mechanism appeared in the very same paper that introduced DP \citep{dwork2006calibrating}. \cref{algorithm:seqIntLaplace} recalls the (sequentially interactive) Laplace mechanism \citep{duchi2013local-FOCS}.
\begin{algorithm}[!htbp]
  \caption{Sequentially interactive Laplace mechanism}
  \label{algorithm:seqIntLaplace}
  \begin{algorithmic}
    \FOR{t = 1,2,...}
    \STATE Choose $\eps_t$ based on $Z_1^{t-1}$.
    \STATE Generate $\lapNoise_t \sim \mathrm{Laplace}(1/\eps_t)$\;
    \STATE $Z_t \gets X_t + \lapNoise_t$\;
    \ENDFOR
  \end{algorithmic}
\end{algorithm}
It is well-known that $Z_t$ is (conditionally) $\eps_t$-LDP (given $Z_1^{t-1}$) for each $t$ \citep{dwork2006calibrating}. 
% What is missing from the literature (to the best of our knowledge) is a simple nonasymptotic \ci{} (or \cs{}) for $\mu^\star$ given a set of privatized data $Z_1, \dots, Z_n$. 
\cref{section:laplace} derives novel \ci{}s and \cs{}s for population means under the Laplace mechanism, but we omit them here for brevity as a different mechanism (to be described shortly) will yield better bounds.

% In~\cref{proposition:hoeffding-binomial-ci}, we presented a simple Hoeffding-style \ci{} for the mean $p^\star$ of Bernoulli random variables under randomized response, but since we were ultimately interested in \ci{}s (and \cs{}s) that are valid for arbitrary \emph{bounded} random variables, we introduced \ci{}s under the Laplace mechanism (\cref{proposition:laplace-hoeffding-cs} and \cref{corollary:laplace-hoeffding-ci}) to provide one such solution.
% derive a suite of \ci{}s and \cs{}s that enjoy properties (a)--(c) listed above, but we place a particular emphasis on (c) tight confidence set construction.

% We will now introduce a locally private mechanism that serves as a generalization of Warner's RR so that it can be used to privatize any \emph{bounded} observations, not just binary. Of course, the Laplace mechanism already provides one solution to this problem, but as we will see shortly, our extension of RR has advantages with regards to storage requirements, privacy guarantees, and confidence set efficiency, the third of which will be the primary focus below.

% Without loss of generality, we will consider random variables bounded on the interval $[0, 1]$, since other intervals $[a,b]$ can be translated and scaled to this range.
\noindent \textbf{Nonparametric randomized response (\NPRR{}).} The mechanism we use, which we call ``Nonparametric randomized response'' (\NPRR{}) serves as a sequentially interactive generalization of RR for arbitrary bounded data by simply combining stochastic rounding \citep{barnes1951electronic,forsythe1959reprint,hull1966tests} with $k$-RR --- a categorical but non-interactive generalization of Warner's RR introduced by \citet{kairouz2014extremal,kairouz2016discrete}, and also considered by \citet{li2020estimating} under the name ``Generalized Randomized Response''. Note that \citet{kairouz2014extremal,kairouz2016discrete} use $k$ to refer to the number of unique values that the input and output data can take on, which is $k = G+1$ in the case of Algorithm~\ref{algorithm:NPRR}. \NPRR{} is explicitly described in Algorithm~\ref{algorithm:NPRR}, and we summarize its LDP guarantees in Theorem~\ref{theorem:NPRR-DP}. 

% In words, \NPRR{} proceeds at each time step $t$ by first selecting an integer $G_t \geq 1$, and stochastically rounding \citep{barnes1951electronic,forsythe1959reprint,hull1966tests} the raw data $X_t$ into a discrete random variable $Y_t$ supported on $\{0, 1/G_t, \dots, 1\}$ such that $\EE(X) = \EE(Y)$. Then, \NPRR{} applies $k$-RR\footnote{In our setup, we are using $k$-RR with $k = G+1$.} --- a multivariate analogue of RR introduced by \citet{kairouz2014extremal,kairouz2016discrete} --- to the discretized $Y$ to obtain the final privatized output $Z$. We explicitly describe \NPRR{} in Algorithm~\ref{algorithm:NPRR}.

\begin{algorithm}[h!]
\caption{Nonparametric randomized response (\NPRR{})}
\label{algorithm:NPRR}
  \begin{algorithmic}
    \FOR{$t = 1,2, \dots$}
    \STATE{\textit{// Step 1: Discretize $X_t$ into $Y_t$ via stochastic rounding.}}
    \STATE Choose integer $G_t \geq 1$ based on $Z_1^{t-1}$\;
    \STATE $X_t^\mathrm{ceil} \gets \lceil G_tX_t \rceil / G_t$, $X_t^\mathrm{floor} \gets \lfloor G_tX_t \rfloor / G_t$ \;
    \IF{$X_t^\mathrm{ceil} == X_t^\mathrm{floor}$}
        \STATE $Y_t \gets X_t$
    \ELSE{}
    \STATE Generate $Y_t \sim \begin{cases}
      X_t^\mathrm{ceil} & \text{w.p. } G_t\cdot (X_t - X_t^\mathrm{floor}) \\
      X_t^\mathrm{floor} & \text{w.p. } G_t\cdot (X_t^\mathrm{ceil} - X_t)
    \end{cases}$ \;
    \ENDIF
    \STATE \textit{// Step 2: Privatize $Y_t$ into $Z_t$ via $k$-RR.}
    \STATE{Choose $r_t \in (0, 1]$ based on $Z_1^{t-1}$}\;
    \STATE Generate $\unifNoise_t \sim \mathrm{Unif}\left \{ 0, \frac{1}{G_t}, \frac{2}{G_t}, \dots, \frac{G_t}{G_t} \right\}$\;
    \STATE Generate $Z_t \sim \begin{cases} Y_t & \text{w.p. } r_t \\ \unifNoise_t & \text{w.p. } 1-r_t \end{cases}$
    \ENDFOR
  \end{algorithmic}
\end{algorithm}
Up to rescaling of the outputs, \NPRR{} can be viewed as a sequentially interactive analogue of the local randomizer given in \citet[Algorithm 2]{balle2019privacy} in the context of the shuffle model. We nevertheless refer to this mechanism as ``\NPRR{}'' to emphasize its connection to Warner's RR. Indeed, if $\infseq{X}{t}{1}$ are $\{0, 1\}$-valued, and if we set $G_1 = G_2 = \dots = 1$ and $r_1 = r_2 = \dots = r \in (0, 1]$, then no stochastic rounding occurs and \NPRR{} recovers RR exactly, making \NPRR{} a sequentially interactive and nonparametric generalization for bounded data. Finally, if we let \NPRR{} be non-interactive and set $G_1 = \dots = G_n = 1$, then \NPRR{} recovers the ``1BitMean'' mechanism of \citet{ding2017collecting}. As such, the form of \NPRR{} given in \cref{algorithm:NPRR} makes transparent generalizations of and connections between \citet{warner1965randomized}, \citet{kairouz2016discrete}, \citet{li2020estimating}, \citet{ding2017collecting}, and \citet{balle2019privacy}.\footnote{Notice that \citet{ding2017collecting}'s $\alpha$-point rounding mechanism is different from \NPRR{} as \NPRR{} shifts the mean of the inputs but alpha-point rounding leaves the mean unchanged.}
Let us now formalize \NPRR{}'s LDP guarantees.
\begin{restatable}[\NPRR{} satisfies LDP]{theorem}{NPRRLDP}\label{theorem:NPRR-DP}
Suppose $\infseq{Z}{t}{1}$ are generated according to \NPRR{}. Then for each $t \in \{1,2,\dots\}$,  $Z_t$ is a conditionally $\eps_t$-LDP view of $X_t$ with
\begin{equation}
\small
  \eps_t := \log \left ( 1 + \frac{(G_t+1)r_t}{1-r_t} \right ).
\end{equation}
\end{restatable}

The proof in Section~\ref{proof:NPRR-DP} proceeds by bounding the conditional likelihood ratio for any two data points $x, \widetilde x \in [0, 1]$ similar to~\eqref{eq:seqInteractiveLikelihood}.
In all of the results that follow in the following sections, we will write expressions in terms of $\seq rt1n$, but these can always be chosen given desired $\seq \eps t1n$ levels via the relationship
\begin{equation}\label{eq:r-eps-relationship}\small
  r_t = \frac{\exp\{\eps_t\} - 1}{\exp \{\eps_t\} + G_t}.
\end{equation}
In the familiar special case of $r_t = r \in (0, 1]$ and $G_t = G \in \{1,2,\dots\}$ for each $t$, we have that $\infseq Zt1$ satisfy $\eps$-LDP with $\eps := \log ( 1 + (G+1)r/(1-r) )$. Notice that when $G_t = 1$ for each $t$, we have that \NPRR{} satisfies $\eps$-LDP with the same value of $\eps$ as Warner's RR\@. Consequently, there is no privacy lost from instantiating the more general \NPRR{} to the binary case.

\begin{remark}[Who chooses $\eps_t$, $r_t$, or $G_t$, and how?]\label{remark:who-chooses-r-eps-G}
  Due to the sequential interactivity of \NPRR{}, individuals can specify their own levels of privacy, or the parameters $(r_t, G_t)_{t=1}^\infty$ can be adjusted over time (e.g.~if the data collector chooses to decrease $\eps_t$ for regulatory reasons, or increase $\eps_t$ to obtain sharper inference). Formally, $(r_t, G_t)$ can be chosen in any way as long as they are \emph{predictable}, meaning that they can depend on $Z_1^{t-1}$.
  % --- or in other words, $(r_t, G_t)$ must be $\sigma(Z_1^{t-1})$-measurable.
  Nevertheless, sequential interactivity is completely optional, and the data collector is free to set $(r_t, G_t) = (r, G)$ for every $t$ to recover the familiar notion of $\eps$-LDP\@.
\end{remark}

\noindent \textbf{Why use \NPRR{} instead of Laplace?}
While RR is limited to privatizing binary data, the Laplace mechanism can handle bounded data, so why use \NPRR{} as an alternative to the two? The reason stems from our original motivation: to derive locally private nonparametric, nonasymptotic confidence sets for means of bounded random variables. To achieve this, we will ultimately use modern concentration techniques from the literature on (non-private) confidence sets, many of which exploit boundedness in clever ways to yield clean, closed-form expressions and/or empirically tight confidence intervals. Since the Laplace mechanism does not preserve the boundedness of its input, it is not clear how those techniques can be used for Laplace-privatized data (though we do derive novel Laplace-based solutions using a different approach in \cref{section:laplace}, but they are ultimately outperformed by those that we derive based on \NPRR{}). \NPRR{} on the other hand, preserves the input's boundedness, making it possible to apply analogues of these modern concentration techniques for \NPRR{}-privatized data. The efficiency gains that result from this approach are illustrated in Figures~\ref{fig:ci} and~\ref{fig:cs}.

In addition to being useful for deriving simple and efficient confidence sets, \NPRR{} has some other orthogonal advantages over the Laplace mechanism. First, \NPRR{} has reduced storage requirements: Once a $[0, 1]$-bounded random variable has been privatized via Laplace, the output is a floating-point number, requiring 64 bits to store as a double-precision float. In contrast, \NPRR{} outputs one of $(G+1)$ different values, hence requiring only $ \left \lceil \log_2(G+1) \right \rceil$ bits to store.
  % --- e.g.~$G = 1$ uses 1 bit, $G=10$ uses 4 bits, $G=100$ uses 7 bits, and so on
Moreover, storing the \NPRR{}-privatized view of $x$ will never require more memory than storing $x$ itself (unless $G$ is set to nonsensical values larger than $2^{64}$), while  Laplace-privatized views will always require at least enough memory to represent floating point numbers.

Second, \NPRR{} is automatically resistant to the floating-point attacks that the Laplace mechanism suffers from. \citet{mironov2012significance} showed that storing Laplace output as a floating-point number can leak information about the input $x$, thereby compromising its LDP guarantees. While \citet{mironov2012significance} discusses remedies to this issue, practitioners may still naively apply the Laplace mechanism using common software packages and remain vulnerable to these so-called ``floating-point attacks''. In contrast, the discrete representation of \NPRR{}'s output is not vulnerable to such attacks, without the need for remedies at all. Note that while \NPRR{} may have to deal with floating point numbers as input, they are transformed into discrete random variables \emph{before} any $\eps$-LDP guarantees are added. The privatization step (transforming $Y_t$ into $Z_t$ in \cref{algorithm:NPRR}) takes one of $G_t+1$ values as input and produces one of $G_t+1$ values as output, thereby sidestepping any need to handle floating point numbers.

The remainder of this paper will focus solely on constructing efficient locally private confidence sets, but the above benefits can be seen as ``free'' byproducts of \NPRR{}'s design.
% Note that the assumption of boundedness is not just sufficient, but is in some sense \emph{necessary} for nonparametric confidence intervals (unless further structural assumptions are imposed). In particular, \citet[Appendix G]{duchi2013local-FOCS} show that for the problem of nonparametric mean estimation from unbounded the locally private minimax rate can be \emph{infinite}. In other words, regardless of how the $\eps$-LDP mechanism and estimator are chosen, it is not possible to consistently estimate the mean in general. Therefore, there is a sense in which the assumptions we make in this section are as weak as possible for the nonparametric problem at hand.

\section{Private CIs for Bounded Data}
\label{section:nprr-ci}

Making matters formal, let $\Pcal_\mu$ be the set of distributions on $[0, 1]$ with population mean $\mu \in [0, 1]$. $\Pcal_\mu$ is a convex set of distributions with no common dominating measure, since it consists of discrete and continuous distributions, as well as their mixtures. We will consider sequences of random variables $(X_i)_{i=1}^n$ drawn from the product distribution $\prodseq{P}{i}{n}$ where $n \in \{1, 2, \dots, \infty\}$ and each $P_i \in \Pcal_\mu$. For succinctness, define the following set of distributions,
\begin{equation}
\small
     \Pcal_\mu^n := \left \{ \prod_{i=1}^n P_i \text{ such that each } P_i \in \Pcal_\mu \right \}, 
\end{equation}
for $n \in \{1, 2, \dots, \infty \}$. In words, $\Pcal_\mu^n$ contains distributions for which the random variables are independent and $[0, 1]$-bounded with mean $\mu$ but need not be identically distributed. 
% This set $\Pcal_\mu^n$ includes the familiar special case of iid random variables when $P_i = P_j$ for all $i \neq j$.
We use the notation $\seq Xt1n \sim P$ for some $P\in \Pcal_{\mu^\star}^n$ to indicate that $\seq Xt1n$ are independent with mean $\mu^\star$.
The goal is now to derive sharp \ci{}s and time-uniform \cs{}s for $\mu^\star$ given \NPRR{}-privatized views of $\seq{X}{t}{1}{n}$. 

% As a first attempt, we could use the fact that bounded random variables are sub-Bernoulli or sub-Gaussian and apply the methods of Section~\ref{section:subpsi}.
% In particular, \citet{hoeffding1963probability} showed that for any $[0, 1]$-bounded random variable $X$ with mean $\mu \in [0, 1]$ and any $\lambda \in \RR$,
% \[ \log \EE \exp \{ \lambda (X - \mu) \} \leq \underbrace{\log \left (1- \mu + \mu \exp (\lambda) \right )- \lambda \mu}_{(i)} \leq \underbrace{\lambda^2 / 8}_{(ii)}, \]
% where $(i)$ and $(ii)$ are the cumulant generating functions of Bernoulli and Gaussian random variables, respectively. The famous ``Hoeffding's inequality'' typically refers to $(ii)$ despite $(i)$ being sharper and presented in the same 1963 paper \cite{hoeffding1963probability}. Both $(i)$ and $(ii)$ can be used as the CGF upper bounds in Section~\ref{section:subpsi} to derive time-uniform confidence sequences and fixed-time confidence intervals. However, notice that neither $(i)$ nor $(ii)$ depends on the underlying variance of $X$, and hence the resulting confidence sets will not adapt to low-variance distributions.

% \noindent \textbf{The observation model for private \NPRR{} output.}\label{section:observation-model}
Let us write $\Qcal_{\mu^\star}^n$ to denote the set of joint distributions on \NPRR{}'s output, where we have left the dependence on each $G_t$ and $r_t$ implicit. In other words, given $\seq Xt1n \sim P$ for some $P \in \Pcal_{\mu^\star}^n$, their \NPRR{}-induced privatized views $\seq Zt1n$ have a joint distribution from $Q$ for some $Q \in \Qcal_{\mu^\star}^n$.
\begin{figure}[h!]
  \centering
  \includegraphics[width=\nprrwidth]{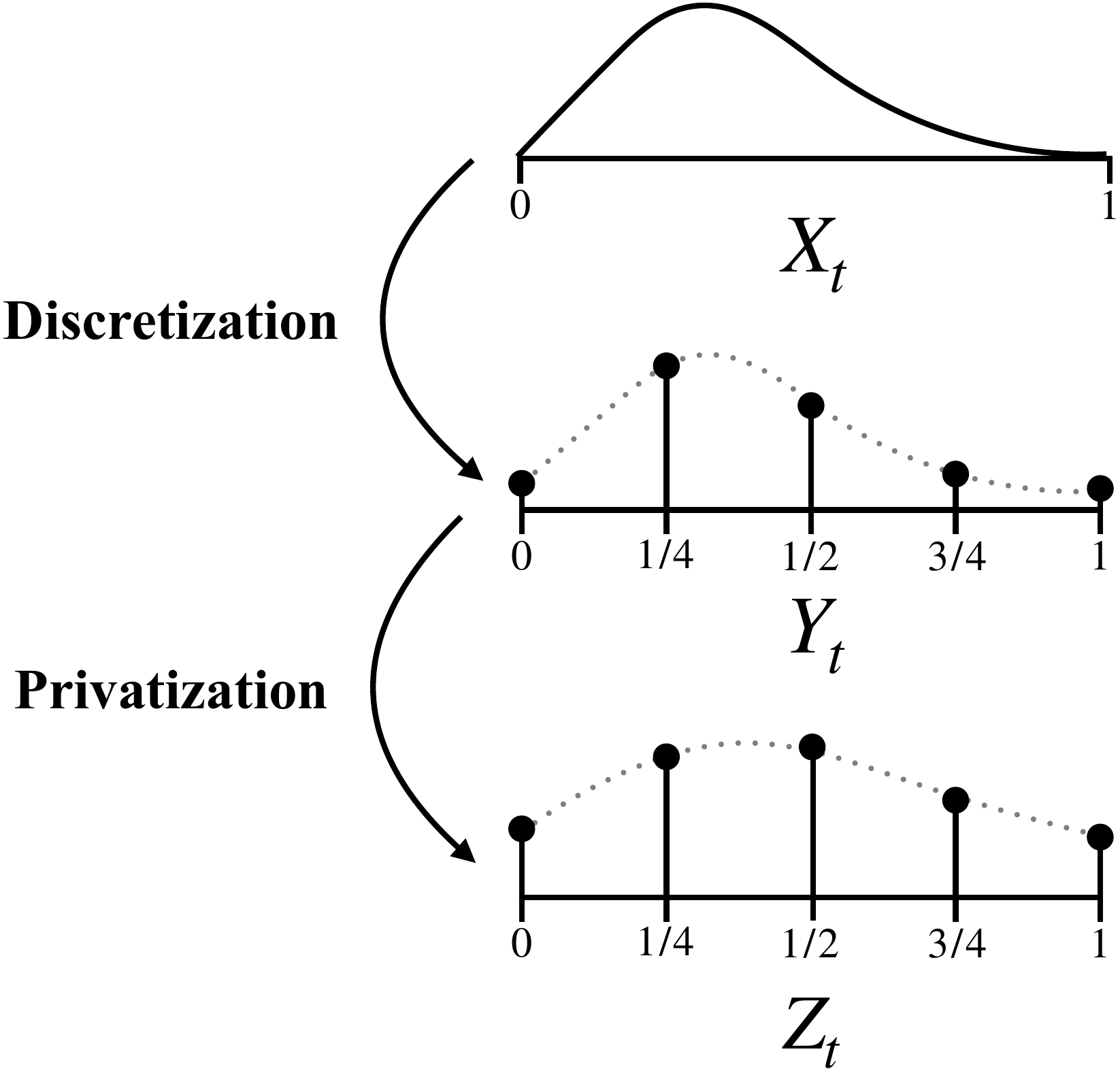}
  \caption{An illustration of how a distribution $Q \in \Qcal_{\mu^\star}^n$ can arise from applying \NPRR{} with $G_t = 4$ to draws from the input distribution $P \in \Pcal_{\mu^\star}^n$. Raw data $X_t$ are discretized into $Y_t$ so that $Y_t$ has finite support but so that $\mu^\star = \EE (X_t) = \EE(Y_t)$. The discrete $Y_t$ are then privatized into $Z_t$ with conditional mean $\EE (Z_t \mid Z_1^{t-1}) = \zeta_t(\mu^\star) = r_t\mu^\star + (1-r_t)/2$ by being mixed with independent uniform noise $\unifNoise_t \sim {\mathrm{Unif}\{0, 1/4, 1/2, 3/4, 1\}}$.}
  \label{fig:sirrg}
\end{figure}

\subsection{What is a Locally Private Confidence Set?}\label{section:lpci}
Let first define what we mean by locally private confidence intervals (\lpci{}) and sequences (\lpcs{}), and subsequently derive them for means of bounded random variables.

\begin{definition}[Locally private confidence sets]\label{definition:lpci}
  Let $\eps \equiv \seq{\eps}{t}{1}{n} \equiv (\eps_t)_t$. We say that $L_n$ is a lower $(1-\alpha, \eps)$-\lpci{} for a parameter $\theta^\star$, and with respect to the raw data $\seq Xt1n$ if $L_n$ is a lower $(1-\alpha)$-\ci{} for $\theta^\star$, meaning
  \begin{equation}\label{eq:lpci-coverage}\small
    \PP(\theta^\star \geq L_n) \geq 1-\alpha,
  \end{equation}
  and if $L_n \equiv L(Z_1, \dots, Z_n)$ is only a function of the $\eps_t$-LDP view $Z_t$ of $X_t$ for each $t$, but not of $\seq Xt1n$ directly.

  Similarly, we say that $\infseqt{L_t}$ is a lower $(1-\alpha, \eps)$-\lpcs{} for $\theta^\star$ if \eqref{eq:lpci-coverage} is replaced with the time-uniform guarantee
  \begin{equation}
    \PP(\forall t,\ \theta^\star \geq L_t) \geq 1-\alpha.
  \end{equation}
  Upper \ci{}s and \cs{}s are defined analogously.
\end{definition}
Note that \lpci{}s and \lpcs{}s also satisfy $\eps$-LDP, since DP is closed under post-processing \citep{dwork2014algorithmic}.

% As the results that follow rely heavily on NSM-based techniques, we will say that a process $\seq Mt0n$ is a $\Qcal_{\mu^\star}^n$-NSM if $\seq Mt0n$ is an NSM with respect to the private filtration $\Zcal$ for every $Q \in \Qcal_{\mu^\star}^n$. For martingales, $\Qcal_{\mu^\star}^n$-NMs are defined analogously.

\subsection{A Locally Private Hoeffding \ci{} via \NPRR{}}
% To derive the following confidence sets, We will employ inferential tools that only use the fact that each $Z_t$ is bounded in $[0, 1]$ with conditional mean $\zeta_t(\mu^\star) := r_t \mu^\star + (1-r_t)/2$.
First, we present a private generalization of Hoeffding's inequality under \NPRR{}.
% In fact, all of the betting-based algorithms outlined in the aforementioned paper can be adapted to this sequentially interactive locally private setting. We present some simplified adaptations here, and describe more general ones in Section~\ref{section:additional-CSs}.

% Certainly $\bar C_n^\pm$ forms a \emph{valid} \ci{} because time-uniform coverage implies fixed-time coverage: $\PP(\mu^\star \notin \bar C_n^\pm) \leq \PP(\exists t \in \mu^\star \notin \bar C_t^\pm) \leq \alpha$, but can we do better at $n$ in particular? It is indeed possible to obtain sharper \ci{}s than $\bar C_n^\pm$ by tailoring the predictable sequence $(\lambda_t)_{t=1}^n$ to the sample size at hand.
\begin{restatable}[\texttt{NPRR-H}]{theorem}{HoeffdingNPRRCI}\label{theorem:hoeffding-nprr-ci}
  Suppose $\seq Xt1n \sim P$ for some $P \in \PcalNPRRstarn$, and let $\seq Zt1n \sim Q \in \QcalNPRRstarn$ be their privatized views via \NPRR{}.
  Define the \NPRR{}-adjusted sample mean
  \begin{equation}
  \small
      \widehat \mu_n := \frac{\frac{1}{n}\sum_{i=1}^n \left ( Z_i - (1-r_i)/2 \right )}{\frac{1}{n}\sum_{i=1}^n r_i}.
  \end{equation}
  Then,
    \begin{equation}
    \small
    \label{eq:hoeffding-ci-simple-case}
        \dot L_n^\Hoeff := \widehat \mu_n - \sqrt{\frac{\log(1/\alpha) }{2n(\frac{1}{n}\sum_{i=1}^n r_i)^2}}
    \end{equation}
    is a lower $(1-\alpha,(\eps_t)_t)$-\lpci{} for $\mu^\star$.
\end{restatable}

The proof in Section~\ref{proof:hoeffding-nprr-ci} uses a locally private supermartingale variant of the Cram\'er-Chernoff bound. We recommend setting $r_t$ for the desired $\eps_t$-LDP level via the relationship in~\eqref{eq:r-eps-relationship} and $G_t := 1$ for all $t$ (the reason behind which we will discuss in \cref{remark:hoeffding-lack-of-variance-adaptivity}). Notice that in the non-private setting where we set $r_i = 1$ for all $i$, then $\dot L_n^\Hoeff$ recovers the classical Hoeffding inequality exactly \citep{hoeffding1963probability}. Moreover, notice that if $\seq Xt1n$ took values in $[a,b]$ instead of $[0, 1]$, then \eqref{eq:hoeffding-ci-simple-case} would simply scale with $(b-a)$ in the same manner as \citet{hoeffding1963probability}. Recall as discussed in \cref{remark:who-chooses-r-eps-G} that $\seq rt1n$ could be chosen either by the data collector or by the subject whose data are being collected, but that sequential interactivity is optional.
% Additionally setting $G= 1$ recovers Proposition~\ref{proposition:hoeffding-binomial-ci}, the difference here being that Theorem~\ref{theorem:hoeffding-nprr-ci} can handle arbitrary bounded observations, not just Bernoulli.

\begin{figure}[!ht]
  \centering
  \includegraphics[width=\ciplotwidth]{./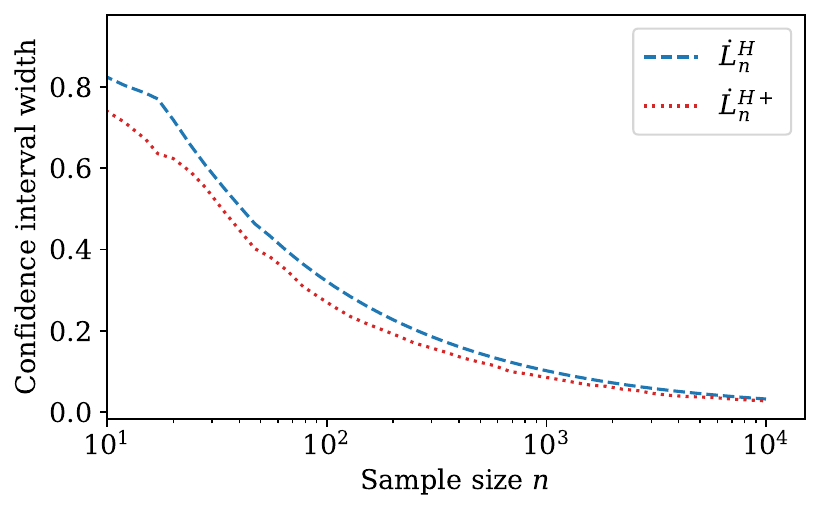}
  \caption{Two $(90\%,2)$-\lpci{}s: $\dot L_n^H$ given in \eqref{eq:hoeffding-ci-simple-case} and $\dot L_n^{H+}$ given in \eqref{eq:nprr-h-ci-tighter} --- i.e.~these are $(1-\alpha,\eps)$-\lpci{}s with $\alpha=0.1$ and $\eps=2$. Notice that the latter can be tighter than the former. Indeed this is because $L_n^{H+}$ is never looser than $L_n^H$ (by definition) but strictly tighter with positive probability.}
  \label{fig:LnH-versus-LnHplus}
\end{figure}
In fact, we can strictly improve on \eqref{eq:hoeffding-ci-simple-case} by exploiting the martingale dependence of this problem. Indeed, under the same assumptions as Theorem~\ref{theorem:hoeffding-nprr-ci}, we have that
\begin{equation}
\label{eq:nprr-h-ci-tighter} 
\small
  \dot L_{n}^{\Hoeff+} := \max_{1 \leq t \leq n} \left \{ \widehat \mu_t  - \frac{\log(1/\alpha) + t \lambda_n^2/8}{\lambda_n \sum_{i=1}^t r_i} \right \} 
\end{equation}
is also a lower $(1-\alpha, \seq{\eps}{t}{1}{n})$-\lpci{} for $\mu^\star$, where $\lambda_n := \sqrt{8 \log(1/\alpha)/n}$. Notice that $\dot L_n^{\Hoeff+}$ is at least as tight as $\dot L_n^\Hoeff$ since the $n^\mathrm{th}$ term in the above $\max_{1\leq t \leq n}$ recovers $\dot L_n^\Hoeff$ exactly. Moreover, $\dot L_n^{\Hoeff+}$ is \emph{strictly} tighter than $\dot L_n^\Hoeff$ with positive probability, and hence strictly tighter in expectation: $\EE( \dot L_n^{\Hoeff+}) > \EE( \dot L_n^\Hoeff)$.
\begin{figure}[!ht]
     \centering
     \includegraphics[width=\ciplotwidth]{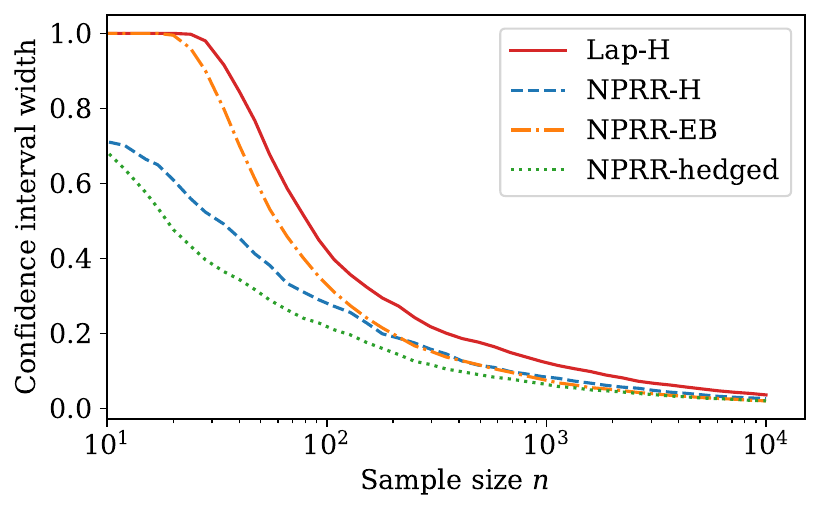}
     \caption{Widths of $(90\%,2)$-\lpci{}s for the mean of a Beta(50, 50) distribution. Hoeffding-based methods (Lap-H and NPRR-H found in \cref{corollary:laplace-hoeffding-ci} and \cref{theorem:hoeffding-nprr-ci}) do slightly worse than the variance-adaptive ones (NPRR-EB and NPRR-hedged in \cref{proposition:ldp-eb-ci} and \cref{theorem:ldp-hedged-ci}), but in all cases, \ci{}s that rely on \NPRR{} seem to outperform Lap-H in both small and large $n$ regimes.}
     \label{fig:ci}
   \end{figure}
   
\begin{remark}[Minimax rate optimality of \eqref{eq:hoeffding-ci-simple-case}]
  In the case of $\eps_1 = \dots = \eps_n = \eps \in (0, 1]$, \citet[Proposition 1]{duchi2013local-FOCS} give minimax estimation rates for the problem of nonparametric mean estimation. Their lower bounds say that for any $\eps$-LDP mechanism and estimator $\widehat \mu_n$ for $\mu^\star$, the root mean squared error $\sqrt{\EE ( \widehat \mu_n - \mu^\star )^2}$ cannot scale faster than $O(1/\sqrt{n\eps^2})$. Since \NPRR{} is $\eps$-LDP with $\eps = \log(1 + 2r/(1-r))$, we have that $r \asymp \eps$ up to constants on $\eps \in (0, 1]$. It follows that $\dot L_n^\Hoeff \asymp 1/\sqrt{n \eps^2}$, matching the minimax estimation rate. Of course, the midpoint of a \ci{} for $\mu^\star$ can always be used as an estimator for $\mu^\star$, and hence we cannot expect the width of the \ci{} to shrink faster than the minimax estimation rate. While explicit minimax lower bounds do not exist for the setting where $\eps_i \neq \eps_j$ for some $i, j$, notice that instead of scaling with $r^{-1}$ (which we would have if $r_i = r_j$ for $i \neq j$), $\dot L_n^\Hoeff$ scales with $(\frac{1}{n}\sum_{i=1}^n r_i)^{-1}$, and hence our bounds seem to be of the right order when $\eps$ is permitted to change.
\end{remark}

\begin{remark}[The relationship between $\eps$ and \eqref{eq:hoeffding-ci-simple-case} for practical levels of privacy]
    As mentioned in the introduction and in Figure~\ref{fig:eps}, Apple uses values of $\eps \in \{2,4,8\}$ for various $\eps$-LDP data collection tasks on iPhones~\citep{appleEps}. Note that for $G = 1$, having $\eps$ take values of 2, 4, and 8 corresponds to $r$ being roughly 0.762, 0.964, and 0.999, respectively, via the relationship $r = (\exp(\eps) - 1)/(\exp(\eps) + 1)$. As such, \eqref{eq:hoeffding-ci-simple-case} simply inflates the width of the non-private Hoeffding \ci{} by $0.762^{-1}$, $0.964^{-1}$, and $0.999^{-1}$, respectively. Hence larger  $\eps$ (e.g. $\eps \geq 4$) leads to \ci{}s that are nearly indistinguishable from the non-private case (Figure~\ref{fig:eps}).
\end{remark}

% The lower \ci{} $\dot L_n^\Hoeff$ has a few moving parts --- i.e. $\Zcal$-predictable $r_t$'s and the sequence $(\lambda_t)_{t=1}^n$ --- so consider the simplified special case where $r_1 = \dots = r_n$ and $\lambda_1 = \dots = \lambda_n$ for the sake of exposition. Then, the lower bound $\bar L_{n}^\Hoeff$ becomes
% \begin{equation}
% \small
% \label{eq:nprr-hoeffding-constant-r}
% \small
%   \bar L_{n}^\Hoeff := \frac{1}{nr}\sum_{i=1}^n \left ( Z_i - (1-r)/2 \right ) - \frac{\log (1/\alpha) + t\lambda^2/8}{nr\lambda}.
% \end{equation}
% The above $\bar L_{n}^\Hoeff$ is tightest when $\lambda := \sqrt{8 \log (1/\alpha)/ n}$, which is precisely the optimal value for the non-private Hoeffding \ci{} for a fixed sample size \citep{hoeffding1963probability}. Plugging this value of $\lambda$ into $\bar L_{n}^\Hoeff$, we have
% \begin{equation}
% \small
% \label{eq:hoeffding-ci-simple-case}
%     \frac{1}{nr}\sum_{i=1}^n \left ( Z_i - (1-r)/2 \right ) - \sqrt{\frac{\log(1/\alpha) }{2nr^2}}, 
% \end{equation}
% which recovers Hoeffding's inequality exactly in the non-private case where $r = 1$ \citep{hoeffding1963probability}. Nevertheless, Theorem~\ref{theorem:hoeffding-nprr-ci} is able to strictly improve on the above by taking $\dot L_n^\Hoeff := \max_{1\leq t \leq n} \bar L_{t}^\Hoeff$, a result of studying the problem using NSMs.

\begin{remark}\label{remark:hoeffding-lack-of-variance-adaptivity}
Since Hoeffding-type bounds are not variance-adaptive (meaning they use a worst-case upper-bound on the variance of bounded random variables as in \citet{hoeffding1963probability}), they do not benefit from the additional granularity when setting $G_t \geq 2$ (see Section~\ref{section:hoeffding-G=1} for a detailed mathematical explanation). As such, we set $G_t = 1$ for each $t$ when running \texttt{NPRR-H}. Nevertheless, other \ci{}s are capable of adapting to the variance with $G_t \geq 2$, and these are discussed in \cref{section:variance-adaptive}, with some suggestions for how to choose $G_t\geq 2$ in \cref{section:choosing-rG}.
Nevertheless, the empirical performance of our variance-adaptive \ci{}s is illustrated in \cref{fig:ci}.
\end{remark}

\subsection{Time-uniform Confidence Sequences \texorpdfstring{for $\mu^\star$}{}}
\label{section:nprr-cs}
Previously, we focused on constructing a (lower) \ci{} $L_n$ for $\mu^\star$, meaning that $L_n$ satisfies the high-probability guarantee $\PP(\mu^\star \geq L_n) \geq 1-\alpha$ for the prespecified sample size $n$. We will now derive \cs{}s --- i.e.~entire \emph{sequences} of \ci{}s $\infseqt{L_t}$ --- which have the stronger \emph{time-uniform} coverage guarantee $\PP(\forall t,\ \mu^\star \geq L_t) \geq 1-\alpha$, enabling anytime-valid inference in sequential regimes. See~\cref{section:background-cs-martingale} for a review of the mathematical and practical differences between \ci{}s and \cs{}s. In summary, if $\infseqt{L_t}$ is a lower $(1-\alpha)$-\cs{}, then $L_\tau$ forms a valid $(1-\alpha)$-\ci{} at arbitrary stopping times $\tau$ (including random and data-dependent times) and hence a practitioner can continuously update inferences as new data are collected, without any penalties for ``peeking'' at the data early. Let us now present a Hoeffding-type \cs{} for $\mu^\star$, serving as a time-uniform analogue of~\cref{theorem:hoeffding-nprr-ci}.

\begin{restatable}[\texttt{NPRR-H-CS}]{theorem}{LDPHoeffdingCS}\label{theorem:ldp-hoeffding-cs}
  Let $\infseq Zt1 \sim Q$ for some $Q \in \Qcal_{\mu^\star}^\infty$. Define the modified mean estimator under \NPRR{}:
    \begin{equation}
    \small
    \widehat \mu_{t}(\lambda_1^t) := \frac{\sum_{i=1}^t \lambda_i \cdot (Z_i - (1-r_i)/2)}{\sum_{i=1}^t r_i \lambda_i},
    \end{equation}
   and let $\infseq \lambda t1$ be a real-valued sequence of tuning parameters (discussed in \eqref{eq:laplace-lambda-cs}). Then,
   \begin{equation}
   \small
       \bar L_t^\Hoeff := \widehat \mu_t(\lambda_1^t) - \frac{\log(1/\alpha) + \sum_{i=1}^t \lambda_i^2 / 8}{\sum_{i=1}^t r_i\lambda_i}
   \end{equation}
  forms a lower $(1-\alpha, (\eps_t)_t)$-\lpcs{} for $\mu^\star$.
\end{restatable}

\begin{figure}[!htbp]
    \centering
    \includegraphics[width=\ciplotwidth]{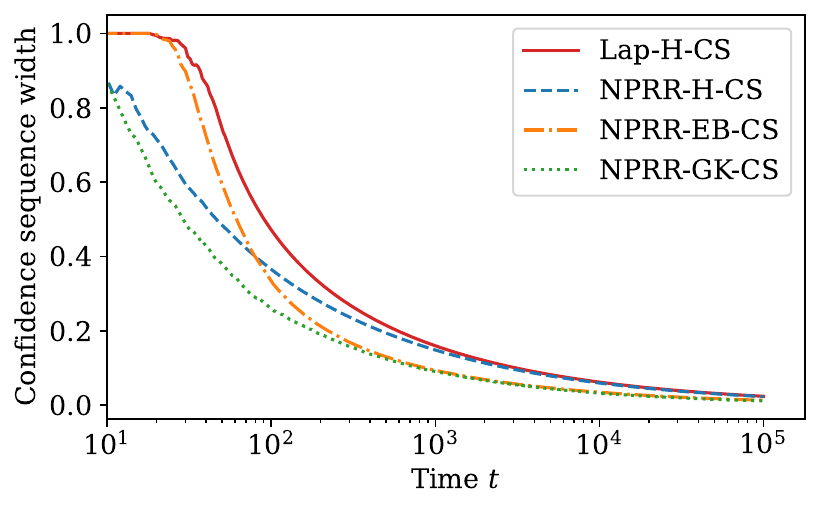}
    \caption{Widths of (90\%, 2)-\lpcs{}s for the mean of a Beta(50, 50) distribution. Like Figure~\ref{fig:ci}, Hoeffding-based methods (Lap-H-CS and NPRR-H-CS found in \cref{proposition:laplace-hoeffding-cs} and \cref{theorem:ldp-hoeffding-cs}) do worse than the variance-adaptive ones (NPRR-EB-CS and NPRR-GK-CS in \cref{proposition:ldp-eb-cs} and \cref{theorem:ldp-dkelly}) for large $t$, though NPRR-H-CS does outperform NPRR-EB-CS for small $t$. Nevertheless, in all cases, we find that \NPRR{}-based \cs{}s outperform Lap-H-CS in both small and large $t$ regimes.}
    \label{fig:cs}
\end{figure}
The proof can be found in Section~\ref{proof:ldp-hoeffding-cs}.
Unlike Theorem~\ref{theorem:hoeffding-nprr-ci}, we suggest setting
\begin{equation}\label{eq:nprr-h-cs-lambdas}
\small
     \lambda_t:= \sqrt{\frac{8 \log(1/\alpha)}{t \log (t+1)}} \land 1, 
\end{equation}
to ensure that $\bar L_t^\Hoeff \asymp O(\sqrt{\log t / t})$ up to $\log \log t$ factors.
% \footnote{This rate assumes that $r_t = r \in (0, 1]$ for each $t$.} 
\citet[Section 3.3]{waudby2020estimating} give a derivation and discussion of $\lambda_t$ and the $O(\sqrt{\log t / t})$ rate.
Similar to \cref{theorem:hoeffding-nprr-ci}, we recommend setting $r_t$ for the desired $\eps_t$-LDP level via~\eqref{eq:r-eps-relationship} and $G_t := 1$ for all $t$.

The similarity between Theorem~\ref{theorem:ldp-hoeffding-cs} and Theorem~\ref{theorem:hoeffding-nprr-ci} is no coincidence: indeed, Theorem~\ref{theorem:hoeffding-nprr-ci} is a corollary of Theorem~\ref{theorem:ldp-hoeffding-cs} where we instantiated a \cs{} at a fixed sample size $n$ and set $\lambda_1 = \cdots = \lambda_n = \sqrt{8\log (1/\alpha)/n}$. In fact, every Cram\'er-Chernoff bound (even in the non-private regime) has an underlying supermartingale and \cs{} that are rarely exploited \citep{howard2020time}, but setting $\lambda$'s as in \cref{theorem:hoeffding-nprr-ci} tightens these \cs{}s for the fixed time $n$ --- yielding $O(1/\sqrt{n})$ rates but only for a fixed $n$ --- while tuning $\lambda_t$ as in \eqref{eq:nprr-h-cs-lambdas} allows them to spread their efficiency over all $t$ --- yielding $O(\sqrt{\log t / t})$ rates but for all $t$ simultaneously.
% We decided to present \ci{}s before \cs{}s since ``fixed-time'' or ``pointwise'' concentration inequalities are more commonly encountered in the literature than time-uniform ones.
Notice that both the time-uniform and fixed-time bounds in Theorems~\ref{theorem:hoeffding-nprr-ci} and~\ref{theorem:ldp-hoeffding-cs} cover an unchanging real-valued mean $\mu^\star \in \RR$ --- in the following section, we will relax this assumption and allow for the mean of each $X_i$ to change over time in an arbitrary matter, but still derive~\cs{}s for sensible parameters.

\subsection{Confidence Sequences for Time-varying Means}
\label{section:cs-mean-so-far}
All of the bounds derived thus far have been concerned with estimating some common $\mu^\star$ under the nonparametric assumption $\seq Xt1\infty \sim P$ for some $P \in \PcalNPRRstarinf$ and hence $\seq Zt1\infty \sim Q$ for some $Q \in \QcalNPRRstarinf$. Let us now consider the more general (and challenging) task of constructing \cs{}s for \emph{the average mean so far} $\widetilde \mu_t^\star := \frac{1}{t}\sum_{i=1}^t\mu_i^\star$ under the assumption that each $X_t$ has a different mean $\mu_t^\star$. In what follows, we require that \NPRR{} is non-interactive, i.e. $r_t = r \in (0, 1]$ and $G_t = G \in \{1, 2, \dots\}$ for each $t$.

\begin{restatable}[Confidence sequences for time-varying means]{theorem}{TwoSidedCSMeanSoFar}\label{theorem:two-sided-cs-mean-so-far}
  Suppose $X_1, X_2, \dots$ are independent $[0, 1]$-bounded random variables with individual means $\EE X_t = \mu_t^\star$ for each $t$, and let $Z_1, Z_2 \dots$ be their privatized views according to \NPRR{} without sequential interactivity.
  Define
  {\small\begin{align}
           \widehat \mu_t &:= \frac{\sum_{i=1}^t (Z_i - (1-r) / 2)}{t r}, \label{eq:changing-means-sample-mean}\\
           \text{and}~~ \widetilde B_t^\pm &:= \sqrt{\frac{t\beta^2 + 1 }{2(tr\beta)^2}\log \left (\frac{\sqrt{t \beta^2 + 1}}{\alpha} \right)}, \label{eq:two-sided-changing-means-boundary}
         \end{align}}%
       for any $\beta > 0$. Then, $\widetilde C_t^\pm := (\widehat \mu_t \pm \widetilde B_t^\pm)$
       forms a two-sided $(1-\alpha,\eps)$-\lpcs{} for $\widetilde \mu_t^\star$, where $\eps = \log(1 + \frac{2r}{1-r})$.
\end{restatable}

The proof in Section~\ref{proof:two-sided-cs-mean-so-far} uses a sub-Gaussian mixture supermartingale technique similar to \citet{robbins1970statistical} and \citet{howard2020time,howard2021time}.
The parameter $\beta>0$ is a tuning parameter dictating a time for which the \cs{} boundary is optimized. Regardless of how $\beta$ is chosen, $\widetilde C_t^\pm$ has the time-uniform coverage guarantee given in \cref{theorem:two-sided-cs-mean-so-far} but finite-sample performance can be improved near a particular time $t_0$ by selecting
\begin{equation}\label{eq:rho-opt}
  \small
  \beta_{\alpha}(t_0) := \sqrt{\frac{-2\log \alpha + \log(-2\log\alpha + 1)}{t_0}},
\end{equation}
which approximately minimizes $\widetilde B_{t_0}$; see \citet[Section 3.5]{howard2021time} for details.

Notice that in the non-private case where $r=1$, we have that $\widetilde C_t^\pm$ recovers Robbins' sub-Gaussian mixture \cs{} \citep{robbins1970statistical,howard2021time}. Notice that while \cref{theorem:two-sided-cs-mean-so-far} handles a strictly more general and challenging problem than the previous sections (by tracking a time-varying mean $\infseqt{\widetilde \mu_t}$), it has the restriction that \NPRR{} must be non-interactive. There is a technical reason for this that boils down to it being difficult to combine time-varying \emph{tuning parameters} (such as those in \cref{theorem:ldp-hoeffding-cs}) with time-varying \emph{estimands} in the same \cs{}. This challenge has appeared in other (non-private) works on \cs{}s \citep{waudby2020estimating,howard2021time}. In short, this paper has methods for tracking a time-varying mean under non-interactive \NPRR{} or a fixed mean under sequentially interactive \NPRR{}, but not both simultaneously --- this would be an interesting direction to explore in future work.

\begin{figure}[!htbp]
    \centering
    \includegraphics[width=\ciplotwidth]{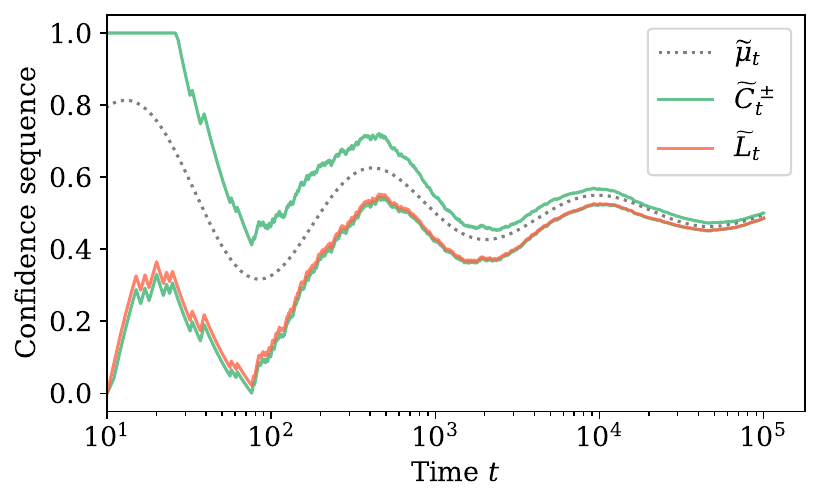}
    \caption{$(90\%, 2)$-\lpcs{}s for the average time-varying mean so far $\widetilde \mu_t^\star$ with the boundary optimized for $t_0 = 100$. In this example, we set $\mu_t^\star = \frac{1}{2}\left [1 - \sin(2\log(e + t)) / \log(e + 0.01t) \right ]$ to produce the displayed sinusoidal behavior. Notice that $\widetilde L_t$ is tighter at the expense of only being one-sided. In either case, however, the \cs{}s adapt to non-stationarity and capture $\widetilde \mu_t^\star$ uniformly over time.}
    \label{fig:wavy-cs}
    \vspace{-0.1in}
\end{figure}
% or \citet[Section C.3]{waudby2021doubly} 
% for details.
% \footnote{The choice \eqref{eq:rho-opt} is a closed-form approximation to the optimal $\beta$, but an exact solution can be efficiently computed using the Lambert $W$ function; see \citet[Section 3.5]{howard2021time} or \citet[Section C.3]{waudby2021doubly}.}

% Notice that $\widetilde C_t^\pm := (\widehat \mu_t \pm \widetilde B_t)$ is a two-sided bound, but this was \emph{not} obtained from taking a union bound on lower- and upper-\cs{}s as suggested in previous sections.
A one-sided analogue of \cref{theorem:two-sided-cs-mean-so-far} is presented in \cref{section:wavy-one-sided} via slightly different techniques.

\section{Illustration: Private Online A/B Testing}
\label{section:a/b-testing}
Our methods can be used to conduct locally private \emph{online A/B tests} (sequential randomized experiments). Broadly, an A/B test is a statistically principled way of comparing two different \emph{treatments} --- e.g. administering drug A versus drug B in a clinical trial. In its simplest form, A/B testing proceeds by (i) randomly assigning subjects to receive treatment A with some probability $\pi \in (0, 1)$ and treatment B with probability $1-\pi$, (ii) collecting some outcome measurement $Y_t$ for each subject $t \in \{1,2, \dots\}$ --- e.g. severity of headache after taking drug A or B --- and (iii) measuring the difference in that outcome between the two groups. An \emph{online} A/B test is one that is conducted sequentially over time --- e.g. a sequential clinical trial where patients are recruited one after the other or in batches.

We now illustrate how to sequentially test for the mean difference in outcomes between groups A and B when only given access to locally private data. To set the stage, suppose that $(A_1, Y_1), (A_2, Y_2), \dots$ are random variables such that $A_t \sim \mathrm{Bernoulli}(\pi)$ is 1 if subject $t$ received treatment A and 0 if they received treatment B, and $Y_t$ is a $[0, 1]$-bounded outcome of interest after being assigned treatment $A_t$.

Using the techniques of Section~\ref{section:cs-mean-so-far}, we will construct $(1-\alpha)$-\cs{}s for the \emph{time-varying mean} $\widetilde \Delta_t := \frac{1}{t} \sum_{i=1}^t \Delta_i$ where $\Delta_i := {\EE(Y_i \mid A_i = 1)} - {\EE(Y_i \mid A_i = 0)}$ is the mean difference in the outcomes at time $i$. In words, $\widetilde \Delta_t$ is the mean difference in outcomes \emph{among the subjects so far}.
% \footnote{This estimand follows the setup of Section~\ref{section:cs-mean-so-far}, but in the familiar case where $\Delta^\star = \Delta_i = \Delta_j$ for all $i,j$, our \cs{} will capture $\Delta^\star$ uniformly over time.}
% The \cs{} we derive will be locally private by passing subjects' data through \NPRR{} (Algorithm~\ref{algorithm:NPRR}).

Unlike Section~\ref{section:cs-mean-so-far}, however, we will not directly privatize $\infseq Yt1$, but instead will apply \NPRR{} to some ``pseudo-outcomes'' $\varphi_t \equiv \varphi_t(Y_t, A_t)$ --- functions of $Y_t$ and $A_t$,
{\small\begin{equation*}
\label{eq:varphi}
    \varphi_t := \frac{f_t + \frac{1}{1-\pi}}{\frac{1}{\pi} + \frac{1}{1-\pi}}, ~~\text{where}~~f_t := \left [ \frac{Y_tA_t}{\pi} - \frac{Y_t(1-A_t)}{1-\pi} \right ].
  \end{equation*}
}%
% It now suffices to derive a \cs{} for $\widetilde \Phi_t := \frac{1}{t} \sum_{i=1}^t \EE \varphi_i$ since we can rescale any \cs{} for $\widetilde \Phi_t$ back to the original range \eqref{eq:f-range} to capture $\widetilde \Delta_t$.

Notice that due to the fact that $Y_t, A_t \in [0, 1]$, we have $f_t \in [-1/(1-\pi), 1/\pi]$, and hence $\varphi_t \in [0, 1]$.    
Now that we have $[0, 1]$-bounded random variables $\infseq \varphi t1$, we can obtain their \NPRR{}-induced $\eps$-LDP views $\infseq \psi t1$ by setting $G_t = 1$ and $r_t = \exp\{\eps - 1\} / \exp\{\eps + 1\}$ for each $t$. Notice that we are privatizing $\varphi_t$ which is a function of both $Y_t$ and $A_t$, so both the outcome \emph{and} the treatment are protected with $\eps$-LDP\@.
% Given all of this preliminary setup, we have the following proposition for locally private online A/B testing.

\begin{corollary}[Locally private online A/B estimation]\label{corollary:a/b-estimation}
    % Let $\infseq Yt1$ be independent random variables that are randomized to treatment arms A and B with probabilities $\pi$ and $1-\pi$.
    % Denote subject $t$'s treatment assignments by $A_t$ with $A_t = 1$ and $A_t = 0$ representing treatment arms A and B.
    Following the setup above, let $\infseq \psi t1$ be the \NPRR{}-induced privatized views of $\infseq \varphi t1$.
    % where $\varphi_t$ is given by \eqref{eq:varphi}.
    Define the estimator
    {\small
    \begin{align}
        \widehat \varphi_t &:= \frac{\sum_{i=1}^t (\psi_i - (1-r)/2)}{tr},
    \end{align}}%
  and set $\widetilde B_t$ as in \eqref{eq:one-sided-changing-means-boundary}. Then,
    \begin{equation}
      \label{eq:a/b-lower-cs}
    \small
        \widetilde L_t^\Delta := -\frac{1}{1-\pi} + \left ( \frac{1}{\pi} + \frac{1}{1-\pi} \right ) \left ( \widehat \varphi_t - \widetilde B_t \right )
    \end{equation} 
    is a lower $(1-\alpha, \eps)$-\lpcs{} for $\widetilde \Delta_t$.
\end{corollary}

The proof is an immediate consequence of the well-known fact about ``inverse-probability-weighted'' estimators that $\EE f_t = \Delta_t$ for every $t$ \citep{horvitz1952generalization,robins1994estimation}, combined with Proposition~\ref{proposition:one-sided-cs-mean-so-far}. Similarly, a two-sided \cs{} can be obtained by replacing $\widehat \varphi_t - \widetilde B_t$ in \eqref{eq:a/b-lower-cs} with $\widehat \varphi_t \pm \widetilde B_t^\pm$, where $\widetilde B_t^\pm$ is given in \eqref{eq:two-sided-changing-means-boundary}.

\begin{figure}[!htbp]
  \centering
  \includegraphics[width=\ciplotwidth]{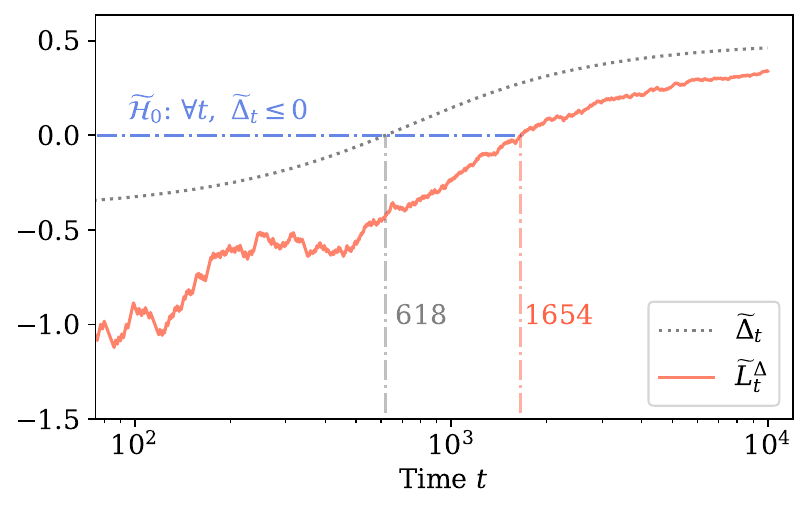}
  \caption{An example of Corollary~\ref{corollary:a/b-estimation} applied to the time-varying mean given by $\Delta_t := 1.8 (\exp\{t / 300\} / (1 + \exp\{t / 300\}) - 1/2) $.
    % Notice that $\widetilde \Delta_t$ is negative up until time 618, and that the lower \cs{} $\widetilde L_t^\Delta$ later crosses 0, at which point we can reject the weak null $\forall t,\ \widetilde \Delta_t \leq 0$.
    In this particular example, we have that $\widetilde \Delta_t := \frac{1}{t}\sum_{i=1}^t \Delta_i$ changes from negative to positive at time 618, and yet our lower \cs{} $\widetilde L_t^\Delta$ later detects this change at time 1654, at which point the weak null $\widetilde \Hcal_0$: $\forall t,\ \widetilde \Delta_t \leq 0$ can be rejected (see \cref{section:weak-null} for details regarding the composite hypothesis $\widetilde \Hcal_0$ and how to test it).}
  \label{fig:a/b-test}
\end{figure}
% \begin{remark}[Causal inference via potential outcomes]
% The pseudo-outcomes $\infseq ft1$ stem from inverse probability weighting \citep{horvitz1952generalization,robins1994estimation} --- a standard technique from causal inference. This section  can be extended to a more general ``locally private causal inference'' setup with observational (rather than experimental) data using the potential outcomes framework \citep{rubin1974estimating} but this deserves a separate treatment in future work.
% \end{remark}

\noindent \textbf{Practical implications.} The implications of Corollary~\ref{corollary:a/b-estimation}
for the practitioner are threefold:
\begin{enumerate}[itemsep=0em]
  \item The \cs{}s can be continuously monitored from the start of the A/B test and for an indefinite amount of time;
    % there is no need to wait to observe a particular sample size $n$;
  \item Inferences made from $\widetilde L_\tau^\Delta$ are valid at any stopping time $\tau$, regardless of why the test is stopped; and
  \item $\widetilde L_t^\Delta$
        adapts to non-stationarity: if the treatment differences $\Delta_t$ drift over time, $\widetilde L_t^\Delta$ still forms an \lpcs{} for $\widetilde \Delta_t$. But if $\Delta_1 = \Delta_2 = \dots = \Delta^\star$ is constant, then $\widetilde L_t^\Delta$  forms an \lpcs{} for $\Delta^\star$.
      \end{enumerate}

      % \begin{remark}[Causal inference via potential outcomes]
% The pseudo-outcomes $\infseq ft1$ stem from inverse probability weighting \citep{horvitz1952generalization,robins1994estimation} --- a standard technique from causal inference. This section  can be extended to a more general ``locally private causal inference'' setup with observational (rather than experimental) data using the potential outcomes framework \citep{rubin1974estimating} but this deserves a separate treatment in future work.
% \end{remark}
% \cref{corollary:a/b-test-pvals} is an immediate consequence of Ville's inequality \citep{ville1939etude} applied to the nonnegative supermartingale formed by~\eqref{eq:EB-conjugate-mixture-supermartingale}.

%%% Local Variables:
%%% mode: latex
%%% TeX-master: "main_arxiv"
%%% End:

\section{Additional Results \& Summary}\label{section:summary}
% We introduced \NPRR{}, a nonparametric extension of Warner's randomized response for arbitrary bounded data, and subsequently derived locally private confidence intervals and sequences for means of bounded random variables privatized via \NPRR{}. However, 
Both \NPRR{} and our proof techniques are general-purpose tools with several other implications for locally private statistical inference, including confidence sets via the Laplace mechanism, variance-adaptive inference, and sequential hypothesis testing. We briefly expand on these implications here, and leave their details to the appendix.

\begin{itemize}[itemsep=0em]
\item \S\ref{section:laplace}: \textbf{Confidence sets via the Laplace mechanism.} We used \NPRR{} as an extension of randomized response for arbitrary bounded data (rather than just binary), but of course the Laplace mechanism also handles bounded data. While \NPRR{} enjoys advantages over Laplace as discussed in \cref{section:nprr}, it may still be of interest to derive confidence sets from data that are privatized via Laplace, given its ubiquity and simplicity. \cref{section:laplace} presents new nonparametric \ci{}s and \cs{}s for population means under the Laplace mechanism.
\item \S\ref{section:variance-adaptive}: \textbf{Variance-adaptive inference.} Notice that the \ci{}s and \cs{}s presented in~\cref{section:nprr-ci} were not variance-adaptive due to the fact that they relied on sub-Gaussianity of bounded random variables. However, this is not necessary, and we present other locally private \emph{variance-adaptive} \ci{}s and \cs{}s in \cref{section:variance-adaptive}. 
% \item \S\ref{section:variance-adaptive-ci}: \textbf{Confidence sets via randomized response.} 
\item \S\ref{section:testing}: \textbf{Sequential hypothesis testing.} While the statistical procedures of this paper have taken the form of \ci{}s and \cs{}s rather than hypothesis tests, there is a deep relationship between the two, and our results have analogues that could have been presented in the language of the latter. \cref{section:testing} articulates this relationship and presents explicit (sequential) tests.
\item \S\ref{section:adaptive-a/b-testing}: \textbf{Adaptive online A/B testing.} \cref{corollary:a/b-estimation} assumes a common propensity score $\pi$ among all subjects for simplicity of exposition, but it is also possible to derive \cs{}s for $\widetilde \Delta_t$ under an adaptive framework where propensity scores $\infseqt{\pi_t(X_t)}$ can change over time in a data-dependent fashion, and be functions of some measured covariates $\infseqt{X_t}$. The details of this more complex setup are left to \cref{section:adaptive-a/b-testing}.
% \item \S\ref{section:a/b-testing}: Private online A/B testing
\end{itemize}
Another followup problem that we do not explicitly address here but that can be solved using our techniques is locally private \emph{variance} estimation. Notice that the variance $\Var(X) := \EE(X^2) - (\EE(X))^2$ is a function of two expectations, $\EE(X^2)$ and $\EE(X)$. Since $X^2$ is also $[0, 1]$-bounded if $X$ is, we can use all of the techniques in this paper to derive two separate $(1-\alpha/2, \eps/2)$-\lpci{}s (or \lpcs{}s) to derive a $(1-\alpha, \eps)$-\lpci{} for $\Var(X)$. Of course this requires collecting privatized views of both $X^2$ and $X$ separately. As a further generalization, a similar argument can be made for the construction of \lpci{}s for the covariance of $X$ and $Y$ since $\Cov(X,Y) = \EE(XY) - \EE(X)\EE(Y)$ (though here we would need to construct $(1-\alpha/3, \eps/3)$-\lpci{}s, etc.).

A limitation of the present paper is that we have only discussed confidence sets for univariate parameters. Indeed, it is not immediately clear to us what is the right way to generalize \NPRR{} to the multivariate case, or how to derive \lpci{}s and \lpcs{}s for means of random \emph{vectors} given such a generalization. This is an open direction for future work.

With the growing interest in protecting user privacy, an increasingly important addition to the statistician's toolbox are methods that can extract population information from privatized data. In this paper, we derived nonparametric confidence intervals and time-uniform confidence sequences for population means from locally private data. We studied and used \NPRR{} a nonparametric and sequentially interactive extension of Warner's randomized response for bounded data. The privatized output from \NPRR{} can then be harnessed to produce confidence sets for the mean of the raw data distribution. Importantly, our confidence sets are sharp, some attaining optimal theoretical convergence rates and others simply having excellent empirical performance, not only making private nonparametric (sequential) inference possible, but practical. In future work, we aim to apply these general-purpose tools to changepoint detection, two-sample testing, and (conditional) independence testing.

\paragraph{Acknowledgments.} We thank Vitaly Feldman for helpful conversations. ZSW was supported in part by NSF CNS2120667. AR acknowledges NSF DMS2310718.

%%% Local Variables:
%%% mode: latex
%%% TeX-master: "main_arxiv"
%%% End:

\bibliographystyle{plainnat}
\bibliography{references}

\appendix
\section{Proofs of main results}

\subsection{Prelude: filtrations, supermartingales, and Ville's inequality}
\label{section:background-filtration-supermartingales-ville}

By far the most common way to derive a \cs{}
is by constructing a nonnegative supermartingales and then applying Ville's maximal inequality to it. Indeed, all of the proofs for our \cs{} and \ci{} results employ this technique. However, in order to discuss supermartingales we must first review \emph{filtrations}. A filtration $\Fcal \equiv \infseq \Fcal t0$ is a nondecreasing sequence of sigma-algebras $\Fcal_0 \subseteq \Fcal_1 \subseteq \cdots$, and a stochastic process $\infseq Mt0$ is said to be \emph{adapted} to $\Fcal$ if $M_t$ is $\Fcal_t$-measurable for all $t \in \NN$. On the other hand, $\infseq Mt1$ is said to be $\Fcal$-predictable if each $M_t$ is $\Fcal_{t-1}$-measurable --- informally ``$M_t$ depends on the past''.

For example, the canonical filtration $\Xcal$ generated by a sequence of random variables $\infseq Xt1$ is given by the sigma-algebra generated by $X_1^t$, i.e. $\Xcal_t := \sigma(X_1^t)$ for each $t \in \{1,2,\dots\}$, and $\Xcal_0$ is the trivial sigma-algebra. A function $M_t \equiv M(X_1^t)$ depending only on $X_1^t$ forms a $\Xcal$-adapted process, while $(M_{t-1})_{t=1}^\infty$ is $\Xcal$-predictable. Likewise, if we obtain a privatized view $\infseq Zt1$ of $\infseq Xt1$ using some locally private mechanism, a different filtration $\Zcal$ emerges, given by $\Zcal_t := \sigma(Z_1^t)$. Throughout our proofs, $\Zcal$-adapted and $\Zcal$-predictable processes will be central mathematical objects.

A process $\infseq Mt0$ adapted to $\Fcal$ is a \emph{supermartingale} if
\begin{equation}
\small
  \label{eq:martingale}
  \EE (M_t \mid \Fcal_{t-1}) \leq M_{t-1} ~~\text{for each $t \geq 1$.}
\end{equation}
If the above inequality is replaced by an equality, then $\infseq Mt0$ is a \emph{martingale}.
The methods in this paper will involve derivations of (super)martingales which are nonnegative and begin at one --- often referred to as ``test (super)martingales'' \citep{shafer2011test} or simply ``nonnegative (super)martingales'' (NMs or NSMs for martingales and supermartingales, respectively) \citep{robbins1970statistical,howard2020time}. NSMs $\infseq Mt0$ satisfy the following powerful concentration inequality due to \citet{ville1939etude}:
\begin{equation}
\small
  \label{eq:ville}
  \PP(\exists t \in \NN : M_t \geq 1/\alpha) \leq \alpha.
\end{equation}
In other words, they are unlikely to ever grow too large.

In the \cs{} proofs that follow, we will focus on deriving processes $(M_t(\mu))_{t=1}^\infty$ for any $\mu \in [0, 1]$ such that when $\mu$ is equal to the true mean of interest $\mu^\star$, we have that $M_t(\mu^\star)$ forms a NSM. In this case, it turns out that the set of $\mu$ such that $M_t(\mu)$ is less than $1/\alpha$ --- i.e. $C_t := \{ \mu \in [0, 1] : M_t(\mu) < 1/\alpha \}$ --- forms a $(1-\alpha)$-\cs{} for $\mu^\star$. This is easy to see since $\mu^\star \notin C_t$ if and only if $M_t(\mu^\star) \geq 1/\alpha$, and thus
\begin{equation}
  \PP(\exists t \in \NN : \mu^\star \notin C_t) = \PP(\exists t \in \NN : M_t(\mu^\star) \geq 1/\alpha) \leq \alpha,
\end{equation}
where the last inequality is precisely \eqref{eq:ville}. The \cs{} proofs that follow will make the exact processes $(M_t(\mu))_{t=1}^\infty$ explicit.

\subsection{Proof of Theorem~\ref{theorem:NPRR-DP}}\label{proof:NPRR-DP}
\NPRRLDP*
\begin{proof}
We will prove the result for fixed $r \in (0, 1), G \geq 1$ but it is straightforward to generalize the proof for $r_t$ depending on $Z_1^{t-1}$. It suffices to verify that the likelihood ratio $L(x, \widetilde x)$ is bounded above by $\exp(\eps)$ for any $x, \widetilde x \in [0,1]$. Writing out the likelihood ratio $L(x, \widetilde x)$, we have
\[ L(x, \widetilde x) := \frac{ \frac{1-r}{G+1} + rG \cdot \left \{ \1(Z = x^\mathrm{ceil}) (x - x^\mathrm{floor}) + \1(Z = x^\mathrm{floor}) \left [1/G - (x-x^\mathrm{floor}) \right ] \right \} }{ \frac{1-r}{G+1} + rG \cdot \left \{ \1(Z = \widetilde x^\mathrm{ceil}) ( \widetilde x - \widetilde x^\mathrm{floor}) + \1(Z = \widetilde x^\mathrm{floor}) \left [1/G - (\widetilde x-\widetilde x^\mathrm{floor}) \right ] \right \}}, \]
which is dominated by the counting measure.
Notice that the numerator of $L$ is maximized when $x$ already lies in the discretized range, i.e. $Z = x = x^\mathrm{ceil} = x^\mathrm{floor}$ so that the numerator becomes $\frac{1-r}{G+1} + r$, while the denominator is minimized when $Z \neq \widetilde x^\mathrm{ceil}$ and $Z \neq \widetilde x^\mathrm{floor}$ so that the denominator becomes $\frac{1-r}{G+1}$. Therefore, we have that with probability one,
\begin{align*}
    L(x, \widetilde x)\leq \frac{\frac{1-r}{G+1} + r}{\frac{1-r}{G+1}} = 1 + \frac{(G+1)r}{1-r},
\end{align*}
and thus \NPRR{} is $\eps$-locally DP with $\eps := \log (1 + (G+1)r / (1-r))$.
\end{proof}

\subsection{Proof of Theorem~\ref{theorem:hoeffding-nprr-ci}}\label{proof:hoeffding-nprr-ci}
\HoeffdingNPRRCI*
\begin{proof}
The proof proceeds in two steps. First we note that $\bar L_{t}^\Hoeff$ forms a $(1-\alpha)$-lower confidence \emph{sequence}, and then instantiate this fact at the sample size $n$.

\paragraph{Step 1. $\bar L_{t}^\Hoeff$ forms a $(1-\alpha)$-lower \cs{}.}
This is exactly the statement of Theorem~\ref{theorem:ldp-hoeffding-cs}.

\paragraph{Step 2. $\dot L_n^\Hoeff$ is a lower-\ci{}.}
By Step 1, we have that $\bar L_{t}^\Hoeff$ forms a $(1-\alpha)$-lower \cs{}, meaning
\[ \PP(\forall t \in \{ 1, \dots, n\}, \ \mu^\star \geq \bar L_{t}^\Hoeff) \geq 1-\alpha. \]
Therefore,
\[ \PP\left (\mu^\star \geq \max_{1\leq t \leq n} \bar L_{t}^\Hoeff \right ) = \PP (\mu^\star \geq \dot L_n^\Hoeff) \geq 1-\alpha,\]
which completes the proof.
\end{proof}

\subsection{Proof of Theorem~\ref{theorem:ldp-hoeffding-cs}}\label{proof:ldp-hoeffding-cs}
\LDPHoeffdingCS*
\begin{proof}
  The proof proceeds in two steps. First, we construct an NSM adapted to the private filtration $\Zcal \equiv \infseq \Zcal t0$. Second and finally, we apply Ville's inequality to obtain a high-probability upper bound on the NSM, and show that this inequality results in the \cs{} given in Theorem~\ref{theorem:ldp-hoeffding-cs}.

  \paragraph{Step 1.} Consider the nonnegative process starting at one given by
  \begin{equation}
    M_t(\mu^\star) := \prod_{i=1}^t \exp \left \{ \lambda_i ( Z_i - \zeta_i(\mu^\star) ) - \lambda_i^2 / 8 \right \},
    \label{eq:nprr-hoeffding-nsm}
\end{equation}
where $\infseq \lambda t1$ is a real-valued sequence\footnote{The proof also works if $\infseq \lambda t1$ is $\Zcal$-predictable but we omit this detail since we typically recommend using real-valued sequences anyway.} and $\zeta_t(\mu^\star) := r_t \mu^\star + (1-r_t)/2$ as usual. We claim that $(M_t(\mu^\star))_{t=0}^\infty$ is a supermartingale, meaning $\EE(M_t(\mu^\star) \mid \Zcal_{t-1}) \leq M_{t-1}(\mu^\star)$. Writing out the conditional expectation of $M_t(\mu^\star)$, we have
  \begin{align*}
    &\EE \left ( M_t(\mu^\star) \mid \Zcal_{t-1} \right  )\\
    =\ & \EE \left ( \prod_{i=1}^t \exp \left \{ \lambda_i ( Z_i - \zeta_i(\mu^\star) ) - \lambda_i^2 / 8 \right \} \biggm \vert \Zcal_{t-1} \right ) \\
    =\ & \underbrace{\prod_{i=1}^{t-1} \exp \left \{ \lambda_i ( Z_i - \zeta_i(\mu^\star) ) - \lambda_i^2 / 8 \right \}}_{M_{t-1}(\mu^\star)} \cdot \underbrace{\EE \left ( \exp \left \{ \lambda_t ( Z_t - \zeta_t(\mu^\star) ) - \lambda_t^2 / 8 \right \} \biggm \vert \Zcal_{t-1} \right )}_{(\dagger)},
  \end{align*}
  since $M_{t-1}(\mu^\star)$ is $\Zcal_{t-1}$-measurable, and thus it can be written outside of the conditional expectation. It now suffices to show that $(\dagger) \leq 1$. To this end, note that $Z_t$ is a $[0, 1]$-bounded random variable with conditional mean $\EE(Z_t \mid \Zcal_{t-1}) = \zeta_t(\mu^\star)$ by design of \NPRR{} (Algorithm~\ref{algorithm:NPRR}). Since bounded random variables are sub-Gaussian \citep{hoeffding1963probability}, we have that
  \[ \EE (\lambda_t (Z_t - \zeta_t(\mu^\star)) \mid \Zcal_{t-1} ) \leq \exp \left \{ \lambda_t^2 / 8 \right \}, \]
  and hence $(\dagger) \leq 1$. Therefore, $(M_t(\mu^\star))_{t=0}^\infty$ is a $\QcalNPRRstarinf$-NSM.

  \paragraph{Step 2.} By Ville's inequality for NSMs \citep{ville1939etude}, we have that
  \[ \PP (\exists t : M_t(\mu^\star) \geq 1/\alpha) \leq \alpha. \]
  In other words, we have that $M_t(\mu^\star) < 1/\alpha$ for all $t \in \NN$ with probability at least $1-\alpha$. Using some algebra to rewrite the inequality $M_t(\mu^\star) < 1/\alpha$, we have
  \begin{align*}
    M_t(\mu^\star) < 1/\alpha &\iff \prod_{i=1}^t \exp \left \{ \lambda_i ( Z_i - \zeta_i(\mu^\star) ) - \lambda_i^2 / 8 \right \} < \frac{1}{\alpha} \\
                              &\iff \sum_{i=1}^t \left [ \lambda_i(Z_i - \zeta_i(\mu^\star)) - \lambda_i^2 / 8 \right ] < \log(1/\alpha) \\
                              &\iff \sum_{i=1}^t \lambda_i Z_i - \mu^\star \sum_{i=1}^t\lambda_ir_i - \sum_{i=1}^t \lambda_i \cdot (1-r_i)/2 - \sum_{i=1}^t \lambda_i^2/8 < \log(1/\alpha) \\
                              &\iff \mu^\star > \underbrace{\frac{\sum_{i=1}^t \lambda_i \cdot (Z_i - (1-r_i)/2)}{\sum_{i=1}^t r_i\lambda_i}}_{\widehat \mu_t(\lambda_1^t)} - \underbrace{\frac{\log(1/\alpha) + \sum_{i=1}^t \lambda_i^2/8}{\sum_{i=1}^tr_i\lambda_i}}_{\bar B_t(\lambda_1^t)}
  \end{align*}
  Therefore, $\bar L_t := \widehat \mu_t(\lambda_1^t) - \bar B_t(\lambda_1^t)$ forms a lower $(1-\alpha)$-\cs{} for $\mu^\star$. The upper \cs{} $\bar U_t := \widehat \mu_t(\lambda_1^t) + \bar B_t(\lambda_1^t)$ can be derived by applying the above proof to $\infseq {-Z}t1$ and their conditional means $(-\zeta_i(\mu^\star))_{t=1}^\infty$. This completes the proof

\end{proof}

\subsection{Proof of Theorem~\ref{theorem:two-sided-cs-mean-so-far}}\label{proof:two-sided-cs-mean-so-far}
\TwoSidedCSMeanSoFar*
\begin{proof}
The proof proceeds in three steps. First, we derive a sub-Gaussian NSM indexed by a parameter $\lambda \in \RR$. Second, we mix this NSM over $\lambda$ using the density of a Gaussian distribution, and justify why the resulting process is also an NSM. Third and finally, we apply Ville's inequality and invert the NSM to obtain $\infseq{\widetilde C^\pm}{t}{1}$.
\paragraph{Step 1: Constructing the $\lambda$-indexed NSM.} Let $\infseq Xt1$ be independent $[0, 1]$-bounded random variables with individual means given by $\EE X_t = \mu_t^\star$, and let $\infseq Zt1$ be the \NPRR{}-induced private views of $\infseq Xt1$. Define $\zeta(\mu) := r\mu + (1-r)/2$ for any $\mu \in [0, 1]$, and $r \in (0, 1]$.
Let $\lambda \in \RR$ and consider the process,
\begin{equation}
    \label{eq:pre-mix-hoeffding-nsm}
    M_t(\lambda) := \prod_{i=1}^t \exp \left \{ \lambda(Z_i - \zeta(\mu_i^\star)) - \lambda^2 / 8 \right \},
\end{equation}
with $M_0(\lambda) \equiv 0$.
We claim that \eqref{eq:pre-mix-hoeffding-nsm} forms an NSM with respect to the private filtration $\Zcal$. The proof technique is nearly identical to that of Theorem~\ref{theorem:ldp-hoeffding-cs} but with changing means and $\lambda = \lambda_1 = \lambda_2 = \cdots \in \RR$. Indeed, $M_t(\lambda)$ is nonnegative with initial value one by construction, so it remains to show that $(M_t(\lambda))_{t=0}^\infty$ is a supermartingale. That is, we need to show that for every $t$, we have $\EE(M_t(\lambda) \mid \Zcal_{t-1}) \leq M_{t-1}(\lambda)$. Writing out the conditional expectation of $M_t(\lambda)$, we have
\begin{align*}
    \EE(M_t(\lambda) \mid \Zcal_{t-1}) &= \EE\left ( \prod_{i=1}^t \exp \left \{ \lambda(Z_i - \zeta(\mu_i^\star)) - \lambda^2 / 8 \right \} \Bigm \vert Z_1^{t-1} \right) \\
    &= \underbrace{\prod_{i=1}^{t-1} \exp \left \{ \lambda(Z_i - \zeta(\mu_i^\star)) - \lambda^2 / 8 \right \}}_{M_{t-1}(\lambda)} \cdot \EE \left (\exp \left \{ \lambda(Z_t - \zeta(\mu_t^\star)) - \lambda^2 / 8 \right \} \mid Z_1^{t-1} \right ) \\
    &= M_{t-1}(\lambda) \cdot \underbrace{\EE \left (\exp \left \{ \lambda(Z_t - \zeta(\mu_t^\star)) - \lambda^2 / 8 \right \} \right )}_{(\dagger)},
\end{align*}
where the last inequality follows by independence of $\infseq{Z}{t}{1}$, and hence the conditional expectation becomes a marginal expectation.
Therefore, it now suffices to show that $(\dagger) \leq 1$. Indeed, $Z_t$ is a $[0, 1]$-bounded, mean-$\zeta(\mu_t^\star)$ random variable. By Hoeffding's sub-Gaussian inequality for bounded random variables \citep{hoeffding1963probability}, we have that ${\EE[\exp \{ \lambda (Z_t - \zeta(\mu_t^\star)) \}]\leq \exp \{ \lambda^2 / 8 \}}$, and thus
\[ (\dagger) = \EE \left [\exp \left \{ \lambda(Z_t - \zeta(\mu_t^\star)) \right \} \right ]\cdot \exp \left \{- \lambda^2 / 8 \right \} \leq 1.
\]
It follows that $(M_t(\lambda))_{t=0}^\infty$ is an NSM.

\paragraph{Step 2.} Let us now construct a sub-Gaussian mixture NSM. Note that the mixture of an NSM with respect to a probability distribution is itself an NSM \citep{robbins1970statistical,howard2020time} --- a straightforward consequence of Fubini's theorem. Concretely, let $f_{\rho^2}(\lambda)$ be the probability density function of a mean-zero Gaussian random variable with variance $\rho^2$,
\[ f_{\rho^2} (\lambda) := \frac{1}{\sqrt{2\pi \rho^2}} \exp \left \{ \frac{-\lambda^2}{2\rho^2} \right \}.\]
Then, since mixtures of NSMs are themselves NSMs, the process $\infseq Mt0$ given by
\begin{equation}
\label{eq:mixture-implicit}
    M_t := \int_{\lambda \in \RR} M_t(\lambda) f_{\rho^2}(\lambda) d\lambda
\end{equation}
is an NSM. We will now find a closed-form expression for $M_t$. To ease notation, define the partial sum ${S^\star_t := \sum_{i=1}^t (Z_i - \zeta(\mu_i^\star))}$. Writing out the definition of $M_t$, we have

\begin{align*}
    M_t &:= \int_{\lambda \in \mathbb R} \prod_{i=1}^t \exp\left \{ \lambda (Z_i - \zeta(\mu_i^\star)) - \lambda^2/8 \right \}f_{\rho^2}(\lambda) d\lambda \\
    &= \int_{\lambda} \exp\left \{ \lambda \underbrace{\sum_{i=1}^t (Z_i - \zeta(\mu_i^\star))}_{S^\star_t} - t\lambda^2/8 \right \}f_{\rho^2}(\lambda) d\lambda \\
    &= \int_{\lambda} \exp\left \{ \lambda S^\star_t - t\lambda^2/8 \right \} \frac{1}{\sqrt{2\pi \rho^2}} \exp \left \{ \frac{-\lambda^2}{2\rho^2} \right \}d\lambda \\
    &= \frac{1}{\sqrt{2\pi \rho^2}}\int_{\lambda} \exp\left \{ \lambda S^\star_t - t\lambda^2/8 \right \} \exp \left \{ \frac{-\lambda^2}{2\rho^2} \right \}d\lambda \\
    &= \frac{1}{\sqrt{2\pi \rho^2}} \int_\lambda \exp \left \{ \lambda S^\star_t - \frac{\lambda^2 (t\rho^2/4 + 1)}{2 \rho^2}\right \} d\lambda \\
    &= \frac{1}{\sqrt{2\pi \rho^2}} \int_\lambda \exp \left \{ \frac{-\lambda^2 (t\rho^2/4 + 1) + 2\lambda \rho^2 S^\star_t }{2\rho^2} \right \} d\lambda \\
    &= \frac{1}{\sqrt{2\pi \rho^2}} \int_\lambda \exp \left \{ \frac{-a(\lambda^2 - \frac{b}{a} 2\lambda) }{2\rho^2} \right \} d\lambda,
\end{align*}
where we have set $a:= t\rho^2/4 + 1$ and $b := \rho^2 S^\star_t$. Completing the square in the exponent, we have that
\begin{align*}
    \exp \left \{ \frac{-\lambda^2 - 2\lambda \frac{b}{a} + \left ( \frac{b}{a} \right )^2 - \left ( \frac{b}{a} \right )^2 }{2 \rho^2 /a} \right \} &= \exp \left \{ \frac{-(\lambda - b/a)^2}{2\rho^2/a} + \frac{a \left ( b/a \right )^2}{2\rho^2} \right \} \\
    &= \underbrace{\exp \left \{ \frac{-(\lambda - b/a)^2}{2\rho^2/a} \right \}}_{(\star)} \exp \left \{  \frac{b^2}{2a\rho^2} \right \}.
\end{align*}
Now notice that $(\star)$ is proportional to the density of a Gaussian random variable with mean $b/a$ and variance $\rho^2/a$.
Plugging the above back into the integral and multiplying the entire quantity by $a^{-1/2}/a^{-1/2}$, we obtain the closed-form expression of the mixture NSM,
\begin{align}
    M_t &:= \underbrace{\frac{1}{\sqrt{2\pi \rho^2/a}} \int_{\lambda \in \mathbb R} \exp\left \{ \frac{-(\lambda - b/a)^2}{2 \rho^2/a} \right \} d\lambda}_{=1} \frac{\exp \left \{ \frac{b^2}{2a\rho^2} \right \} }{\sqrt{a}} \nonumber \\
    &=\frac{1}{\sqrt{t\rho^2/4 + 1}}\exp \left \{ \frac{\rho^2 (S_t^\star)^2}{2(t\rho^2/4 + 1)} \right \}.
    \label{eq:two-sided-mixture-NSM}
\end{align}

\paragraph{Step 3.} Now that we have computed the mixture NSM $\infseq Mt0$, we are ready to apply Ville's inequality and invert the process. Since $\infseq Mt0$ is an NSM, we have by Ville's inequality \citep{ville1939etude},
\[ \PP(\exists t : M_t \geq 1/\alpha) \leq \alpha ~~~\text{or equivalently,}~~~ \PP(\forall t, \ M_t < 1/\alpha) \geq 1-\alpha. \]
Therefore, with probability at least $(1-\alpha)$, we have that for all $t \in \{1, 2, \dots\}$,
\begin{align*}
    M_t < 1/\alpha &\iff \frac{1}{\sqrt{t\rho^2/4 + 1}}\exp \left \{ \frac{\rho^2 (S_t^\star)^2}{2(t\rho^2/4 + 1)} \right \} < 1/\alpha \\
    &\iff \frac{\rho^2 (S_t^\star)^2}{2(t\rho^2/4 + 1)}- \log\left ( \sqrt{t\rho^2/4 + 1}\right)  < \log(1/\alpha ) \\
    &\iff \frac{\rho^2 (S_t^\star)^2}{2(t\rho^2/4 + 1)}  < \log\left ( \frac{\sqrt{t\rho^2/4 + 1} }{\alpha} \right) \\
    &\iff (S_t^\star)^2 < \frac{2(t\rho^2/4 + 1) }{\rho^2}\log\left ( \frac{\sqrt{t\rho^2/4 + 1} }{\alpha} \right) \\
    &\iff \frac{(S_t^\star)^2}{t^2r^2} < \underbrace{\frac{2(t(\rho/2)^2 + 1) }{(tr\rho)^2}\log\left ( \frac{\sqrt{t(\rho/2)^2 + 1} }{\alpha} \right)}_{(\star \star)}.
\end{align*}
Set $\beta := \rho/2$ and notice that $(\star \star) = (\widetilde B_t^\pm)^2$ where $\widetilde B_t^\pm$ is the boundary given by~\eqref{eq:two-sided-changing-means-boundary} in the statement of Theorem~\ref{theorem:two-sided-cs-mean-so-far}. Also recall from Theorem~\ref{theorem:two-sided-cs-mean-so-far} the private estimator $\widehat \mu_t := \frac{1}{tr}\sum_{i=1}^t [Z_i -(1-r)/2]$ and the quantity we wish to capture --- the moving average of \emph{population means} $\widetilde \mu_t^\star := \frac{1}{t}\sum_{i=1}^t \mu_i^\star$, where $\mu_i^\star = \EE X_i$. Putting these together with the above high-probability bound, we have that with probability $\geq (1-\alpha)$, for all $t$,
\begin{align*}
    M_t < 1/\alpha &\iff \frac{(S_t^\star)^2}{t^2r^2} < (\widetilde B_t^\pm)^2\\
    &\iff -\widetilde B_t^\pm < \frac{S^\star_t}{tr} < \widetilde B_t^\pm. \\
    &\iff -\widetilde B_t^\pm < \frac{\sum_{i=1}^t [Z_i - \zeta(\mu_i^\star)]}{tr} < \widetilde B_t^\pm. \\
    &\iff -\widetilde B_t^\pm < \frac{\sum_{i=1}^t [Z_i - (r\mu_i^\star + (1-r)/2)]}{tr} < \widetilde B_t^\pm. \\
    &\iff -\widetilde B_t^\pm < \frac{\sum_{i=1}^t [Z_i - (1-r)/2]}{tr} - \frac{\bcancel{r}\sum_{i=1}^t \mu_i^\star}{t \bcancel{r}} < \widetilde B_t^\pm. \\
    &\iff - \frac{\sum_{i=1}^t [Z_i - (1-r)/2]}{tr} - \widetilde B_t^\pm  <  - \frac{\sum_{i=1}^t \mu_i^\star}{t} < - \frac{\sum_{i=1}^t [Z_i - (1-r)/2]}{tr} + \widetilde B_t^\pm. \\
    &\iff - \widehat \mu_t - \widetilde B_t^\pm  <  - \widetilde \mu_t^\star < - \widehat \mu_t + \widetilde B_t^\pm. \\
    &\iff \widehat \mu_t - \widetilde B_t^\pm <  \widetilde \mu_t^\star < \widehat \mu_t + \widetilde B_t^\pm . \\
\end{align*}
In summary, we have that $\widetilde C^\pm_t := (\widehat \mu_t \pm \widetilde B_t^\pm)$ forms a $(1-\alpha)$-\cs{} for the time-varying parameter $\widetilde \mu_t^\star$, meaning
\[ \PP\left (\forall t,\ \widetilde \mu_t^\star \in \widetilde C^\pm_t \right ) \geq 1-\alpha. \]
This completes the proof.

\end{proof}

%%% Local Variables:
%%% mode: latex
%%% TeX-master: "main_arxiv"
%%% End:

\section{Additional results}\label{section:additional-results}

\subsection{Confidence sets under randomized response}\label{section:confidence-sets-rr}
Since \NPRR{} is a strict generalization for bounded random variables, it can be used to construct confidence sets for the mean of Bernoulli random variables which are privatized via randomized response (RR). The following corollary provides a Hoeffding-type \ci{} for the mean under RR\@.
% If the goal of inference is to derive \ci{}s for a fixed sample size $n$, then we do not require the time-uniform coverage guarantee given in Proposition~\ref{proposition:binomialHoeffding}, and presumably we want a bound which scales at the optimal fixed-time rate of $O(1/\sqrt{n})$. The following proposition gives such a result.
\begin{restatable}[Locally private Hoeffding inequality under RR]{corollary}{hoeffdingBinomialCi}
  \label{corollary:hoeffding-binomial-ci}
  Let $X_1, \dots, X_n \sim \mathrm{Bernoulli}(p^\star)$, and let $Z_1,\dots, Z_n$ be their privatized views according to RR for some fixed $r \in (0, 1]$. Then,
  \begin{equation}
  \small
      \dot L_n^\Hoeff := \frac{\sum_{i=1}^n (Z_i - (1-r)/2)}{n r} - \sqrt{\frac{\log (1/\alpha)}{2 n r^2}}
  \end{equation}
  is a $(1-\alpha, \eps)$-lower \lpci{} for $p^\star$, where $\eps = \log(1 + 2r/(1-r))$.
\end{restatable}
\cref{corollary:hoeffding-binomial-ci} is a special case of Theorem~\ref{theorem:hoeffding-nprr-ci}. Notice that in the non-private setting when $r = 1$, \cref{corollary:hoeffding-binomial-ci} recovers Hoeffding's inequality exactly \citep{hoeffding1963probability}.
% The lower \cs{} $\infseq {\bar L}t1$ has a few moving parts (e.g. predictable $r_i$'s and $\lambda_i$'s) so let us consider the  special case where $r_1 = r_2 = \cdots$ and $\lambda_1 = \lambda_2 = \cdots$ for the sake of exposition. In this case, $\infseq{\bar L}t1$ becomes
% \begin{equation}
%   \bar L_t := \frac{1}{tr}\sum_{i=1}^t \left ( Z_i - (1-r)/2 \right ) - \frac{\log (1/\alpha) + t\lambda^2/8}{tr\lambda}.
% \end{equation}
% The above bound is tightest when we choose $\lambda := \sqrt{8 \log (1/\alpha)/ t}$, which is precisely the optimal value for the non-private Hoeffding \ci{} for a fixed sample size \citep{hoeffding1963probability}. Plugging $\lambda = \sqrt{8 \log (1/\alpha) / n}$ into the above expression, we have the following lower \ci{} for the fixed sample size $n$.

\subsection{Confidence sets for sample means}\label{section:confidence-sets-sample-means}
While we primarily focused on deriving \ci{}s and \cs{}s for population means, our techniques can also be applied to the construction of \ci{}s and \cs{}s for the \emph{sample mean}. Indeed, in the non-interactive case, the proof of \cref{theorem:hoeffding-nprr-ci} can be modified so that the bound~\eqref{eq:hoeffding-ci-simple-case} is a lower $(1-\alpha)$-\ci{} for the sample mean $\mu^\star := \frac{1}{n}\sum_{i=1}^n x_i$, recovering essentially the same result as \citet[Theorem 1]{ding2017collecting}.\footnote{Technically, a one-sided \ci{} is more general than \citet{ding2017collecting}'s since theirs is a two-sided \ci{} that we recover after taking a union bound over lower and upper \ci{}s, but the lower \ci{} is also implicit in their proof.} However, implicit in our results are also time-uniform \cs{}s for the \emph{running sample mean so far}. Concretely, we have the following corollary.
\begin{corollary}[A confidence sequence for the running sample mean]\label{corollary:cs-sample-mean}
  Let $\infseqt{x_t}$ be a sequence of $[0, 1]$-bounded numbers and let $\infseqt{Z_t}$ be their privatized views according to \NPRR{} without sequential interactivity. Then, the same bound as given in \cref{theorem:two-sided-cs-mean-so-far}, i.e.
  \begin{equation}
    \widetilde C_t := \left ( \frac{\sum_{i=1}^t (Z_i - (1-r) / 2)}{t r} \pm \sqrt{\frac{t\beta^2 + 1 }{2(tr\beta)^2}\log \left (\frac{\sqrt{t \beta^2 + 1}}{\alpha} \right)} \right )
  \end{equation}
  forms a $(1-\alpha, \eps)$-\lpcs{} for the running sample mean $\widetilde \mu_t^\star := \frac{1}{t}\sum_{i=1}^t x_i$, i.e.
  \begin{equation}
    \PP \left ( \forall t,\ \widetilde \mu_t^\star \in \widetilde C_t \right ) \geq 1-\alpha.
  \end{equation}
\end{corollary}
The above corollary is an immediate consequence of \cref{theorem:two-sided-cs-mean-so-far} instantiated for random variables $\infseqt{X_t}$ with degenerate distributions. (and hence $\EE X_t = X_t = \mu_t^\star$).

\cref{corollary:cs-sample-mean} also sheds some light on how the two estimands (population vs sample means) are related but fundamentally different. Both the (a) stochastic setting with data $X_1, X_2, \dots$ that have a constant mean $\EE X_1 = \mu^\star \in [0, 1]$ and (b) nonstochastic setting with deterministic data $x_1, x_2, \dots$ are special cases of the stochastic setting with data that have time-varying means $\EE X_t = \mu_t$ for $t \geq 1$. Setting (a) is recovered by assuming that $\mu_1 = \mu_2 = \cdots = \mu^\star$, while setting (b) is recovered by assuming $\infseqt{X_t}$ have degenerate distributions (or by conditioning on them). Clearly, neither is a special case of the other, and hence we cannot expect CIs/CSs for one to work for the other in general (though in this case, $\infseqt{\widetilde C_t}$ works for both).

% Sometimes, however, exploiting the structure of the estimand can lead to tighter confidence sets (especially in the non-private setting). For example, \citet{hoeffding1963probability} showed that his concentration inequalities (for the stochastic setting) \emph{also} cover the sample mean when sampling without replacement, but \citet{waudby2020confidence} were able to strictly improve on these by exploiting sampling without replacement, and \citet[Section 5]{waudby2020estimating} improves on those still. Importantly, the bounds from the previous two works \emph{do not} work when estimating a population mean (i.e.~sampling \emph{with} replacement). By the same token, the with-replacement bounds in \citet{waudby2020estimating} outperform many prior \ci{}s and \cs{}s in the literature but are (provably) invalid when sampling without replacement.

\subsection{Why one should set \texorpdfstring{$G = 1$}{G to one} for Hoeffding-type methods}
\label{section:hoeffding-G=1}
In \cref{section:nprr-ci}, we recommended setting $G$ to the smallest possible value of $1$ because Hoeffding-type bounds cannot benefit from larger values. We will now justify mathematically where this recommendation came from.

Suppose $\seq Xt1n \sim P$ for some $\PcalNPRRstarn$ where we have chosen $r \in (0, 1]$ and an integer $G \geq 1$ to satisfy $\eps$-LDP with
\begin{equation}
\label{eq:hoeffding-G=1:eps}
    \eps := \log\left ( 1 + \frac{(G+1)r}{1-r} \right ).
\end{equation}
Recall the \NPRR{}-Hoeffding lower \lpci{} given \eqref{eq:hoeffding-ci-simple-case},
\begin{equation}
    \dot L_n^\Hoeff := \frac{\sum_{i=1}^n (Z_i - (1-r)/2)}{nr} - \underbrace{\sqrt{\frac{\log(1/\alpha)}{2nr^2}}}_{\dot B_n^\Hoeff},
\end{equation}
and take particular notice of $\dot B_n^\Hoeff$, the ``boundary''. Making this bound as sharp as possible amounts to minimizing $\dot B_n^\Hoeff$, which is clearly when $r = 1$ --- the non-private case --- but what if we want to minimize $\dot B_n^\Hoeff$ \emph{subject to} $\eps$-LDP? Given the relationship between $\eps$, $r$, and $G$, we have that $r$ can be written as
\[ r := \frac{\exp\{\eps\} - 1}{\exp\{\eps\} + G}. \]
Plugging this into $\dot B_n^\Hoeff$, we have
\[ \dot B_n^\Hoeff := \sqrt{\frac{\log(1/\alpha)}{2n \left ( \frac{\exp\eps - 1}{\exp\eps + G} \right )^2}} = \left (\frac{\exp\{\eps\} + G}{\exp\{\eps\} - 1} \right ) \cdot \sqrt{\frac{\log(1/\alpha)}{2n}},\]
which is a strictly increasing function of $G$. It follows that $G$ should be set to the minimal value of $1$ to make $\dot L_n^\Hoeff$ as sharp as possible.

\subsection{Confidence sets under the sequentially interactive Laplace mechanism}\label{section:laplace}

\begin{restatable}[\texttt{Lap-H-CS}]{proposition}{LaplaceHoeffdingCS}\label{proposition:laplace-hoeffding-cs}
Suppose $\infseq Xt1 \sim P$ for some $\PcalNPRRstarinf$ and let $\infseq Zt1$ be their privatized views according to Algorithm~\ref{algorithm:seqIntLaplace}.
Let $\psi_t^\lapNoise(\lambda) := -\log(1 - \lambda^2/\eps_t^2)$ be the (conditional) cumulant generating function of a mean-zero Laplace random variable with scale $1/\eps_t$.
Let $\infseq \lambda t1$ be a sequence of random variables such that $\lambda_t$ depends on $Z_1^{t-1}$ --- formally $\sigma(Z_1^{t-1})$-measurable --- and $[0, \eps_t)$-valued. Then,
\begin{equation}
\small
    \bar L_t^\lapNoise := \frac{\sum_{i=1}^t \lambda_i Z_i }{\sum_{i=1}^t \lambda_i } - \frac{\log(1/\alpha) + \sum_{i=1}^t \left ( \lambda_i^2/8 + \psi_i^\lapNoise(\lambda_i) \right ) }{\sum_{i=1}^t \lambda_i}
\end{equation}
forms a lower $(1-\alpha, \infseqt{\eps_t})$-\lpcs{} for $\mu^\star$.
\end{restatable}

To obtain sharp \cs{}s for $\mu^\star$, we recommend setting
\begin{equation}\label{eq:laplace-lambda-cs}
  \lambda_t := \sqrt{\frac{\log(1/\alpha)}{\sum_{i=1}^t (1/8 + 1/\eps_i^2) \log(t+1)}} \land c\cdot\eps_t,
\end{equation}
for some prespecified truncation scale $c \in (0, 1)$. We choose $\lambda_t$ as scaling like $1/\sqrt{t \log t}$ so that the \cs{} $\bar L_t^\lapNoise$ is $O(\sqrt{\log t / t})$ up to $\log \log$ factors (see \citet[Table 1]{waudby2020estimating} for more details).\footnote{This specific rate assumes $\eps_t = \eps \in (0, 1)$ for each $t$.} The constants provided in~\eqref{eq:laplace-lambda-cs} arise from approximating $\psi^\lapNoise(\lambda)$ by $\lambda^2/\eps^2$ for $\lambda$ near $0$ --- an approximation that can be justified by a simple application of L'Hopital's rule --- and attempting to minimize the \ci{} width.

Similar to Section~\ref{section:nprr-ci}, we can choose $\infseq \lambda t1$ so that $\bar L_t^\lapNoise$ is tight for a fixed sample size $n$. Indeed, we have the following Laplace-Hoeffding \ci{}s for $\mu^\star$.
\begin{corollary}[\texttt{Lap-H}]
\label{corollary:laplace-hoeffding-ci}
Given the same assumptions as Proposition~\ref{proposition:laplace-hoeffding-cs} for a fixed sample size $n$, define
\begin{equation}
\label{eq:laplace-lambda-ci}
    \lambda_{t,n} := \sqrt{\frac{\log(1/\alpha)}{\frac{n}{t}\sum_{i=1}^{t} (1/8 + 1/\eps_i^2)}} \land c\cdot\eps_t,
\end{equation}
and plug it into $\bar L_{t}^\lapNoise$ as given above. Then,
\[ \dot L_n^\lapNoise := \max_{1\leq t \leq n} \bar L_{t} \]
is a $(1-\alpha, (\eps_t)_t )$-lower \lpci{} for $\mu^\star$.
\end{corollary}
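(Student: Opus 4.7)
The plan is to deduce Corollary~\ref{corollary:laplace-hoeffding-ci} as an immediate consequence of the time-uniform guarantee in Proposition~\ref{proposition:laplace-hoeffding-cs}, in exact analogy with the way Theorem~\ref{theorem:hoeffding-nprr-ci} was extracted from Theorem~\ref{theorem:ldp-hoeffding-cs}. The conceptual point is that Proposition~\ref{proposition:laplace-hoeffding-cs} already produces a lower bound valid \emph{simultaneously for all} $t$, so the running maximum over $t \in \{1,\dots,n\}$ inherits $(1-\alpha)$-validity at the prespecified horizon $n$.

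First, I would verify that the explicit sequence $\lambda_{t,n}$ in \eqref{eq:laplace-lambda-ci} satisfies the two hypotheses of Proposition~\ref{proposition:laplace-hoeffding-cs}. For predictability, note that $n$ is a deterministic horizon and each $\eps_i$ is $\Zcal_{i-1}$-measurable by construction of Algorithm~\ref{algorithm:seqPrivAdditive}; hence $\sum_{i=1}^{t}(1/8+1/\eps_i^2)$ and the truncation level $c\,\eps_t$ are both $\Zcal_{t-1}$-measurable, making $\lambda_{t,n}$ a $\Zcal$-predictable process. The range condition $\lambda_{t,n}\in[0,\eps_t)$ follows directly from $\lambda_{t,n}\le c\,\eps_t$ with $c\in(0,1)$.

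Having verified the hypotheses, Proposition~\ref{proposition:laplace-hoeffding-cs} gives
$$\PP\!\left(\forall t \geq 1,\ \mu^\star \geq \bar L_{t,n}^\Lap\right) \geq 1-\alpha,$$
which a fortiori implies
$$\PP\!\left(\mu^\star \geq \max_{1\leq t\leq n}\bar L_{t,n}^\Lap\right) \;=\; \PP(\mu^\star \geq \dot L_n^\Lap) \;\geq\; 1-\alpha,$$
as required.

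There is no real obstacle in this argument — the work is entirely localized in Proposition~\ref{proposition:laplace-hoeffding-cs}, which itself rests on a Cram\'er--Chernoff/NSM argument analogous to the one behind \NPRR{} but with an additional $\psi_i^\lapNoise(\lambda_i)$ term absorbing the cumulant generating function of the Laplace noise. The specific choice \eqref{eq:laplace-lambda-ci} plays no role in the validity proof; it is selected to approximately minimize the width of $\bar L_{t,n}^\Lap$ when $t = n$, paralleling the Hoeffding-optimal $\lambda = \sqrt{8\log(1/\alpha)/n}$ that appears after Theorem~\ref{theorem:hoeffding-nprr-ci}, and the $\log(t+1)$ factor that is present in the sequential tuning dropped here because time-uniformity is not needed at a single fixed horizon.
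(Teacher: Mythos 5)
Your argument is correct and matches the paper's approach: the corollary follows immediately by verifying that $\lambda_{t,n}$ is $\Zcal$-predictable and $[0,\eps_t)$-valued, invoking the time-uniform guarantee of Proposition~\ref{proposition:laplace-hoeffding-cs}, and then taking the running maximum over $t \in \{1,\dots,n\}$, exactly as in the passage from Theorem~\ref{theorem:ldp-hoeffding-cs} to Theorem~\ref{theorem:hoeffding-nprr-ci}. The paper treats this as an immediate consequence of the proposition and does not present a separate argument, so your writeup, including the observation that the specific form of $\lambda_{t,n}$ affects only tightness and not validity, is faithful to the intended reasoning.
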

The proof of Proposition~\ref{proposition:laplace-hoeffding-cs} (and hence Corollary~\ref{corollary:laplace-hoeffding-ci}) can be found in Section~\ref{proof:laplace-hoeffding}. Note that any prespecified value of $c \in (0, 1)$ yields valid \cs{}s and \ci{}s, we find that smaller values (e.g. near $0.1$) yield tighter intervals, and we set $c = 0.1$ in our simulations (Figures~\ref{fig:ci} and~\ref{fig:cs}). 

\subsection{Variance-adaptive confidence intervals and sequences}\label{section:variance-adaptive}

\subsubsection{Variance-adaptive confidence intervals}\label{section:variance-adaptive-ci}
Notice that if $G_t = 1$ for each $t$, then regardless of how low-variance $\seq Xt1n$ are, the observations that are ultimately used for confidence set construction are still Bernoulli. In other words, it does not matter whether $\seq Xt1n$ are Bernoulli(1/2), Uniform[0, 1], or Beta(100, 100) --- with variances of roughly 0.25, 0.083, and 0.0012, respectively --- the privatized observations $\seq Zt1n$ are all Bernoulli(1/2) with a maximal variance 0.25. Unfortunately, this means that variance-adaptive techniques cannot be used to derive tighter \ci{}s from $\seq Zt1n$ directly. The story changes, however, when $G_t \geq 2$. Concretely, for the same value of $r_t$, setting $G_t$ to be very large does not change the conditional mean of $Z_t$ but it can substantially lower its conditional variance (e.g. if $X_t$ has a continuous distribution, such as Beta($\alpha, \beta$)). Of course, given the fact that \NPRR{} satisfies $\eps_t$-LDP with $\eps_t = \log \left ( 1 + \frac{(G_t+1)r_t}{1-r_t} \right )$, there are privacy implications to increasing $G_t$, and hence there is a tradeoff that must be carefully navigated when choosing $(r_t, G_t)$ to satisfy $\eps_t$ when attempting to derive variance-adaptive \ci{}s. We will leave that delicate discussion for later --- for now, it is just important to keep in mind that larger $G_t$ can lower the variance of $\seq Zt1n$, and our goal will be to exploit this fact for the sake of tighter \ci{}s.

We will proceed by turning to the literature on nonasymptotic \ci{}s for bounded random variables, focusing on the (super)martingale-based \ci{}s of~\citet{waudby2020estimating} and adapting their techniques to the locally private setting. Specifically, we will derive private analogues of the product martingales outlined in \citet[Section 4]{waudby2020estimating} as well as the so-called ``predictable plug-in'' supermartingales of \citet[Section 3]{waudby2020estimating}. As we will see in \cref{theorem:ldp-hedged-ci} and \cref{proposition:ldp-eb-ci}, the former product martingales yield tighter \ci{}s but at the expense of a closed-form expression, while the latter supermartingales are looser (but still variance-adaptive) and are available in closed-form.

\paragraph{Product ``betting'' martingales.} Beginning with the former, we follow the discussions in \citet[Remark 1 \& Section 5.1]{waudby2020estimating} and set
\begin{align}
  \label{eq:lambda-betting-ci}
  &\lambda_{t,n}(\mu) := \sqrt{\frac{2 \log(1/\alpha)}{\widehat \gamma_{t-1}^2 n}} \land \frac{c}{\zeta_t(\mu)},~\text{where}\\
  &\widehat \gamma_t^2 := \frac{1/4 + \sum_{i=1}^t (Z_i - \widehat \zeta_i)^2}{t + 1},\ \widehat \zeta_t := \frac{1/2 + \sum_{i=1}^t Z_i}{t + 1},\nonumber
\end{align}
and $c \in (0, 1)$ is some prespecified truncation scale (e.g. 1/2 or 3/4). Given the above, we have the following variance-adaptive \ci{} for $\mu^\star$ under \NPRR{}.

\begin{restatable}[\texttt{NPRR-hedged}]{theorem}{LDPHedgedCI}\label{theorem:ldp-hedged-ci}
  Suppose $\seq{X}{t}{1}{n} \sim P$ for some $P \in \Pcal_{\mu^\star}^n$ and let $\seq{Z}{t}{1}{n} \sim Q$ be their \NPRR{}-privatized views where $Q \in \Qcal_{\mu^\star}^n$. Define
  \begin{equation}
    \label{eq:pmeb-bet-process}
    \Kcal_{t,n}(\mu) := \prod_{i=1}^t \left [1 + \lambda_{i,n}(\mu) \cdot (Z_i - \zeta_i(\mu)) \right ]
  \end{equation}
  with $\lambda_{t,n}(\mu)$ given by \eqref{eq:lambda-betting-ci}.
  Then, $\Kcal_{t,n}(\mu)$ is a nonincreasing function of $\mu \in [0, 1]$, and $\Kcal_{t,n}(\mu^\star)$ forms a $\Qcal_{\mu^\star}^n$-NM. Consequently,
\begin{equation}\label{eq:hedged-CI}
  \dot L_n := \max_{1 \leq t \leq n} \inf \left \{ \mu \in [0, 1] : \Kcal_{t,n}(\mu) < 1/\alpha \right \}
\end{equation}
    forms a lower $(1-\alpha, (\eps_t)_t)$-\lpci{} for $\mu^\star$, meaning ${\PP(\mu^\star \geq \dot L_n) \geq 1-\alpha}$.
\end{restatable}
The proof in Section~\ref{proof:ldp-hedged-ci} follows a similar technique to that of Theorem~\ref{theorem:ldp-dkelly}. As is apparent in the proof, $\Kcal_{t,n}(\mu^\star)$ forms a $\QcalNPRRstarn$-NM regardless of how $\lambda_{t,n}(\mu)$ is chosen, in which case the resulting $\dot L_n$ would still be a bona fide lower confidence bound. However, the choice of $\lambda_{t,n}(\mu)$ given in \eqref{eq:lambda-betting-ci} provides excellent empirical performance for the reasons discussed in \citep[Section 5.1]{waudby2020estimating} and guarantees that $\dot L_n$ is an interval (rather than a union of disjoint sets, for example).
We find that Theorem~\ref{theorem:ldp-hedged-ci} has the best empirical performance out of the private \ci{}s in our paper (see Figure~\ref{fig:ci}).
In our simulations (Figure~\ref{fig:ci}), we set $c = 0.8$, and $(r, G)$ were chosen using the technique outlined in Section~\ref{section:choosing-rG}.
% While the \cs{} of Theorem~\ref{theorem:ldp-hedged-ci} is straightforward to efficiently compute via grid or line search, a closed-form bound may be preferable for simplicity or theoretical analysis.
% In the case that a closed-form bound is desired, we have the following private Hoeffding-type inequality under \NPRR{}.

\paragraph{Empirical Bernstein supermartingales.} While~\cref{theorem:ldp-hedged-ci} improves on~\cref{theorem:hoeffding-nprr-ci} in terms of variance-adaptivity, the resulting bounds given in~\eqref{eq:hedged-CI} are \emph{implicit}, and hence require numerical methods (e.g.~root-finding algorithms) to compute the downstream \ci{}. The numerical operations required are both computationally efficient and straightforward to implement in code, but closed-form bounds may nevertheless be preferable for the sake of simplicity. Empirical Bernstein \ci{}s occupy a middle ground between the Hoeffding-style \ci{}s of~\cref{theorem:ldp-hoeffding-cs} and the implicit \ci{}s of~\cref{theorem:ldp-hedged-ci} by being both closed-form and variance-adaptive.
To this end, consider the following tuning parameters which are similar (but not identical) to~\eqref{eq:lambda-betting-ci}:
\begin{align}
  \label{eq:lambda-eb-ci}
  &\lambda_{t,n}^\PMEB(\mu) := \sqrt{\frac{2 \log(1/\alpha)}{\widehat \gamma_{t-1}^2 n}} \land c,~\text{where}\\
  &\widehat \gamma_t^2 := \frac{1/4 + \sum_{i=1}^t (Z_i - \widehat \zeta_i)^2}{t + 1},\ \widehat \zeta_t := \frac{1/2 + \sum_{i=1}^t Z_i}{t + 1},\nonumber
\end{align}
and $c \in (0, 1)$. Then, we have the following variance-adaptive empirical Bernstein \ci{}s under \NPRR{}.

\begin{restatable}[\texttt{NPRR-EB}]{proposition}{NprrEbCi}\label{proposition:ldp-eb-ci}
  Under the same assumptions as Theorem~\ref{theorem:ldp-hedged-ci}, let $(\lambda_{t,n}^\EB)_{t=1}^n$ be the $[0, 1)$-valued $\Zcal$-predictable sequence given in~\eqref{eq:lambda-eb-ci} and define
  \begin{align*}
    \widehat \mu_t(\lambda_1^t) &:= \frac{\sum_{i=1}^t \lambda_i \cdot \left ( Z_i - (1-r_i)/2 \right )}{\sum_{i=1}^t r_i \lambda_i},\\
    \bar B_t^\PMEB(\lambda_1^t) &:= \frac{\log(1/\alpha) + \sum_{i=1}^t 4(Z_i - \widehat \zeta_{i-1})^2\psi_E(\lambda_i)}{\sum_{i=1}^t r_i \lambda_i}.
  \end{align*}
  where $\psi_E(\lambda) := (-\log(1-\lambda) - \lambda)/4$. Then,
  \begin{equation}
    \dot L_t^\PMEB := \max_{1 \leq t \leq n} \left \{ \widehat \mu_t - \bar B_t^\PMEB \right \}
  \end{equation}
  forms a lower $(1-\alpha, (\eps_t)_t)$-\lpci{} for $\mu^\star$, meaning $\PP(\mu^\star \geq \dot L_t^\PMEB ) \geq 1-\alpha$.
\end{restatable}

\cref{proposition:ldp-eb-ci} is a corollary of \cref{proposition:ldp-eb-cs} whose proof can be found in \cref{proof:ldp-eb-cs}.
Similar to~\cref{theorem:ldp-hedged-ci}, one can use any $(\lambda_{t,n}^\EB)_{t=1}^n$ as long as they are predictable and $[0, 1)$-valued, but we presented~\eqref{eq:lambda-eb-ci} as it tends to exhibit good empirical performance for the reasons discussed in~\citet{waudby2020estimating}. As previously alluded to, the essential difference between~\cref{theorem:ldp-hedged-ci} and~\cref{proposition:ldp-eb-ci} is that the former tends to be tighter in practice, while the latter has the advantage of having a computationally and analytically simple closed-form expression. 
In principle, the proof and techniques of~\cref{theorem:ldp-hedged-ci} and ~\cref{proposition:ldp-eb-ci} may be adapted to many other variance-adaptive \ci{}s for bounded random variables, including~\citet{bentkus2004hoeffding},~\citet{audibert2007tuning},~\citet{maurer2009empirical}, \citet{orabona2021tight}, or other bounds in~\citet{waudby2020estimating}, but we presented the aforementioned two for simplicity and illustration. Let us now turn our attention to a more challenging but related problem of constructing time-uniform \emph{confidence sequences} instead of fixed-time \emph{confidence intervals}.

\subsubsection{Variance-adaptive time-uniform confidence sequences}
\label{section:variance-adaptive-cs}
In Section~\ref{section:nprr-ci}, we presented Hoeffding-type \cs{}s for $\mu^\star$ under \NPRR{}. As discussed in Section~\ref{section:variance-adaptive-ci}, Hoeffding-type inequalities are not variance-adaptive. In this section, we will derive a simple-to-compute, variance-adaptive \cs{} at the expense of a closed-form expression.
Adapting the so-called ``grid Kelly capital process'' (GridKelly) of \citet[Section 5.6]{waudby2020estimating} to the locally private setting, consider the family of processes for each $\mu \in [0,1]$, and for any user-chosen integer $D \geq 2$,
\begin{align*}
  % \label{eq:positive-capital-process}
  \Kcal_t^+(\mu) &:= \sum_{d=1}^D\prod_{i=1}^t \left [ 1 + \lambda_{i,d}^+\cdot (Z_i - \zeta_i(\mu))\right ], \\
  % \label{eq:negative-capital-process}
  \text{and}~~ \Kcal_t^-(\mu) &:= \sum_{d=1}^D\prod_{i=1}^t \left [ 1 - \lambda_{i,d}^-\cdot (Z_i - \zeta_i(\mu))\right ],
\end{align*}
where $\lambda_{i,d}^+ := \frac{d}{(D+1)\zeta_i(\mu)}$ and $\lambda_{i, d}^- := \frac{d}{(D+1)(1-\zeta_i(\mu))}$ for each $i$.
Then we have the following locally private \cs{}s for $\mu^\star$.

\begin{restatable}[\texttt{NPRR-GK-CS}]{theorem}{LDPDKelly}\label{theorem:ldp-dkelly}
    Let $\infseq{Z}{t}{1} \sim Q$ for some $Q \in \Qcal_{\mu^\star}^\infty$ be the output of \NPRR{} as described in \cref{section:nprr}.
    For any prespecified $\theta \in [0, 1]$, define the process $(\Kcal_t^\GK(\mu))_{t=0}^\infty$ given by
    \[ \Kcal_t^\GK(\mu) := \theta \Kcal_t^+(\mu) + (1-\theta) \Kcal_t^-(\mu), \]
    with $\Kcal_0^\GK(\mu) \equiv 1$. Then, $\Kcal_t^\GK(\mu^\star)$ forms a $\QcalNPRRstarinf$-NM, and
    \[ \bar C_t^\GK := \left \{ \mu \in [0, 1] : \Kcal_t^\GK(\mu) < \frac{1}{\alpha} \right \} \]
    forms a $(1-\alpha, (\eps_t)_t)$-\lpcs{} for $\mu^\star$, meaning ${\PP(\forall t,\ \mu^\star \in \bar C_t^\GK) \geq 1- \alpha}$. Moreover, $\bar C_t^\GK$ forms an interval almost surely.
\end{restatable}
The proof of \cref{theorem:ldp-dkelly} is given in Section~\ref{proof:ldp-dkelly} and follows from Ville's inequality for nonnegative supermartingales \citep{ville1939etude,howard2020time}. If a lower or upper \cs{} is desired, one can set $\theta = 1$ or $\theta = 0$, respectively, with $\theta = 1/2$ yielding a two-sided \cs{}.
In our simulations (Figure~\ref{fig:cs}), we set $D = 30$, and $(r, G)$ were chosen using the technique outlined in Section~\ref{section:choosing-rG}.

In~\cref{proposition:ldp-eb-ci}, we presented a closed-form empirical Bernstein \ci{} for $\mu^\star$ under \NPRR{}. Similar to the relationship between the fixed-time NPRR-Hoeffding \ci{} (\cref{theorem:ldp-hoeffding-cs}) and the time-uniform NPRR-Hoeffding \cs{} (\cref{theorem:hoeffding-nprr-ci}),~\cref{proposition:ldp-eb-ci} is a corollary of a more general closed-form empirical Bernstein \emph{\cs{}} instantiated at a fixed sample size. We omitted this \cs{} from the main discussion for brevity, but provide its details here.

\begin{restatable}[\texttt{NPRR-EB-CS}]{proposition}{NprrEbCs}\label{proposition:ldp-eb-cs}
  Given $\infseqt{Z_t} \sim \QcalNPRRstarinf$ and let $\widehat \mu_t(\lambda_1^t)$ and $\widebar B_t^\EB(\lambda_1^t)$ be as in~\cref{proposition:ldp-eb-ci}:
  \begin{align*}
    \widehat \mu_t(\lambda_1^t) &:= \frac{\sum_{i=1}^t \lambda_i \cdot \left ( Z_i - (1-r_i)/2 \right )}{\sum_{i=1}^t r_i \lambda_i},~~\text{and}\\
    \bar B_t^\PMEB(\lambda_1^t) &:= \frac{\log(1/\alpha) + \sum_{i=1}^t 4(Z_i - \widehat \zeta_{i-1})^2\psi_E(\lambda_i)}{\sum_{i=1}^t r_i \lambda_i}.
  \end{align*}
  where $\psi_E(\lambda) := (-\log(1-\lambda) - \lambda)/4$. Then,
  \begin{equation}
    \widebar L_t^\EB := \widehat \mu_t(\lambda_1^t) - \widebar B_t^\EB(\lambda_1^t)
  \end{equation}
  forms a lower $(1-\alpha, (\eps_t)_t)$-\lpcs{} for $\mu^\star$, meaning $\PP(\forall t\geq 1,\ \mu^\star \geq \widebar L_t^\EB ) \geq 1-\alpha$.
\end{restatable}
The proof can be found in~\cref{proof:ldp-eb-cs}, and combines the techniques for deriving private concentration inequalities (such as in~\cref{theorem:ldp-hoeffding-cs}) with those for deriving predictable plug-in empirical Bernstein inequalities (such as in~\citet[Theorem 2]{waudby2020estimating}).

Similar to \cref{theorem:ldp-hedged-ci} and \cref{proposition:ldp-eb-ci}, the proofs and techniques of \cref{theorem:ldp-dkelly} and \cref{proposition:ldp-eb-cs} could potentially be adapted to many other variance-adaptive \cs{}s for bounded random variables, including other bounds contained in \citet{waudby2020estimating}, \citet{kuchibhotla2021near}, or \citet{orabona2021tight}.

\subsection{Choosing \texorpdfstring{$(r, G)$}{r and G} for variance-adaptive confidence sets}\label{section:choosing-rG}
Unlike Hoeffding-type bounds, it is not immediately clear how we should choose $(r,G)$ to satisfy $\eps$-LDP and obtain sharp confidence sets using Theorems~\ref{theorem:ldp-dkelly} and~\ref{theorem:ldp-hedged-ci}, since there is no closed form bound to optimize. Nevertheless, certain heuristic calculations can be performed to choose $(r, G)$ in a principled way.\footnote{Note that ``heuristics'' do not invalidate the method --- no matter what $(r, G)$ are chosen to be, $\eps$-LDP and confidence set coverage are preserved. We are just using heuristic to choose $(r, G)$ in a smart way for the sake of gaining power.}

One approach is to view the raw-to-private data mapping $X \mapsto Z$ as a channel over which information is lost, and we would like to choose the mapping so that as much information is preserved as possible. We will aim to measure ``information lost'' by the conditional entropy $H(Z \mid X)$ and minimize a surrogate of this value.

For the sake of illustration, suppose that $X$ has a continuous uniform distribution. This is a reasonable starting point because it captures the essence of preserving information about a continuous random variable $X$ on a discretely supported output space $\Gcal := \{0,1/G, \dots, G/G\}$. Then, the entropy $H(Z \mid X = x)$ conditioned on $X = x$ is given by
\begin{equation}
\label{eq:cond-entropy}
    H(Z \mid X=x) := \sum_{z \in \Gcal} \PP(Z=z \mid X=x) \log_2 \PP(Z=z \mid X=x),
\end{equation}
and we know that by definition of \NPRR{}, the conditional probability mass function of $(Z \mid X)$ is
\[ \PP(Z= z \mid X=x) = \frac{1-r}{G+1} + rG \cdot \left [\1(z= x^\ceil  )(x-x^\floor) + \1(z = x^\floor)(x^\ceil - x) \right ].\]
We will use the heuristic approximation $x - x^\floor \approx x^\ceil - x \approx 1/(2G)$, which would hold with equality if $x$ were at the midpoint between $x^\floor$ and $x^\ceil$.
With this approximation in mind, we can write
\begin{align}
    \PP(Z = z \mid X = x) &\approx  \frac{1-r}{G+1} + rG \cdot \left [ \frac{1}{2G}\1(z= x^\ceil \text{ or } z = x^\floor ) \right ] \nonumber\\
    &= \frac{1-r}{G+1} + \frac{r}{2} \1 (z = x^\ceil \text{ or } z = x^\floor) \label{eq:pmf-approx}
\end{align}
Given \eqref{eq:pmf-approx}, we can heuristically compute $H(Z \mid X=x)$ because for exactly two terms in the sum $\sum_{z \in \Gcal} \PP(Z=z\mid X=x) \log_2 \PP(Z= z \mid X=x)$, we will have $\1(z = x^\ceil \text{ or } z = x^\floor) = 1$ and the other $G-1$ terms will have the indicator set to 0.
Simplifying notation slightly, let $p_1(r, G) := (1-r)/(G+1) + r/2$ be \eqref{eq:pmf-approx} for those whose indicator is 1, and $p_0(r, G) := (1-r)/(G+1)$ for those whose indicator is 0. Therefore, we can write
\begin{equation}
\label{eq:cond-entropy-x-approx}
    H(Z \mid X = x) \approx (G-1) p_0(r, G) \log_2 p_0(r, G) + 2 p_1(r, G) \log_2 p_1(r,G).
\end{equation}
Finally, the conditional entropy $H(Z\mid X)$ can be approximated by
\begin{equation}
\label{eq:cond-entropy-approx}
    H(Z \mid X) = \int_{0}^1 H(Z \mid X=x) dx \approx  (G-1) p_0(r, G) \log_2 p_0(r, G) + 2 p_1(r, G) \log_2 p_1(r,G),
\end{equation}
since we assumed that $X$ was uniform on [0, 1].

Given a fixed privacy level $\eps\in (0, \infty)$, the approximation \eqref{eq:cond-entropy-approx} gives us an objective function to minimize with respect to $r$ (since $G$ is completely determined by $r$ once $\eps$ is fixed). This can be done using standard numerical minimization solvers. Once an optimal $(r_\opt, \widetilde G_\opt)$ pair is determined numerically, $\widetilde G_\opt$ may not be an integer (but we require $G \geq 1$ to be an integer for \NPRR{}). As such, one can then choose the final $G_\opt$ to be $\lfloor \widetilde G_\opt \rfloor$ or $\lceil \widetilde G_\opt \rceil$, depending on which one minimizes $H(Z\mid X)$ while keeping $\eps$ fixed. If the numerically determined $\widetilde G_\opt$ is $\leq 1$, then one can simply set $G_\opt := 1$ and adjust $r_\opt$ accordingly.

\subsection{One-sided time-varying}\label{section:wavy-one-sided}
The following one-sided analogue of \cref{theorem:two-sided-cs-mean-so-far} can be derived via slightly different techniques; the details can be found in its proof.

\begin{restatable}[]{proposition}{OneSidedCSMeanSoFar}\label{proposition:one-sided-cs-mean-so-far}
  Given the same setup as Theorem~\ref{theorem:two-sided-cs-mean-so-far}, define
  {\small\begin{align}
           \widetilde B_t &:= \sqrt{\frac{t\beta^2 + 1}{2(tr\beta)^2}\log \left (1+\frac{\sqrt{t \beta^2 + 1}}{2\alpha} \right)}. \label{eq:one-sided-changing-means-boundary}
         \end{align}}%
       Then, $\widetilde L_t := \widehat \mu_t - \widetilde B_t$ forms a lower $(1-\alpha, \eps)$-\lpcs{}
       for $\widetilde \mu_t^\star := \frac{1}{t} \sum_{i=1}^t \mu_i^\star$, meaning
       \begin{equation}
         \small
         \label{eq:one-sided-cs-mean-so-far}
         \PP\left (\forall t,\ \widetilde \mu_t^\star \geq \widetilde L_t \right ) \geq 1-\alpha.
       \end{equation}
\end{restatable}
The proof is provided in Section~\ref{proof:one-sided-cs-mean-so-far} and uses a one-sided sub-Gaussian mixture supermartingale technique similar to \citet[Proposition 6]{howard2021time}. Since $\widetilde B_t$ resembles $\widetilde B_t^\pm$ but with $\alpha$ doubled, we suggest choosing $\beta$ using \eqref{eq:rho-opt} but with $\beta_{2\alpha}(t_0)$. We display $\widetilde L_t$ alongside the two-sided bound $\widetilde C_t^\pm$ of Theorem~\ref{theorem:two-sided-cs-mean-so-far} in Figure~\ref{fig:wavy-cs}.

\subsection{Private hypothesis testing and \texorpdfstring{$p$}{p}-values}
\label{section:testing}
So far, we have focused on the use of \emph{confidence sets} for statistical inference, but another closely related perspective is through the lens of hypothesis testing and $p$-values (and their sequential counterparts). Fortunately, we do not need any additional techniques to derive methods for testing, since they are byproducts of our previous results.

Following the nonparametric conditions\footnote{The discussion that follows also applies to the parametric case.} outlined in \cref{section:nprr-ci}, suppose that $\seq Xt1\infty \sim P$ for some $P\in \Pcal_{\mu^\star}^\infty$ which are then privatized into $\seq Zt1\infty \sim Q \in \Qcal_{\mu^\star}^\infty$ via \NPRR{}. The goal now --- ``locally private sequential testing'' --- is to use the private data $\seq Zt1\infty$ to test some null hypothesis $\Hcal_0$. For example, to test $\mu^\star = \mu_0$, we set $\Hcal_0 = \Qcal_{\mu_0}^\infty$ or to test $\mu^\star \leq \mu_0$, we set ${\Hcal_0 = \{ Q \in \Qcal_{\mu}^\infty : \mu \leq \mu_0 \}}$.

\sloppy Concretely, we are tasked with designing a binary-valued function $\bar \phi_t \equiv {\bar \phi(Z_1, \dots, Z_t) \to \{0, 1\}}$ with outputs of 1 and 0 being interpreted as ``reject $\Hcal_0$'' and ``fail to reject $\Hcal_0$'', respectively, so that
\begin{equation}
\small
  \label{eq:sequential-test}
  \sup_{Q \in \Hcal_0} Q(\exists t : \bar \phi_t = 1) \leq \alpha.
\end{equation}
A sequence of functions $\infseq {\bar \phi} t1$ satisfying \eqref{eq:sequential-test} is known as a \emph{level-$\alpha$ sequential test}. Another common tool in hypothesis testing is the $p$-value, which also has a sequential counterpart, known as the \emph{anytime $p$-value} \citep{johari2017peeking,howard2021time}. We say that a sequence of $p$-values $\infseq {\bar p}t1$ is an \emph{anytime $p$-value} if
\begin{equation}
  \label{eq:anytime-p-val}
  \small
  \sup_{Q \in \Hcal_0} Q(\exists t : \bar p_t \leq \alpha) \leq \alpha.
\end{equation}

There are at least two ways to achieve \eqref{eq:sequential-test} and \eqref{eq:anytime-p-val}: (a) by using \cs{}s to reject non-intersecting null hypotheses, and (b) by explicitly deriving $e$-processes. We will first discuss (a) and leave (b) to \cref{section:e-processes} as the discussion is more involved.

\subsubsection{Private hypothesis testing using confidence sets.}
The simplest and most direct way to test hypotheses using the results of this paper is to exploit the duality between \cs{}s and sequential tests (or \ci{}s and fixed-time tests). Suppose $(\bar C_t(\alpha))_{t=1}^\infty$ is an LDP $(1-\alpha)$-\cs{} for $\mu^\star$, and let $\Hcal_0: \{ Q\in \Qcal_{\mu}^\infty : \mu \in \Theta_0 \}$ be a null hypothesis that we wish to test. Then, for any $\alpha \in (0, 1)$,
\begin{equation}
  \small
  \bar \phi_t := \1\left (\bar C_t(\alpha) \cap \Theta_0 = \emptyset \right)
\end{equation}
forms an LDP level-$\alpha$ sequential test for $\Hcal_0$, meaning it satisfies \eqref{eq:sequential-test}.
In particular, if $\bar C_t(\alpha)$ shrinks to a single point as $t\to \infty$, then $\infseq {\bar \phi}t1$ has asymptotic power one.
Furthermore, $\inf\{\alpha : \bar C_t(\alpha) \cap \Theta_0 = \emptyset\}$ forms an anytime $p$-value for $\Hcal_0$, meaning it satisfies \eqref{eq:anytime-p-val}.

Similarly, if $\dot C_n(\alpha)$ is a $(1-\alpha)$ \ci{} for $\mu^\star$, then $\dot \phi_n := \1(\dot C_n(\alpha) \cap \Theta_0 = \emptyset)$ is a level-$\alpha$ test: $\sup_{Q \in \Hcal_0}Q(\dot \phi_n = 1) \leq \alpha$, and $\dot p_n := \inf\{\alpha: \dot C_n(\alpha) \cap \Theta_0 = \emptyset\}$ is a $p$-value for $\Hcal_0$: $\sup_{Q \in \Hcal} Q(\dot p_n \leq \alpha) \leq \alpha$.

  One can also derive sequential tests using so-called \emph{$e$-processes} --- processes that are upper-bounded by nonnegative supermartingales under a given null hypothesis. In fact, every single one of our \cs{}s is derived by first deriving an explicit $e$-process. Let us now discuss how one can derive sequential tests and \cs{}s using $e$-processes.

\subsubsection{Testing via $e$-processes}\label{section:e-processes}

To achieve \eqref{eq:sequential-test} and \eqref{eq:anytime-p-val}, it is also sufficient to derive an \emph{$e$-process} $\infseq {\bar E}t1$ --- a $\Zcal$-adapted process that is upper bounded by an NSM \emph{for every element of $\Hcal_0$}.
% \footnote{$e$-processes are also in some sense \emph{necessary} for this task. See \citet{ramdas2020admissible} for a precise characterization.}
Formally, $\infseq {\bar E}t1$ is an $e$-process for $\Hcal_0$ if for every $Q \in \Hcal_0$, there exists a $Q$-NSM $\infseq {M^Q}t1$ such that
\begin{equation}
\small
   \label{eq:e-proc}
   \forall t,\ \bar E_t \leq M_t^Q, ~~\text{$Q$-almost surely.}
\end{equation}
Here, $\infseq {M^Q}t1$ being a $Q$-NSM means that $\EE_Q M^Q_t \leq M^Q_{t-1}$, and $M_0^Q \equiv 1$, and $M_t^Q \geq 0$, $Q$-almost surely. Note that these upper-bounding NSMs need not be the same, i.e. $\infseq {\bar E}t1$ can be upper bounded by a different $Q$-NSM for each $Q \in \Hcal_0$.

Importantly, if $\infseq {\bar E}t1$ is an $e$-process under $\Hcal_0$, then $\phi_t := \1({\bar E}_t \geq 1/\alpha)$ forms a level-$\alpha$ sequential test satisfying \eqref{eq:sequential-test} by applying Ville's inequality to the NSM that upper bounds $\infseq {\bar E}t1$:
\begin{equation}
  \small
\label{eq:e-proc-timeuniform}
    \sup_{Q \in \Hcal_0} Q(\exists t: \bar E_t \geq 1/\alpha) \leq
    % \sup_{Q \in \Hcal_0} Q(\exists t: M_t^Q \geq 1/\alpha) \leq
    \alpha.
\end{equation}
% Theorems~\ref{theorem:hoeffding-nprr-ci} and \ref{theorem:ldp-hoeffding-cs} are $e$-processes under the interval null $\mu \leq \mu_0$ Every single \cs{} and \ci{} in this paper is derived from first constructing an NSM, so any of our confidence set-based results have a testing counterpart.
% Importantly, it can be shown that \eqref{eq:e-proc} is equivalent to saying that $\infseq Et1$ is upper bounded by an NSM for every element of $\Hcal_0$.
% \begin{equation}
%     \label{eq:e-proc-timeuniform}
%     \sup_{Q \in \Hcal_0} Q(\exists t : E_t \geq 1/\alpha) \leq \alpha.
% \end{equation}\citep[Lemma 3]{howard2021time}; \citep[Proposition 1]{zhao2016adaptive}.
% use any $\Qcal_{\mu_0}^\infty$-NSM $(M_t(\mu_0))_{t=0}^\infty$ since by a single application of Ville's inequality, we have that
%   ${\sup_{Q\in \Qcal_{\mu_0}^\infty} Q(\exists t : M_t(\mu_0) \geq 1/\alpha),}$
% and hence $\bar \phi_t(\mu_0) := \1(M_t(\mu_0) \geq 1/\alpha)$ forms a level-$\alpha$ sequential test.
% Where can we find a $\Qcal_{\mu_0}^\star$-NSM? Every single \cs{} and \ci{} in this paper is derived from first constructing an NSM, so any of our confidence set-based results have a testing counterpart.
% In particular, $\Kcal_t^\pm(\mu_0)$ as defined in Theorem~\ref{theorem:ldp-dkelly} is one such $\Qcal_{\mu_0}^\infty$-NM.

Using the same technique it is easy to see that, $\bar p_t := 1/\bar E_t$ forms an anytime $p$-value satisfying \eqref{eq:anytime-p-val}.
% Let us now summarize the relationship between $e$-processes, anytime $p$-values, and sequential tests in a single proposition.
% \begin{proposition}[Locally private $p$-values and tests]
%   \label{proposition:testing}
%   Let $\infseq Xt1 \sim P \in \Pcal_{\mu^\star}^\infty$, which are privatized into $\infseq Zt1 \sim Q \in \Qcal_{\mu^\star}^\infty$ via \NPRR{}, and let $\infseq Et1$ be an $e$-process for some null hypothesis $\Hcal_0$. Then,
%   \begin{equation}
%   \small
%        \bar p_t := 1/E_t
%   \end{equation}
%   forms an anytime $p$-value for $\Hcal_0$, meaning it satisfies \eqref{eq:av-p-val}, and $\bar \phi_t := \1(\bar p_t \leq \alpha)$ forms a level-$\alpha$ sequential test for $\Hcal_0$, meaning it satisfies \eqref{eq:sequential-test}.
% \end{proposition}
Similarly to Section~\ref{section:nprr-ci}, if we are only interested in inference at a fixed sample size $n$, we can still leverage $e$-processes to obtain sharp finite-sample $p$-values from private data by simply taking
\begin{equation}
\small
    \dot p_n := \min_{1\leq t \leq n} 1/\bar E_t. \
\end{equation}
As an immediate consequence of \eqref{eq:e-proc-timeuniform}, we have $\sup_{Q \in \Hcal_0} Q(\dot p_n \leq \alpha) \leq \alpha$.

With all of this in mind, the question becomes: where can we find $e$-processes? The answer is simple: every single \cs{} and \ci{} in this paper was derived by first constructing an $e$-process under a point null, and
% --- i.e. $\mu = \mu^\star$ in Section~\ref{section:nprr-cs}, and $\forall t,\ \widetilde \mu_t = \widetilde \mu_t^\star$ in Section~\ref{section:cs-mean-so-far} ---
Table~\ref{table:cs-eproc-map} explicitly links all of these \cs{}s to their corresponding $e$-processes.\footnote{We do not link \ci{}s to $e$-processes since all of our \ci{}s are built using the aforementioned \cs{}s.} For more complex composite nulls however, there may exist $e$-processes that are not NSMs \citep{ramdas2021testing}, and we touch on one such example in Proposition~\ref{proposition:a/b-testing}.
% In short, all of our confidence set-based results have a testing counterpart via $e$-processes.
\begin{table}[!htbp]
  \caption{A mapping between theorems containing confidence sequences and equations containing the explicit $e$-processes that underlie them.}
  \label{table:cs-eproc-map}
  \centering
\begin{tabular}{l|c}
\textbf{Confidence sequence}                           & \textbf{$e$-process}             \\ \hline
Theorem~\ref{theorem:ldp-hoeffding-cs}                 & \eqref{eq:nprr-hoeffding-nsm}    \\
Theorem~\ref{theorem:two-sided-cs-mean-so-far}         & \eqref{eq:two-sided-mixture-NSM} \\
Proposition~\ref{proposition:one-sided-cs-mean-so-far} & \eqref{eq:one-sided-mixture-NSM} \\
\end{tabular}
\end{table}

\paragraph{A note on locally private $e$-values.}
% Notice how sequential tests \eqref{eq:sequential-test} can be derived from anytime $p$-values \eqref{eq:av-p-val} by writing $\bar \phi_t := \1(\bar p_t \leq \alpha)$. As such, the results that follow will focus on $p$-values, from which tests are straightforward to define.
Similar to how the $p$-value is the fixed-time version of an anytime $p$-value, the so-called \emph{$e$-value} is the fixed-time version of an $e$-process. An $e$-value for a null $\Hcal_0$ is a nonnegative random variable $\dot E$ with $Q$-expectation at most one, meaning $\EE_Q (\dot E) \leq 1$ for any $Q \in \Hcal_0$ \citep{grunwald2019safe,vovk2021values}, and clearly by Markov's inequality, $1/\dot E$ is a $p$-value for $\Hcal_0$. Indeed, the time-uniform property \eqref{eq:e-proc-timeuniform} for the $e$-process $\infseq Et1$ is \emph{equivalent} to saying $E_\tau$ is an $e$-value for any stopping time $\tau$ \citep[Lemma 3]{howard2021time}; \citep[Proposition 1]{zhao2016adaptive}.
% Furthermore, a fixed-time hypothesis test for $\Hcal_0$ can be derived by writing ${\dot \phi(\mu_0) := \1(\dot p_n(\mu_0)\leq \alpha)}$

% \subsection{\texorpdfstring{$e$}{e}-values and \texorpdfstring{$e$}{e}-processes for more complex nulls}

% The aforementioned $p$-values are closely-related to so-called \emph{$e$-values} --- nonnegative random variables $E_n(\mu_0) \geq 0$ with expectation at most one under $\Hcal_0$, i.e.~$\EE_Q\left (E_n(\mu_0) \right) \leq 1$ for all $Q \in \Qcal_{\mu_0}^n$. It is easy to see that $1/E_n(\mu_0)$ is a $p$-value by Markov's inequality:
% \begin{equation}
% \small
%     Q(1/E_n(\mu_0) \leq \alpha) = Q(E_n(\mu_0) \geq 1/\alpha) \leq \alpha.
% \end{equation}
% As a strict generalization, an \emph{$e$-process} $\infseq Et1$ is a stochastic process such that $\EE_\tau($

Given the shared goals between $e$- and $p$-values, a natural question arises: ``Should one use $e$-values or $p$-values for inference?''.
While $p$-values are the canonical measure of evidence in hypothesis testing, there are several reasons why one may prefer to work with $e$-values directly; some practical, and others philosophical. From a purely practical perspective, $e$-values make it simple to combine evidence across several studies \citep{grunwald2019safe,ter2021all,vovk2021values} or to control the false discovery rate under arbitrary dependence \citep{wang2020false}. They have also received considerable attention for philosophical reasons including how they relate testing to betting \citep{shafer2021testing} and connect frequentist and Bayesian notions of uncertainty \citep{grunwald2019safe,waudby2020confidence}. While the details of these advantages are well outside the scope of this paper, they are advantages that can now be enjoyed in locally private inference using our methods.

\subsection{A/B testing the weak null}\label{section:weak-null}
As described in Section~\ref{section:testing}, there is a close connection between \cs{}s and sequential hypothesis tests. The lower \cs{} $\infseq {\widetilde L^\Delta}{t}{1}$ presented in \cref{proposition:a/b-testing} is no exception, and can be used to test the weak null hypothesis, $\widetilde \Hcal_0$: $\forall t,\ \widetilde \Delta_t \leq 0$ (see \cref{fig:a/b-test}). In words, $\widetilde \Hcal_0$ is testing ``is the new treatment as bad or worse than placebo \emph{among the patients so far}?''.
Indeed, adapting \eqref{eq:e-process-mean-so-far} from the proof of Proposition~\ref{proposition:one-sided-cs-mean-so-far} to the current setting, we have the following anytime $p$-value for the weak null under locally private online A/B tests.

\begin{restatable}[]{proposition}{ABTesting}\label{proposition:a/b-testing}
  Consider the same setup as Corollary~\ref{corollary:a/b-estimation}, and let $\Phi(\cdot)$ be the cumulative distribution function of a standard Gaussian. Define for any $\beta > 0$,
  {\small \begin{equation*}
      \widetilde E^\Delta_t := \frac{2}{\sqrt{t\beta^2 + 1}} \exp \left \{ \frac{2 \beta^2 (S_{t,0}^\Delta)^2}{t\beta^2 + 1} \right \} \Phi\left( \frac{2\beta S_{t,0}^\Delta}{\sqrt{t\beta^2 + 1}} \right ),
    \end{equation*}}
  where $S_{t,0}^\Delta := \sum_{i=1}^t (\psi_i - (1-r)/2) - tr \frac{1/(1-\pi)}{1/\pi + 1/(1-\pi)}$ and $\beta > 0$. Then, $\widetilde E^\Delta_t$ forms an $e$-process and hence $\widetilde p_t^\Delta := 1/\widetilde E^\Delta_t$ forms an anytime $p$-value, and ${\widetilde \phi_t^\Delta := \1(\widetilde p^\Delta_t \leq \alpha)}$ forms a level-$\alpha$ sequential test for the weak null $\widetilde \Hcal_0$.
\end{restatable}
The proof provided in Section~\ref{proof:a/b-testing} relies on the simple observation that under $\widetilde \Hcal_0$, $\infseq {\widetilde E^\Delta}t1$ is upper bounded by a nonnegative supermartingale, and is hence an ``$e$-process''. We suggest choosing $\beta > 0$ in a similar manner to \cref{proposition:one-sided-cs-mean-so-far}.
% similar to that used in the proof of Proposition~\ref{proposition:one-sided-cs-mean-so-far}.

\subsection{Locally private adaptive online A/B testing}\label{section:adaptive-a/b-testing}
In \cref{section:a/b-testing}, we demonstrated how our techniques can be used to conduct online A/B tests. However, those A/B tests were non-adaptive, in the sense that the propensity score $\pi \in (0, 1)$ was required to be the same constant for all individuals (e.g.~in a Bernoulli experiment). In this section, we briefly describe an alternative \cs{} that can be used to conduct \emph{adaptive} online A/B tests, where the propensity scores $\infseqt{\pi_t(X_t)}$ can change over time in a data-dependent fashion and be a function of some measured baseline covariates $\infseqt{X_t}$. Note that while we will still consider private tests in the sense of the outcomes $\infseqt{Y_t}$ being privatized, we will not be privatizing the covariates $\infseqt{X_t}$ (though this is an interesting direction for future work).
% In its simplest form, A/B testing proceeds by (i) randomly assigning each subject to receive treatment A with some probability $\pi \in (0, 1)$ and treatment B with probability $1-\pi$, (ii) collecting some outcome measurement $Y_t$ for each subject $t \in \{1,2, \dots\}$ --- e.g.~severity of headache after taking drug A or B --- and (iii) measuring the difference in that outcome between the two groups.

To set the stage, suppose that $(X_1, A_1, Y_1), (X_2, A_2, Y_2), \dots$ are joint random variables such that covariates $X_t \sim p_X(\cdot)$, are drawn according to some common distribution, treatments $A_t \sim \text{Bernoulli}(\pi_t(X_t))$ are drawn from a conditional distribution  $\pi_t$ (called the propensity score) which can be chosen based on $(X_i, A_i, Y_i)_{i=1}^{t-1}$, and $Y_t \sim p_Y(\cdot \mid A_t, X_t)$ is drawn from a common conditional distribution.\footnote{These distributional assumptions can be substantially weakened as in \citet{waudby2022anytime}, but we present this simplified setting for the sake of exposition.} In words, we have that for each subject $t$, covariates $X_t$ are observed, a propensity score $\pi_t$ is chosen based on all previous subjects, a binary treatment $A_t$ is drawn with probability $\pi_t(X_t)$, and a $[0, 1]$-bounded outcome $Y_t$ is observed based on subject $t$'s covariates and their treatment $A_t$. Of course, if $\pi(X_t) \equiv \pi$ for each $t$, then the above setup recovers the classical (non-adaptive) A/B testing setup considered in \cref{section:a/b-testing}.

Similarly to Section~\ref{section:a/b-testing}, we will construct $(1-\alpha)$-\cs{}s for the \emph{time-varying mean} $\widetilde \Delta_t := \frac{1}{t} \sum_{i=1}^t \Delta_i$ where
\begin{equation}
\small
  \Delta_i := \EE \left \{{\EE(Y_i \mid X_i, A_i = 1)} - {\EE(Y_i \mid X_i, A_i = 0)} \right \}
\end{equation}
is the individual treatment effect for subject $i$.
% In words, the time-varying parameter $\widetilde \Delta_t$ is the running average of individual treatment effects \emph{among the subjects so far} (or ``running ATE'' for short. Readers familiar with causality will notice that~\eqref{eq:running-ate} is not a counterfactual quantity without further causal identification assumptions. However, all of the discussion that follows can be embedded in a more rigorous causal framework in terms of potential outcomes, but we omit this for brevity. See \citet[Appendix B]{waudby2022anytime} for a related discussion.
% \footnote{This estimand follows the setup of Section~\ref{section:cs-mean-so-far}, but in the familiar case where $\Delta = \Delta_i = \Delta_j$ for all $i,j$, our \cs{} will capture $\Delta$ uniformly over time.}
% The \cs{} we derive will be locally private by passing subjects' data through \NPRR{} (Algorithm~\ref{algorithm:NPRR}).
To state our main result, we need to prepare some notation. Let $w_t^{(1)} := \frac{\1(A_t = 1)}{\pi_t(X_t)}$ and $w_t^{(0)} := \frac{\1(A_t = 0)}{1-\pi_t(X_t)}$ denote the inverse propensity score weights for treatment and control groups, respectively, and define the following pseudo-outcomes $\theta_t := [w_t^{(1)}Z_t - (1-w_t^{(0)}(1-Z_t))]/r$, and the resulting variance process
\begin{equation}
\small
  V_t := \frac{1}{t}\sum_{i=1}^t \left ( \theta_i - \widehat \theta_{i-1} \right )^2,~\text{where}~\widehat \theta_t := \left (\frac{1}{t} \sum_{i=1}^t \theta_i \right ) \land 1
\end{equation}
We are now ready to state the main result of this section.

\begin{restatable}[Locally private adaptive A/B estimation]{theorem}{AdaptiveAbTest}\label{theorem:a/b-test}
  Let $S_t(\widetilde \Delta_t') := (\sum_{i=1}^t \theta_i - t\widetilde \Delta_t')/2$ for any $\widetilde \Delta_t' \in [0, 1]$ and define for any $\rho > 0$,
  \begin{equation}\label{eq:EB-conjugate-mixture-supermartingale}
    \small
    \widetilde M_t^\EB(\widetilde \Delta_t') := \left ( \frac{\rho^\rho e^{-\rho}}{\Gamma(\rho) - \Gamma(\rho, \rho)} \right ) \left ( \frac{1}{V_t+\rho} \right ) F_t(\widetilde \Delta_t'),
  \end{equation}
  where $F_t(\widetilde \Delta_t') := \onefone(1, V_t+\rho+1, S_t(\widetilde \Delta_t') + V_t + \rho)$, and $\onefone$ is Kummer's confluent hypergeometric function, and $\Gamma(\cdot, \cdot)$ is the upper incomplete gamma function. Then, when evaluated at the true $\widetilde \Delta_t$, we have that $\widetilde M_t^\EB(\widetilde \Delta_t)$ forms a nonnegative supermartingale. Consequently,
  \begin{equation}
    \small
    \widetilde L_t^\Delta := \inf \left \{ \widetilde \Delta_t \in [0, 1] : \widetilde M_t^\EB(\widetilde \Delta_t) < 1/\alpha \right \}
  \end{equation}
  forms a lower $(1-\alpha)$-\cs{} for the running ATE $\widetilde \Delta_t$.
\end{restatable}

The proof can be found in \cref{proof:variance-adaptive-a/b-test}. Readers familiar with the semiparametric causal inference literature will notice that $\theta_t$ takes the form of a modified inverse-probability-weighted (IPW) influence function, and that doubly robust (also known as ``augmented IPW'') approaches are often superior both theoretically and empirically. In principle, the above discussion can be modified to handle doubly robust pseudo-outcomes and \cs{}s using the ideas contained in \citet[Section 2.1]{waudby2022anytime}, but we presented the IPW-based approach instead for the sake of simplicity.

\section{Proofs of additional results}
\label{proofs:additional-results}

\subsection{Proof of \cref{proposition:laplace-hoeffding-cs}}
\label{proof:laplace-hoeffding}

\LaplaceHoeffdingCS*

\begin{proof}
The proof proceeds in two steps. First, we construct an exponential NSM using the cumulant generating function of a Laplace distribution. Second and finally, we apply Ville's inequality to the NSM and invert it to obtain the lower \cs{}.
\paragraph{Step 1.}
Consider the following process for any $\mu \in [0, 1]$,
\[ M_t^\lapNoise(\mu) := \prod_{i=1}^t \exp\left \{ \lambda_i (Z_i - \mu) - \lambda_i^2 / 8 - \psi_i^\lapNoise(\lambda_i) \right \}, \]
with $M_t^\lapNoise(\mu)\equiv 1$. We claim that $(M_t^\lapNoise(\mu^\star))_{t=0}^\infty$ forms an NSM with respect to the private filtration $\Zcal$. Indeed, $(M_t^\lapNoise(\mu^\star))_{t=0}^\infty$ is nonnegative and starts at one by construction. It remains to prove that $(M_t^\lapNoise(\mu^\star))$ is a supermartingale, meaning $\EE(M_t^\lapNoise(\mu^\star) \mid \Zcal_{t-1}) \leq M_{t-1}^\lapNoise(\mu^\star)$. Writing out the conditional expectation of $M_t^\lapNoise(\mu^\star)$, we have
  \begin{align*}
    &\EE \left ( M_t^\lapNoise(\mu^\star) \mid \Zcal_{t-1} \right  )\\
    =\ &\EE \left ( \prod_{i=1}^t \exp \left \{ \lambda_i ( Z_i - \mu^\star ) - \lambda_i^2 / 8 - \psi_i^\lapNoise(\lambda_i) \right \} \biggm \vert \Zcal_{t-1} \right ) \\
                                                               =\ &\underbrace{\prod_{i=1}^{t-1} \exp \left \{ \lambda_i ( Z_i - \mu^\star ) - \lambda_i^2 / 8 -\psi_i^\lapNoise(\lambda_i) \right \}}_{M_{t-1}(\mu^\star)} \cdot \underbrace{\EE \left ( \exp \left \{ \lambda_t ( Z_t - \mu^\star ) - \lambda_t^2 / 8  - \psi_t^\lapNoise(\lambda_t)\right \} \biggm \vert \Zcal_{t-1} \right )}_{(\dagger)},
  \end{align*}
  since $M_{t-1}^\lapNoise(\mu^\star)$ is $\Zcal_{t-1}$-measurable, and thus it can be written outside of the conditional expectation. It now suffices to show that $(\dagger) \leq 1$. To this end, note that $Z_t = X_t + \lapNoise_t$ where $X_t$ is a $[0, 1]$-bounded, mean-$\mu^\star$ random variable, and $\lapNoise_t$ is a mean-zero Laplace random variable (conditional on $\Zcal_{t-1}$).
  Consequently, $\EE (\exp\left \{ \lambda_tX_t \right \} \mid \Zcal_{t-1}) \leq \exp\{ \lambda_t^2/8 \}$ by Hoeffding's inequality \citep{hoeffding1963probability}, and $\EE(\exp\{ \lambda_t \lapNoise_t \} \mid \Zcal_{t-1}) = \exp\left \{ \lambda_t^\lapNoise(\lambda_t) \right \}$ by definition of a Laplace random variable. Moreover, note that by design of Algorithm~\ref{algorithm:seqIntLaplace}, $X_t$ and $\lapNoise_t$ are conditionally independent. It follows that
  \begin{align*}
      (\dagger) &= \EE \left ( \exp \left \{ \lambda_t ( Z_t - \mu^\star ) - \lambda_t^2 / 8  - \psi_t^\lapNoise(\lambda_t)\right \} \Bigm \vert \Zcal_{t-1} \right )\\
      &=\EE \left ( \exp \left \{ \lambda_t ( X_t - \mu^\star ) + \lambda_t \lapNoise_t - \lambda_t^2/8 - \psi_t^\lapNoise(\lambda_t)\right \} \Bigm \vert \Zcal_{t-1} \right )\\
      &= \underbrace{\EE \left ( \exp \left \{ \lambda_t ( X_t - \mu^\star ) - \lambda_t^2/8 \right \} \Bigm \vert \Zcal_{t-1} \right )}_{\leq 1} \cdot \underbrace{\EE \left ( \exp \left \{ \lambda_t \lapNoise_t - \psi_t^\lapNoise(\lambda_t)\right \} \Bigm \vert \Zcal_{t-1} \right )}_{= 1} \leq 1,
  \end{align*}
  where the third equality follows from the conditional independence of $X_t$ and $\lapNoise_t$. Therefore, $(M_t^\lapNoise(\mu^\star))_{t=0}^\infty$ is an NSM.

  \paragraph{Step 2.} By Ville's inequality, we have that
  \[ \PP\left (\forall t,\ M_t^\lapNoise(\mu^\star) < 1/\alpha \right ) \geq 1-\alpha. \]
  Let us rewrite the inequality $M_t^\lapNoise(\mu^\star) < 1/\alpha$ so that we obtain the desired lower \cs{}.

  \begin{align*}
    M_t^\lapNoise(\mu^\star) < 1/\alpha &\iff \prod_{i=1}^t \exp\left \{ \lambda_t (Z_t - \mu^\star) - \lambda_i^2 / 8 - \psi_i^\lapNoise(\lambda_i) \right \} < 1/\alpha\\
    &\iff \sum_{i=1}^t\left [ \lambda_t (Z_t - \mu^\star) - \lambda_i^2 / 8 - \psi_i^\lapNoise(\lambda_i) \right] < \log(1/\alpha)\\
    &\iff \sum_{i=1}^t \lambda_t Z_t - \sum_{i=1}^t [\lambda_i^2 / 8 + \psi_i^\lapNoise(\lambda_i)] -\mu^\star \sum_{i=1}^t \lambda_i   < \log(1/\alpha) \\
    &\iff \mu^\star > \frac{\sum_{i=1}^t \lambda_t Z_t}{\sum_{i=1}^t \lambda_i} - \frac{\log(1/\alpha) + \sum_{i=1}^t (\lambda_i^2 / 8 + \psi_i^\lapNoise(\lambda_i))}{\sum_{i=1}^t \lambda_i}.
  \end{align*}
  In summary, the above inequality holds uniformly for all $t \in \{1,2,\dots\}$ with probability at least $(1-\alpha)$. In other words,
  \[ \frac{\sum_{i=1}^t \lambda_t Z_t}{\sum_{i=1}^t \lambda_i} - \frac{\log(1/\alpha) + \sum_{i=1}^t (\lambda_i^2 / 8 + \psi_i^\lapNoise(\lambda_i))}{\sum_{i=1}^t \lambda_i}\]
  forms a $(1-\alpha)$-lower \cs{} for $\mu^\star$. An analogous upper-\cs{} can be derived by applying the same technique to $-Z_1, -Z_2, \dots$ and their mean $-\mu^\star$. This completes the proof.
\end{proof}

\subsection{A lemma for Theorems~\ref{theorem:ldp-hedged-ci} and~\ref{theorem:ldp-dkelly}}
\label{proof:ldp-betting}

To prove Theorems~\ref{theorem:ldp-dkelly} and~\ref{theorem:ldp-hedged-ci}, we will prove a more general result (Lemma~\ref{lemma:ldp-betting}), and use it to instantiate both theorems as immediate consequences. The proof follows a similar technique to \citet{waudby2020estimating} but adapted to the locally private setting.

\begin{lemma}
  \label{lemma:ldp-betting}
  Suppose $\infseq Xt1 \sim P$ for some $P \in \Pcal_{\mu^\star}^\infty$ and let $\infseq Zt1$ be their \NPRR{}-induced privatized views. Let $\theta_1, \dots, \theta_D \in [0, 1]$ be convex weights satisfying $\sum_{d=1}^{D} \theta_d = 1$ and let $\infseq{\lambda^{(d)}}{t}{1}$ be a $\Zcal$-predictable sequence for each $d \in \{1, \dots, D\}$ such that $\lambda_t \in (-(1-\zeta_t(\mu^\star))^{-1}, \zeta_t(\mu^\star)^{-1})$. Then the process formed by
  \begin{equation}
    \label{eq:general-betting-martingale}
    M_t := \sum_{d=1}^{D} \theta_d \prod_{i=1}^t (1 + \lambda_i^{(d)}\cdot (Z_i - \zeta_i(\mu^\star)))
  \end{equation}
  is a nonnegative martingale starting at one. Further suppose that $(\breve M_t(\mu))_{t=0}^\infty$ is a process for any $\mu \in (0, 1)$ that when evaluated at $\mu^\star$, satisfies $\breve M_t(\mu^\star) \leq M_t$ almost surely for each $t$. Then
  \begin{equation}
    \label{eq:general-betting-cs}
    \breve C_t := \left \{ \mu \in (0, 1) : \breve M_t(\mu) < 1/\alpha \right \}
  \end{equation}
  forms a $(1-\alpha)$-\cs{} for $\mu^\star$.
\end{lemma}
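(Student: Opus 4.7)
The plan is to verify first that each product process is a nonnegative $\Zcal$-martingale, then conclude that their convex combination inherits these properties, and finally invoke Ville's inequality together with the domination hypothesis on $\breve M_t$ to obtain the claimed \cs{}.

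For the first step, fix $d \in \{1, \dots, D\}$ and set $K_t^{(d)} := \prod_{i=1}^t(1 + \lambda_i^{(d)}(Z_i - \zeta_i(\mu^\star)))$ with $K_0^{(d)} \equiv 1$. I would begin by checking nonnegativity: since $Z_i \in [0,1]$ and $\zeta_i(\mu^\star) \in (0, 1)$, we have $Z_i - \zeta_i(\mu^\star) \in [-\zeta_i(\mu^\star), 1-\zeta_i(\mu^\star)]$, and the hypothesis $\lambda_i^{(d)} \in (-(1-\zeta_i(\mu^\star))^{-1}, \zeta_i(\mu^\star)^{-1})$ (combined with $\Zcal_{i-1}$-predictability, so this is a deterministic condition given the past) ensures that $1 + \lambda_i^{(d)}(Z_i - \zeta_i(\mu^\star)) > 0$ almost surely, and hence $K_t^{(d)} \geq 0$. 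For the martingale property, I would use that $\lambda_i^{(d)}$ is $\Zcal_{i-1}$-measurable and that, by the \NPRR{} observation model of Section~\ref{section:observation-model}, $\EE(Z_i \mid \Zcal_{i-1}) = \zeta_i(\mu^\star)$, so
\begin{equation*}
\EE(K_t^{(d)} \mid \Zcal_{t-1}) = K_{t-1}^{(d)} \cdot \EE\bigl(1 + \lambda_t^{(d)}(Z_t - \zeta_t(\mu^\star)) \bigm\vert \Zcal_{t-1}\bigr) = K_{t-1}^{(d)} \cdot 1 = K_{t-1}^{(d)}.
\end{equation*}
Thus $(K_t^{(d)})_{t=0}^\infty$ is a $\Qcal_{\mu^\star}^\infty$-NM.

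For the second step, since $\theta_1, \dots, \theta_D \geq 0$ with $\sum_d \theta_d = 1$, the process $M_t = \sum_{d=1}^D \theta_d K_t^{(d)}$ is a nonnegative convex combination of NMs, hence itself a $\Qcal_{\mu^\star}^\infty$-NM starting at one (the conditional expectation commutes with the finite sum by linearity). Ville's inequality \eqref{eq:ville} then yields
\begin{equation*}
\PP(\exists t : M_t \geq 1/\alpha) \leq \alpha.
\end{equation*}

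For the third step, by the assumption $\breve M_t(\mu^\star) \leq M_t$ almost surely for each $t$, the event $\{\exists t : \breve M_t(\mu^\star) \geq 1/\alpha\}$ is contained in $\{\exists t : M_t \geq 1/\alpha\}$, so it also has probability at most $\alpha$. Equivalently, with probability at least $1-\alpha$ we have $\breve M_t(\mu^\star) < 1/\alpha$ for all $t$, which is exactly the statement $\mu^\star \in \breve C_t$ for all $t$. There is no real obstacle here: the only subtlety to be careful about is the predictability requirement on $\lambda_t^{(d)}$ (needed so that it can be pulled out of the conditional expectation) and the strict inequalities in the range hypothesis (needed so that nonnegativity holds pathwise, not just in expectation). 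These are built into the statement, so the proof is essentially a two-line application of linearity of conditional expectation followed by Ville's inequality.
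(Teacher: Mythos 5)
Your proposal is correct and mirrors the paper's own three-step proof exactly: show each product process $K_t^{(d)}$ is a nonnegative $\Zcal$-martingale using predictability of $\lambda_t^{(d)}$ and $\EE(Z_t \mid \Zcal_{t-1}) = \zeta_t(\mu^\star)$, pass to the convex combination by linearity, and combine Ville's inequality with the almost-sure domination $\breve M_t(\mu^\star) \leq M_t$. The added remark about the strict interval condition on $\lambda_t^{(d)}$ giving pathwise (not just in-expectation) nonnegativity is accurate and matches the paper's reasoning.
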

\begin{proof}
  The proof proceeds in three steps. First, we will show that the product processes given by $\prod_{i=1}^t(1 + \lambda_i\cdot (Z_i - \zeta_i(\mu^\star)))$ form nonnegative martingales with respect to $\Zcal$. Second, we argue that $\sum_{d=1}^{D} \theta_d M_t^{(d)}$ forms a martingale for any $\Zcal$-adapted martingales. Third and finally, we argue that $\breve C_t$ forms a $(1-\alpha)$-\cs{} despite not being constructed from a martingale directly.

  \paragraph{Step 1.}
  We wish to show that $M_t^{(d)} := \prod_{i=1}^t (1 + \lambda_i^{(d)} \cdot (Z_i - \zeta_i(\mu^\star)))$ forms a nonnegative martingale starting at one given a fixed $d \in \{1,\dots, D\}$. Nonnegativity follows immediately from the fact that $\lambda_t \in (-[1-\zeta_t(\mu^\star)]^{-1},\ \zeta_t(\mu^\star)^{-1})$, and $M_t^{(d)}$ begins at one by design. It remains to show that $M_t^{(d)}$ forms a martingale. To this end, consider the conditional expectation of $M_t^{(d)}$ for any $t \in \{1, 2, \dots\}$,
  \begin{align*}
    \EE \left ( M_t^{(d)} \mid \Zcal_{t-1} \right ) &= \EE \left ( \prod_{i=1}^t (1 + \lambda_i^{(d)} \cdot (Z_i - \zeta_i(\mu^\star))) \biggm \vert \Zcal_{t-1} \right ) \\
                                                &= \underbrace{\prod_{i=1}^{t-1}(1 + \lambda_i^{(d)} \cdot (Z_i - \zeta_i(\mu^\star)))}_{M_{t-1}^{(d)}} \cdot \EE \left (1 + \lambda_t^{(d)} \cdot (Z_t - \zeta_t(\mu^\star)) \mid \Zcal_{t-1} \right )\\
                                                &= M_{t-1}^{(d)} \cdot \left ( 1 + \lambda_t^{(d)} \cdot \underbrace{\left [\EE(Z_t\mid \Zcal_{t-1}) - \zeta_t(\mu^\star) \right ]}_{=0} \right ) \\
                                                &= M_{t-1}^{(d)}.
  \end{align*}
  Therefore, $\infseq {M^{(d)}}{t}{0}$ forms a martingale.
  \paragraph{Step 2.}
  Now, suppose that $M_t^{(1)}, \dots, M_t^{(D)}$ are test martingales with respect to the private filtration $\Zcal$, and let ${\theta_1, \dots, \theta_D \in [0, 1]}$ be convex weights, i.e. satisfying $\sum_{d=1}^D \theta_d = 1$. Then $M_t := \sum_{d=1}^D M_t^{(d)}$ also forms a martingale since
  \begin{align*}
    \EE (M_t \mid \Zcal_{t-1}) &=  \EE \left ( \sum_{d=1}^D \theta_d M_t^{(d)} \Bigm \vert \Zcal_{t-1} \right )\\
                               &= \sum_{d=1}^D \theta_d \EE \left ( M_t^{(d)} \mid \Zcal_{t-1} \right ) \\
                               &= \sum_{d=1}^D \theta_d M_{t-1}^{(d)} \\
                               &= M_{t-1}.
  \end{align*}
  Moreover, $\infseq Mt0$ starts at one since $M_0 := \sum_{d=1}^D \theta_d M_0^{(d)} = \sum_{d=1}^D \theta_d = 1$. Finally, nonnegativity follows from the fact that $\theta_1, \dots, \theta_D$ are convex and each $\infseq {M^{(d)}}t0$ is almost-surely nonnegative. Therefore, $\infseq Mt0$ is a test martingale.
  \paragraph{Step 3.}
  Now, suppose $(\breve M_t(\mu))_{t=0}^\infty$ is a process that is almost-surely upper-bounded by $\infseq Mt0$. Define ${\breve C_t := \left \{ \mu \in (0, 1) : \breve M_t(\mu) < 1/\alpha \right \}}$. Writing out the probability of $\breve C_t$ miscovering $\mu^\star$ for any $t$, we have
  \begin{align*}
    \PP(\exists t : \mu^\star \notin \breve C_t) &= \PP(\exists t : \breve M_t(\mu^\star) \geq 1/\alpha) \\
                                                     &\leq \PP(\exists t : M_t \geq 1/\alpha)\\
                                                     &\leq \alpha,
  \end{align*}
  where the first inequality follows from the fact that $\breve M_t(\mu^\star) \leq M_t$ almost surely for each $t$, and the second follows from Ville's inequality \citep{ville1939etude}. This completes the proof of Lemma~\ref{lemma:ldp-betting}.
\end{proof}

In fact, a more general ``meta-algorithm'' extension of Lemma~\ref{lemma:ldp-betting} holds, following the derivation of the ``Sequentially Rebalanced Portfolio'' in \citet[Section 5.8]{waudby2020estimating} but we omit these details for the sake of simplicity.

\subsection{Proof of Theorem~\ref{theorem:ldp-hedged-ci}}
\label{proof:ldp-hedged-ci}
\LDPHedgedCI*

\begin{proof}
The proof of Theorem~\ref{theorem:ldp-hedged-ci} proceeds in three steps. First, we show that $\Kcal_{t,n}$ is nonincreasing and continuous in $\mu \in [0, 1]$, making $\dot L_n$ simple to compute via line/grid search. Second, we show that $\Kcal_{t,n}(\mu^\star)$ forms a $\QcalNPRRstarinf$-NM. Third and finally, we show that $\dot L_n$ is a lower \ci{} by constructing a lower \cs{} that yields $\dot L_n$ when instantiated at $n$.

\paragraph{Step 1. $\Kcal_{t,n}(\mu)$ is nonincreasing and continuous.}

  To simplify the notation that follows, write $g_{i,n}(\mu) := 1 + \lambda_{i, n}(\mu) \cdot (Z_i - \zeta_i(\mu))$ so that
  \[ \Kcal_{t,n}(\mu) \equiv \prod_{i=1}^t g_{i,n}(\mu). \]

Now, recall the definition of $\lambda_{i,n}(\mu)$,
\begin{align*}
  &\lambda_{t,n}(\mu) := \underbrace{\sqrt{\frac{2 \log(1/\alpha)}{\widehat \gamma_{t-1}^2 n}}}_{\eta} \land \frac{c}{\zeta_t(\mu)},~\text{where}\\
  &\widehat \gamma_t^2 := \frac{1/4 + \sum_{i=1}^t (Z_i - \widehat \zeta_i)^2}{t + 1},\ \widehat \zeta_t := \frac{1/2 + \sum_{i=1}^t Z_i}{t + 1}.
\end{align*}
Notice that $\lambda_{t,n}(\mu) \equiv \eta \land c/\zeta_t(\mu)$ is nonnegative and does not depend on $\mu$ except through the truncation with $ c / \zeta_t(\mu)$. In particular we can write $g_{i,n}(\mu)$ as
\begin{align*}
    g_{i,n}(\mu) &\equiv 1 + \left (\eta \land \frac{c}{\zeta_i(\mu)} \right) (Z_i -\zeta_i(\mu)) \\
    &= 1 + (\eta Z_i) \land \frac{cZ_i}{\zeta_i(\mu)} - \eta\zeta_i(\mu) \land c,
\end{align*}
which is a nonincreasing (and continuous) function of $\zeta_i(\mu)$. Since $\zeta_i(\mu) := r_i\mu + (1-r_i)/2$ is an increasing (and continuous) function of $\mu$, we have that $g_{i,n}(\mu)$ is nonincreasing and continuous in $\mu$.

Moreover, we have that $g_{i,n}(\mu) \geq 0$ by design, and the product of nonnegative nonincreasing functions is also nonnegative and nonincreasing, so $\Kcal_{t,n} = \prod_{i=1}^t g_{i,n}(\mu)$ is nonincreasing.

\paragraph{Step 2. $\Kcal_{t,n}(\mu^\star)$ is a $\QcalNPRRstarinf$-NM.}

  Recall the definition of $\Kcal_{t,n}(\mu^\star)$
  \begin{equation*}
    \Kcal_{t,n}(\mu^\star) := \prod_{i=1}^t \left [1 + \lambda_{i,n}(\mu^\star) \cdot (Z_i - \zeta_i(\mu^\star)) \right ]
  \end{equation*}

Then by Lemma~\ref{lemma:ldp-betting} with $D=1$ and $\theta_1 = 1$, $\Kcal_{t,n}(\mu^\star)$ is a $\QcalNPRRstarn$-NM.

\paragraph{Step 3. $\dot L_n$ is a lower \ci{}.}
First, note that by Lemma~\ref{lemma:ldp-betting}, we have that
\[ C_t := \{ \mu \in [0, 1] : \Kcal_{t,n}(\mu) < 1/\alpha \} \]
forms a $(1-\alpha)$-\cs{} for $\mu^\star$. In particular, define
\[ \bar L_{t,n} := \inf \{ \mu \in [0, 1] : \Kcal_{t,n}(\mu) < 1/\alpha \}.\]
Then, $[\bar L_{t,n}, 1]$ forms a $(1-\alpha)$-\cs{} for $\mu^\star$, meaning $\PP(\forall t,\ \mu^\star \geq L_{t,n} ) \geq 1-\alpha$, and hence
\[\PP\left (\mu^\star \geq \max_{1\leq t \leq n} L_{t,n}\right ) = \PP\left (\mu^\star \geq \dot L_n \right ) \geq 1-\alpha.\]
This completes the proof.
\end{proof}

\subsection{Proof of Theorem~\ref{theorem:ldp-dkelly}}
\label{proof:ldp-dkelly}

\LDPDKelly*

\begin{proof}
  The proof will proceed in two steps. First, we will invoke Lemma~\ref{lemma:ldp-betting} to justify that $\bar C_t^\GK$ indeed forms a \cs{}. Second and finally, we prove that $\bar C_t^\GK$ forms an interval almost surely for each $t \in \{1,2,\dots\}$ by showing that $\Kcal_t^\GK(\mu)$ is a convex function.
\paragraph{Step 1. $\bar C_t^\GK$ forms a \cs{}.} Notice that by Lemma~\ref{lemma:ldp-betting}, we have that $\Kcal_t^+(\mu^\star)$ and $\Kcal_t^-(\mu^\star)$ defined in Theorem~\ref{theorem:ldp-dkelly} are both test martingales. Consequently, their convex combination
\[\Kcal_t^\GK (\mu^\star) := \theta\Kcal_t^+(\mu^\star) + (1-\theta)\Kcal_t^-(\mu^\star)\]
is also a test martingale. Therefore, $\bar C_t^\GK := \left \{ \mu\in [0, 1] : \Kcal_t^\GK(\mu) < 1/\alpha \right \}$ indeed forms a $(1-\alpha)$-\cs{}.

\paragraph{Step 2. $\bar C_t^\GK$ is an interval almost surely.}
We will now justify that $\bar C_t^\GK$ forms an interval by proving that $\Kcal_t^\GK(\mu)$ is a convex function of $\mu \in [0, 1]$ and noting that the sublevel sets of convex functions are themselves convex.

To ease notation, define the multiplicands $g_i^+(\mu) := 1 + \lambda_{i, d}^+ \cdot (Z_i - \zeta_i(\mu))$ so that
\[ \Kcal_t^+ (\mu) \equiv \prod_{i=1}^t g_i(\mu). \]
Rewriting $g_i(\mu)$, we have that
\begin{equation*}
  \label{eq:rewrite-multiplicand}
  1 + \lambda_{i, d}^+ \cdot (Z_i - \zeta_i(\mu)) = 1 + \frac{d}{D+1} \cdot \left ( \frac{Z_i}{r_i\mu + (1-r_i)/2} - 1 \right ),
\end{equation*}
from which it is clear that each $g_i(\mu)$ is (a) nonnegative, (b) nonincreasing, and (c) convex in $\mu \in [0, 1]$. Now, note that properties (a)--(c) are preserved under products \citep[Section A.7]{waudby2020estimating}, meaning
\[ \Kcal_t^+(\mu) \equiv \prod_{i=1}^t g_i(\mu) \]
also satisfies (a)--(c).

A similar argument goes through for $\Kcal_t^-(\mu)$, except that this function is nonincreasing rather than nondecreasing, but it is nevertheless nonnegative and convex. Since convexity of functions is preserved under convex combinations, we have that
\[   \Kcal_t^\GK(\mu) := \theta \Kcal_t^+(\mu) + (1-\theta)\Kcal_t^-(\mu) \]
is a convex function of $\mu \in [0, 1]$.

Finally, observe that $\bar C_t^\GK$ is the $(1/\alpha)$-sublevel set of $\Kcal_t^\GK(\mu)$ by definition, and the sublevel sets of convex functions are convex. Therefore, $\bar C_t^\GK$ is an interval almost surely. This completes the proof of Theorem~\ref{theorem:ldp-dkelly}.

\end{proof}

\subsection{Proof of \cref{proposition:ldp-eb-cs}}
\label{proof:ldp-eb-cs}

\NprrEbCs*

\begin{proof}
The proof proceeds in two steps. First, we derive a sub-exponential NSM\@. Second and finally, we apply Ville's inequality to the NSM and invert it to obtain $\infseqt{\dot L_t^\EB}$.

\paragraph{Step 1: Deriving a sub-exponential nonnegative supermartingale.}

Consider the process $\infseqt{M_t^\EB(\mu^\star)}$ given by
\begin{equation}
 M_t^\EB(\mu^\star) := \prod_{i=1}^t \exp \left \{ \lambda_i \cdot (Z_i -\zeta_i(\mu^\star)) - 4(Z_i- \widehat \zeta_{i-1}(\mu^\star))^2\psi_E(\lambda_i) \right \},
\end{equation}
and defined as $M_0^\EB(\mu^\star) \equiv 1$. Clearly, $M_t^\EB > 0$, and hence in order to show that $\infseqt{M_t^\EB(\mu^\star)}$ is an NSM, it suffices to show that $\EE(M_t^\EB(\mu^\star) \mid \Zcal_{t-1}) = M_{t-1}^\EB(\mu^\star)$ for each $t \geq 1$.
To this end, we have that
\begin{align}
  \EE(M_t^\EB(\mu^\star) \mid \Zcal_{t-1}) &= \EE \left ( \prod_{i=1}^t \exp \left \{ \lambda_i \cdot (Z_i -\zeta_i(\mu^\star)) - 4(Z_i- \widehat \zeta_{i-1}(\mu^\star))^2\psi_E(\lambda_i) \right \} \mid \Zcal_{t-1} \right ) \\
  &= M_{t-1}^\EB(\mu^\star) \underbrace{\EE \left ( \exp \left \{ \lambda_t \cdot (Z_t -\zeta_t(\mu^\star)) - 4(Z_t- \widehat \zeta_{t-1}(\mu^\star))^2\psi_E(\lambda_t) \right \} \mid \Zcal_{t-1} \right )}_{(\star)},
\end{align}
and hence it suffices to show that $(\star) \leq 1$. Following the proof of \citet[Theorem 2]{waudby2020estimating}, denote 
\begin{equation}
  Y_{t} := Z_{t} - \zeta_t(\mu^\star)\quad\text{and}\quad\delta_t := \widehat \zeta_t(\mu^\star) - \zeta_t(\mu^\star).
\end{equation}
Note that $\EE(Y_t \mid \Zcal_{t-1}) = 0$.
and thus it suffices to prove that for any $[0, 1)$-bounded, $\Zcal_{t-1}$-measurable $\lambda_{t}$, 
\[ \EE \left (\exp \Bigg \{ \lambda_{t} Y_{t} - 4(Y_{t} - \delta_{t-1})^2\psi_{E}(\lambda_{t}) \Bigg \}  \Bigm | \Fcal_{t-1} \right ) \leq 1.\]
Indeed, in the proof of~\citet[Proposition 4.1]{fan2015exponential}, $\exp\{\xi\lambda  - 4\xi^2 \psi_{E} (\lambda)\} \leq 1 + \xi \lambda$ for any $\lambda \in [0, 1)$ and $\xi \geq -1$. Setting $\xi := Y_{t} - \delta_{t-1} = Z_{t} - \widehat \zeta_{t-1}(\mu^\star)$,
\begin{align*}
	&\EE \left (\exp \Bigg \{ \lambda_{t} Y_{t} - 4(Y_{t} - \delta_{t-1})^2 \psi_{E}(\lambda_{t} ) \Bigg \}  \Bigm | \Zcal_{t-1} \right )\\
	=\ &\EE \left ( \exp \Big \{ \lambda_{t} (Y_{t}-\delta_{t-1}) - 4(Y_{t} - \delta_{t-1})^2 \psi_{E}(\lambda_{t}) \Big \}  \bigm | \Zcal_{t-1} \right ) \exp(\lambda_{t} \delta_{t-1})\\
	    \leq\ & \EE \left ( 1 + (Y_{t} - \delta_{t-1} )\lambda_{t} \mid \Zcal_{t-1} \right ) \exp(\lambda_{t} \delta_{t-1}) \overset{(i)}{=} \EE \left ( 1 - \delta_{t-1} \lambda_{t} \mid \Zcal_{t-1} \right ) \exp(\lambda_{t} \delta_{t-1}) \overset{(ii)}{\leq} 1,
\end{align*}
where $(i)$ follows from the fact that $Y_{t}$ is conditionally mean zero,
and $(ii)$ follows from the inequality $1-x \leq \exp(-x)$ for all $x \in \mathbb R$. This completes the proof of Step 1.

\paragraph{Step 2: Applying Ville's inequality and inverting.}

Now that we have established that $\infseqt{M_t^\EB(\mu^\star)}$ is an NSM, we have by Ville's inequality \citep{ville1939etude} that
\begin{equation}
  \PP(\exists t \geq 1 : M_t^\EB(\mu^\star) \geq 1/\alpha) \leq \alpha,
\end{equation}
or equivalently, $\PP(\forall t\geq 1,\ M_t^\EB(\mu^\star) < 1/\alpha) \geq 1-\alpha$. Consequently, we have that with probability at least $(1-\alpha)$,
\begin{align}
  & M_t^\EB(\mu^\star) > 1/\alpha \prod_{i=1}^t \exp \left \{ \lambda_i \cdot (Z_i -\zeta_i(\mu^\star)) - 4(Z_i- \widehat \zeta_{i-1}(\mu^\star))^2\psi_E(\lambda_i) \right \} < 1/\alpha \\
                                \iff\ &\sum_{i=1}^t \lambda_i (Z_i - \zeta_i(\mu^\star)) - \sum_{i=1}^t 4(Z_i - \widehat \zeta_{i-1}(\mu^\star))^2 \psi_E(\lambda_i) < \log (1/\alpha) \\
  \iff\ &  \sum_{i=1}^t \lambda_i\zeta_i(\mu^\star) > \sum_{i=1}^t \lambda_i Z_i - \sum_{i=1}^t 4(Z_i - \widehat \zeta_{i-1}(\mu^\star))^2 \psi_E(\lambda_i) - \log(1/\alpha) \\
  \iff\ & \sum_{i=1}^t \lambda_i \left (r_i \mu^\star + \frac{1-r_i}{2} \right ) > \sum_{i=1}^t \lambda_i Z_i - \sum_{i=1}^t 4(Z_i - \widehat \zeta_{i-1}(\mu^\star))^2 \psi_E(\lambda_i) - \log(1/\alpha) \\
  \iff\ & \mu^\star\sum_{i=1}^t \lambda_ir_i + \sum_{i=1}^t \lambda_i \frac{1-r_i}{2} > \sum_{i=1}^t \lambda_i Z_i - \sum_{i=1}^t 4(Z_i - \widehat \zeta_{i-1}(\mu^\star))^2 \psi_E(\lambda_i) - \log(1/\alpha) \\
  \iff\ & \mu^\star> \underbrace{\frac{\sum_{i=1}^t \lambda_i (Z_i - (1-r_i) / 2)}{\sum_{i=1}^t \lambda_i r_i }}_{\widehat \mu_t(\lambda_1^t)} - \underbrace{\frac{\sum_{i=1}^t 4(Z_i - \widehat \zeta_{i-1}(\mu^\star))^2 \psi_E(\lambda_i) + \log(1/\alpha)}{\sum_{i=1}^t \lambda_i r_i}}_{\widebar B_t(\lambda_1^t)}, \\
\end{align}
and hence $\widehat \mu_t(\lambda_1^t) - \widebar B_t(\lambda_1^t)$ forms a lower $(1-\alpha)$-\cs{}. This completes the proof of~\cref{proposition:ldp-eb-cs}.

\end{proof}

\subsection{Proof of Proposition~\ref{proposition:a/b-testing}}\label{proof:a/b-testing}
\ABTesting*
\begin{proof}
  In order to show that $\widetilde E^\Delta_t$ is an $e$-process, it suffices to find an NSM that almost surely upper bounds $\widetilde E^\Delta_t$ for each $t$ under the weak null $\widetilde \Hcal_0$: $\widetilde \Delta_t \leq 0$. As such, the proof proceeds in three steps. First, we justify why the one-sided NSM \eqref{eq:one-sided-mixture-NSM} given by Proposition~\ref{proposition:one-sided-cs-mean-so-far} is a nonincreasing function of $\widetilde \mu_t$. Second, we adapt the aforementioned NSM to the A/B testing setup to obtain $M_t^\Delta(\widetilde \Delta_t)$ and note that it is a nonincreasing function of $\widetilde \Delta_t$. Third and finally, we observe that $\widetilde E^\Delta_t := M_t^\Delta(0)$ is upper bounded by $M_t^\Delta(\widetilde \Delta_t)$ under the weak null, thus proving the desired result.

  \paragraph{Step 1: The one-sided NSM \eqref{eq:one-sided-mixture-NSM} is nonincreasing in $\widetilde \mu_t$.}
  Recall the $\lambda$-indexed process from Step 1 of the proof of Proposition~\ref{proposition:one-sided-cs-mean-so-far} given by
  \begin{equation*}
      M_t(\lambda) := \prod_{i=1}^t \exp \left \{ \lambda(Z_i - \zeta(\mu_i)) - \lambda^2 / 8 \right \},
  \end{equation*}
  which can be rewritten as
  \begin{equation*}
      M_t(\lambda) := \exp \left \{ S_t(\widetilde \mu_t) - \lambda^2 / 8 \right \},
  \end{equation*}
  where $S_t(\widetilde \mu_t) := \sum_{i=1}^t (Z_i - (1-r)/2) - tr\widetilde \mu_t$ and $\widetilde \mu_t := \frac{1}{t}\sum_{i=1}^t \mu_t$.
  In particular, notice that $M_t(\lambda)$ is a nonincreasing function of $\widetilde \mu_t$ for any $\lambda \geq 0$, and hence we also have that
  \begin{equation*}
    M_t(\lambda) f_{\rho^2}^+(\lambda)
  \end{equation*}
  is a nonincreasing function of $\widetilde \mu_t$ where $f_{\rho^2}^+ (\lambda)$ is the density of a folded Gaussian distribution given in \eqref{eq:folded-gaussian-pdf}, by virtue of $f^+_{\rho^2}(\lambda)$ being everywhere nonnegative, and 0 for all $\lambda < 0$. Finally, by Step 2 of the proof of Proposition~\ref{proposition:one-sided-cs-mean-so-far}, we have that
\begin{equation*}
\int_{\lambda} M_t(\lambda) f_{\rho^2}^+(\lambda) d\lambda \equiv \frac{2}{\sqrt{t\rho^2 / 4 + 1}} \exp \left \{ \frac{\rho^2 S_t(\widetilde \mu_t)^2}{2(t\rho^2 / 4 + 1)} \right \} \Phi\left( \frac{\rho S_t(\widetilde \mu_t)}{\sqrt{t\rho^2 / 4 + 1}} \right )
\end{equation*}
is nonincreasing in $\widetilde \mu_t$, and forms an NSM when evaluated at the true means $\infseq{\widetilde \mu^\star}t1$.

  \paragraph{Step 2: Applying Step 1 to the A/B testing setup to yield $M_t^\Delta(\widetilde \delta_t)$.}

  Adapting Step 1 to the setup described in Proposition~\ref{proposition:a/b-testing}, let $\delta_1, \delta_2, \dots \in \RR$ and let $\widetilde \delta_t := \sum_{i=1}^t\delta_i$. Define the partial sum process,
  \begin{equation*}
    S_t^\Delta(\widetilde \delta_t) := \sum_{i=1}^t (\psi_i - (1-r)/2) - rt \frac{\widetilde \delta_t +\frac{1}{1-\pi}}{\frac{1}{\pi} + \frac{1}{1-\pi}}
  \end{equation*}
  and the associated process,
  \begin{equation*}
    M_t^\Delta(\widetilde \delta_t) := \frac{2}{\sqrt{t\beta^2 + 1}} \exp \left \{ \frac{2 \beta^2 S_t^\Delta(\widetilde \delta_t)^2}{t\beta^2 + 1} \right \} \Phi\left( \frac{2\beta S_t^\Delta(\widetilde \delta_t)}{\sqrt{t\beta^2 + 1}} \right ),
  \end{equation*}
  where we have substituted $\rho:=2\beta > 0 $.
  Notice that by construction, $\psi_t$ is a [0, 1]-bounded random variable with mean $r\frac{\widetilde \Delta_t + 1/(1-\pi)}{1/\pi + 1/(1-\pi)} + (1-r)/2$, so $M_t^\Delta(\widetilde \Delta_t)$ forms an NSM\@. We are now ready to invoke the main part of the proof.

  \paragraph{Step 3: The process $\widetilde E^\Delta_t$ is upper-bounded by the NSM $M_t^\Delta(\widetilde \Delta_t)$.}
  Define the nonnegative process $\infseq {\widetilde E^\Delta}t0$ starting at one given by
  \begin{equation*}
    \widetilde E^\Delta_t := M_t^\Delta(0) \equiv \frac{2}{\sqrt{t\beta^2 + 1}} \exp \left \{ \frac{2 \beta^2 S_t^\Delta(0)^2}{t\beta^2 + 1} \right \} \Phi\left( \frac{2\beta S_t^\Delta(0)}{\sqrt{t\beta^2 + 1}} \right ).
  \end{equation*}
  By Steps 1 and 2, we have that $\widetilde E^\Delta_t \leq M_t^\Delta(\widetilde \Delta_t)$ for any $\widetilde \Delta_t \leq 0$, and since $M_t^\Delta(\widetilde \Delta_t)$ is an NSM, we have that
  $\infseq {\widetilde E^\Delta}t0$ forms an $e$-process for $\Hcal_0$: $\widetilde \Delta_t \leq 0$. This completes the proof.

\end{proof}

\subsection{Proof of Proposition~\ref{proposition:one-sided-cs-mean-so-far}}\label{proof:one-sided-cs-mean-so-far}
\OneSidedCSMeanSoFar*
\begin{proof}
The proof begins similar to that of Theorem~\ref{theorem:two-sided-cs-mean-so-far} but with a slightly modified mixing distribution, and proceeds in four steps. First, we derive a sub-Gaussian NSM indexed by a parameter $\lambda \in \RR$ identical to that of Theorem~\ref{theorem:two-sided-cs-mean-so-far}. Second, we mix this NSM over $\lambda$ using a folded Gaussian density, and justify why the resulting process is also an NSM\@. Third, we derive an implicit lower \cs{} for $\infseq{\widetilde \mu^\star}{t}{1}$. Fourth and finally, we compute a closed-form lower bound for the implicit \cs{}.

\paragraph{Step 1: Constructing the $\lambda$-indexed NSM.} This is exactly the same step as Step 1 in the proof of Theorem~\ref{theorem:two-sided-cs-mean-so-far}, found in Section~\ref{proof:two-sided-cs-mean-so-far}. In summary, we have that for any $\lambda \in \RR$,
\begin{equation}
    % \label{eq:pre-mix-hoeffding-nsm}
    M_t(\lambda) := \prod_{i=1}^t \exp \left \{ \lambda(Z_i - \zeta(\mu_i^\star)) - \lambda^2 / 8 \right \},
\end{equation}
with $M_0(\lambda) \equiv 0$ forms an NSM with respect to the private filtration $\Zcal$.

\paragraph{Step 2: Mixing over $\lambda \in (0, \infty)$ to obtain a mixture NSM.} Let us now construct a one-sided sub-Gaussian mixture NSM\@. First, note that the mixture of an NSM with respect to a probability density is itself an NSM \citep{robbins1970statistical,howard2020time} and is a simple consequence of Fubini's theorem. For our purposes, we will consider the density of a \emph{folded Gaussian} distribution with location zero and scale $\rho^2$. In particular, if $\Lambda \sim N(0, \rho^2)$, let $\Lambda_+ := |\Lambda|$ be the folded Gaussian. Then $\Lambda_+$ has a probability density function $f_{\rho^2}^+(\lambda)$ given by

\begin{equation}
  \label{eq:folded-gaussian-pdf}
  f_{\rho^2}^+ (\lambda) := \1(\lambda > 0)\frac{2}{\sqrt{2\pi \rho^2}} \exp \left \{ \frac{-\lambda^2}{2\rho^2} \right \}.
\end{equation}
Note that $f_{\rho^2}^+$ is simply the density of a mean-zero Gaussian with variance $\rho^2$, but truncated from below by zero, and multiplied by two to ensure that $f_{\rho^2}^+(\lambda)$ integrates to one.

Then, since mixtures of NSMs are themselves NSMs, the process $\infseq Mt0$ given by
\begin{equation}
% \label{eq:mixture-implicit}
    M_t := \int_{\lambda} M_t(\lambda) f_{\rho^2}^+(\lambda) d\lambda
\end{equation}
is an NSM. We will now find a closed-form expression for $M_t$. Many of the techniques used to derive the expression for $M_t$ are identical to Step 2 of the proof of Theorem~\ref{theorem:two-sided-cs-mean-so-far}, but we repeat them here for completeness. To ease notation, define the partial sum ${S_t^\star := \sum_{i=1}^t (Z_i - \zeta(\mu_i^\star))}$. Writing out the definition of $M_t$, we have

\begin{align*}
    M_t &:= \int_{\lambda} \prod_{i=1}^t \exp\left \{ \lambda (Z_i - \zeta(\mu_i^\star)) - \lambda^2/8 \right \}f_{\rho^2}^+(\lambda) d\lambda \\
    &= \int_{\lambda} \exp\left \{ \lambda \underbrace{\sum_{i=1}^t (Z_i - \zeta(\mu_i^\star))}_{S_t^\star} - t\lambda^2/8 \right \}f_{\rho^2}^+(\lambda) d\lambda \\
    &= \int_{\lambda} \1(\lambda > 0)\exp\left \{ \lambda S_t^\star - t\lambda^2/8 \right \} \frac{2}{\sqrt{2\pi \rho^2}} \exp \left \{ \frac{-\lambda^2}{2\rho^2} \right \}d\lambda \\
    &= \frac{2}{\sqrt{2\pi \rho^2}}\int_{\lambda}\1(\lambda > 0) \exp\left \{ \lambda S_t^\star - t\lambda^2/8 \right \} \exp \left \{ \frac{-\lambda^2}{2\rho^2} \right \}d\lambda \\
    &= \frac{2}{\sqrt{2\pi \rho^2}} \int_\lambda \1(\lambda > 0)\exp \left \{ \lambda S_t^\star - \frac{\lambda^2 (t\rho^2/4 + 1)}{2 \rho^2}\right \} d\lambda \\
    &= \frac{2}{\sqrt{2\pi \rho^2}} \int_\lambda \1(\lambda > 0)\exp \left \{ \frac{-\lambda^2 (t\rho^2/4 + 1) + 2\lambda \rho^2 S_t^\star }{2\rho^2} \right \} d\lambda \\
    &= \frac{2}{\sqrt{2\pi \rho^2}} \int_\lambda\1(\lambda > 0) \underbrace{\exp \left \{ \frac{-a(\lambda^2 - \frac{b}{a} 2\lambda) }{2\rho^2} \right \}}_{(\star)} d\lambda,
\end{align*}
where we have set $a:= t\rho^2/4 + 1$ and $b := \rho^2 S_t^\star$. Completing the square in $(\star)$, we have that
\begin{align*}
  \exp \left \{ \frac{-a(\lambda^2 - \frac{b}{a} 2\lambda) }{2\rho^2} \right \} &= \exp \left \{ \frac{-\lambda^2 + 2\lambda \frac{b}{a} + \left ( \frac{b}{a} \right )^2 - \left ( \frac{b}{a} \right )^2 }{2 \rho^2 /a} \right \}\\
                                                                                &= \exp \left \{ \frac{-(\lambda - b/a)^2}{2\rho^2/a} + \frac{a \left ( b/a \right )^2}{2\rho^2} \right \} \\
                                                                                &= \exp \left \{ \frac{-(\lambda - b/a)^2}{2\rho^2/a} \right \} \exp \left \{  \frac{b^2}{2a\rho^2} \right \}.
\end{align*}
Plugging this back into our derivation of $M_t$ and multiplying the entire quantity by $a^{-1/2} / a^{-1/2}$, we have
\begin{align*}
  M_t &= \frac{2}{\sqrt{2\pi \rho^2}} \int_\lambda\1(\lambda > 0) \underbrace{\exp \left \{ \frac{-a(\lambda^2 + \frac{b}{a} 2\lambda) }{2\rho^2} \right \}}_{(\star)} d\lambda\\
  &= \frac{2}{\sqrt{2\pi \rho^2}} \int_\lambda\1(\lambda > 0) \exp \left \{ \frac{-(\lambda - b/a)^2}{2\rho^2/a} \right \} \exp \left \{  \frac{b^2}{2a\rho^2} \right \} d\lambda\\
  &= \frac{2}{\sqrt{a}}\exp \left \{  \frac{b^2}{2a\rho^2} \right \}  \underbrace{\int_\lambda\1(\lambda > 0) \frac{1}{\sqrt{2\pi \rho^2/a}}\exp \left \{ \frac{-(\lambda - b/a)^2}{2\rho^2/a} \right \} d\lambda}_{(\star \star)}.
\end{align*}
Now, notice that $(\star \star) = \PP(N(b/a, \rho^2 / a) \geq 0)$, which can be rewritten as $\Phi(b/\rho\sqrt{a})$, where $\Phi$ is the CDF of a standard Gaussian. Putting this all together and plugging in $a = t\rho^2 / 4 + 1$ and $b = \rho^2 S_t^\star$, we have the following expression for $M_t$,
\begin{align}
  M_t &= \frac{2}{\sqrt{a}} \exp \left \{ \frac{b^2}{2a\rho^2} \right \} \Phi\left( \frac{b}{\rho \sqrt{a}} \right ) \nonumber\\
  &= \frac{2}{\sqrt{t\rho^2 / 4 + 1}} \exp \left \{ \frac{\rho^4 (S_t^\star)^2}{2(t\rho^2 / 4 + 1)\rho^2} \right \}  \Phi\left( \frac{\rho^2 S_t^\star}{\rho \sqrt{t\rho^2 / 4 + 1}} \right ) \nonumber\\
  &= \frac{2}{\sqrt{t\rho^2 / 4 + 1}} \exp \left \{ \frac{\rho^2 (S_t^\star)^2}{2(t\rho^2 / 4 + 1)} \right \} \Phi\left( \frac{\rho S_t^\star}{\sqrt{t\rho^2 / 4 + 1}} \right ).\label{eq:one-sided-mixture-NSM}
  % &= \frac{2}{\sqrt{t\rho^2 / 4 + 1}} \exp \left \{ \frac{\rho^2 \left [\sum_{i=1}^t (Z_i - \zeta(\mu^\star_i))\right ]^2}{2(t\rho^2 / 4 + 1)} \right \} \Phi\left( \frac{\rho \left [\sum_{i=1}^t (Z_i - \zeta(\mu^\star_i))\right ]}{\sqrt{t\rho^2 / 4 + 1}} \right ),
  %   \label{eq:mixtureNSM}
\end{align}
% where in the last line we simply plugged in $S_t^\star \equiv \sum_{i=1}^t (Z_i - \zeta(\mu_i^\star))$.

\paragraph{Step 3: Deriving a $(1-\alpha)$-lower \cs{} $(L_t')_{t=1}^\infty$ for $(\widetilde \mu_t^\star)_{t=1}^\infty$.} Now that we have computed the mixture NSM $\infseq Mt0$, we apply Ville's inequality to it and ``invert'' a family of processes --- one of which is $M_t$ --- to obtain an \emph{implicit} lower \cs{} (we will further derive an \emph{explicit} lower \cs{} in Step 4).

First, let $\infseq \mu t1$ be an arbitrary real-valued process --- i.e. not necessarily equal to $\infseq {\mu^\star} t1$ --- and define their running average $\widetilde \mu_t := \frac{1}{t} \sum_{i=1}^t \mu_i$. Define the partial sum process in terms of $\infseq {\widetilde \mu}t1$,
\[ S_t(\widetilde \mu_t) := \sum_{i=1}^t Z_i - tr\widetilde \mu_t - t(1-r)/2, \]
and the resulting nonnegative process,
\begin{equation}
  \label{eq:e-process-mean-so-far}
M_t(\widetilde \mu_t) := \frac{2}{\sqrt{t\rho^2 / 4 + 1}} \exp \left \{ \frac{\rho^2 S_t(\widetilde \mu_t)^2}{2(t\rho^2 / 4 + 1)} \right \} \Phi\left( \frac{\rho S_t(\widetilde \mu_t)}{\sqrt{t\rho^2 / 4 + 1}} \right ).
\end{equation}
Notice that if $\infseq {\widetilde \mu} t1 = \infseq {\widetilde \mu^\star} t1$, then $S_t(\widetilde \mu^\star_t) = S_t^\star$ and $M_t(\widetilde \mu_t^\star) = M_t$ from Step 2. Importantly, $(M_t(\widetilde \mu_t^\star))_{t=0}^\infty$ is an NSM\@. Indeed, by Ville's inequality, we have
\begin{equation}
  \label{eq:ville-lower-cs-mean-so-far}
  \PP(\exists t : M_t(\widetilde \mu_t^\star ) \geq 1/\alpha ) \leq \alpha.
\end{equation}
We will now ``invert'' this family of processes to obtain an implicit lower boundary given by
\begin{equation}
  \label{eq:lower-cs-mean-so-far-implicit}
  L_t' := \inf \{\widetilde \mu_t : M_t(\widetilde \mu_t) < 1/\alpha \},
\end{equation}
and justify that $(L_t')_{t=1}^\infty$ is indeed a $(1-\alpha)$-lower \cs{} for $\widetilde \mu_t^\star$. Writing out the probability of miscoverage at any time $t$, we have
\begin{align*}
  \PP(\exists t : \widetilde \mu_t^\star < L_t') &\equiv \PP\left (\exists t : \widetilde \mu_t^\star < \inf_{\widetilde \mu_t} \{ M_t(\widetilde \mu_t) < 1/\alpha \} \right) \\
  &= \PP\left (\exists t : M_t(\widetilde \mu_t^\star) \geq 1/\alpha \right) \\
  &\leq \alpha,
\end{align*}
where the last line follows from Ville's inequality applied to $(M_t(\widetilde \mu_t^\star))_{t=0}^\infty$. In particular, $L_t'$ forms a $(1-\alpha)$-lower \cs{}, meaning
\[ \PP(\forall t,\ \widetilde \mu_t \geq L_t') \geq 1-\alpha. \]

\paragraph{Step 4: Obtaining a closed-form lower bound $(\widetilde L_t)_{t=1}^\infty$ for $(L_t')_{t=1}^\infty$.}
The lower \cs{} of Step 3 is simple to evaluate via line- or grid-searching, but a closed-form expression may be desirable in practice, and for this we can compute a sharp lower bound on $L_t'$.

First, take notice of two key facts:
\begin{enumerate}[(a)]
  \item When $\widetilde \mu_t = \frac{1}{tr}\sum_{i=1}^t Z_i - (1-r)/2r$, we have that $S_t(\widetilde \mu_t) = 0$ and hence $M_t(\widetilde \mu_t) < 1$, and
  \item $S_t(\widetilde \mu_t)$ is a strictly decreasing function of $\widetilde \mu_t \leq \frac{1}{tr}\sum_{i=1}^t Z_i - (1-r)/2r$, and hence so is $M_t(\widetilde \mu_t)$.
\end{enumerate}
 Property (a) follows from the fact that $\Phi(0) = 1/2$, and that $\sqrt{t \rho^2 / 4 + 1} > 1$ for any $\rho > 0$. Property (b) follows from property (a) combined with
 the definitions of $S_t(\cdot)$,
\[ S_t(\widetilde \mu_t) := \sum_{i=1}^t Z_i - tr\widetilde \mu_t - t(1-r)/2, \]
and of $M_t(\cdot)$,
\[ M_t(\widetilde \mu_t) := \frac{2}{\sqrt{t\rho^2 / 4 + 1}} \exp \left \{ \frac{\rho^2 S_t(\widetilde \mu_t)^2}{2(t\rho^2 / 4 + 1)} \right \} \Phi\left( \frac{\rho S_t(\widetilde \mu_t)}{\sqrt{t\rho^2 / 4 + 1}} \right ), \]
In particular, by facts (a) and (b), the infimum in \eqref{eq:lower-cs-mean-so-far-implicit} must be attained when $S_t(\cdot) \geq 0$. That is, \begin{equation}
  \label{eq:sum_positive_at_arginf}
  S_t(L_t') \geq 0.
\end{equation}
% , where
% \[ \widetilde \mu_t^\mathrm{inf} := \arginf_{\widetilde \mu_t}\{ M_t(\widetilde \mu_t) < 1/\alpha \}. \]
Using \eqref{eq:sum_positive_at_arginf} combined with the inequality $1 - \Phi(x) \leq \exp \{ -x^2 / 2\}$ (a straightforward consequence of the Cram\'er-Chernoff technique), we have the following lower bound on $M_t(L_t')$:
\begin{align*}
  M_t(L_t') &= \frac{2}{\sqrt{t\rho^2 / 4 + 1}} \exp \left \{ \frac{\rho^2 S_t(L_t')^2}{2(t\rho^2 / 4 + 1)} \right \} \Phi\left( \frac{\rho S_t(L_t')}{\sqrt{t\rho^2 / 4 + 1}} \right ) \\
            &\geq \frac{2}{\sqrt{t\rho^2 / 4 + 1}} \exp \left \{ \frac{\rho^2 S_t(L_t')^2}{2(t\rho^2 / 4 + 1)} \right \} \left ( 1 - \exp\left \{ -\frac{\rho^2 S_t(L_t')^2}{2(t\rho^2 / 4 + 1)} \right \} \right ) \\
            &= \frac{2}{\sqrt{t\rho^2 / 4 + 1}} \left ( \exp \left \{ \frac{\rho^2 S_t(L_t')^2}{2(t\rho^2 / 4 + 1)} \right \}  - 1 \right )\\
            &=: \widecheck M_t(L_t').
\end{align*}
Finally, the above lower bound on $M_t(L_t')$ implies that $1/\alpha \geq M_t(L_t') \geq \widecheck M_t(L_t')$ which yields the following lower bound on $L_t'$:
\begin{align*}
  \widecheck M_t(L_t') \leq 1/\alpha &\iff \frac{2}{\sqrt{t\rho^2 / 4 + 1}} \left ( \exp \left \{ \frac{\rho^2 S_t(L_t')^2}{2(t\rho^2 / 4 + 1)} \right \}  - 1 \right ) \leq 1/\alpha \\
  &\iff \exp \left \{ \frac{\rho^2 S_t(L_t')^2}{2(t\rho^2 / 4 + 1)} \right \} \leq 1 + \frac{\sqrt{t\rho^2 / 4 + 1}}{2\alpha}\\
  &\iff \frac{\rho^2 S_t(L_t')^2}{2(t\rho^2 / 4 + 1)} \leq \log \left (1 + \frac{\sqrt{t\rho^2 / 4 + 1}}{2\alpha} \right ) \\
  &\iff S_t(L_t') \leq \sqrt{\frac{2(t\rho^2 / 4 + 1)}{\rho^2} \log \left (1 + \frac{\sqrt{t\rho^2 / 4 + 1}}{2\alpha} \right ) }\\
  &\iff \sum_{i=1}^t Z_i - tr L_t' - t(1-r)/2 \leq \sqrt{\frac{2(t\rho^2 / 4 + 1)}{\rho^2} \log \left (1 + \frac{\sqrt{t\rho^2 / 4 + 1}}{2\alpha} \right ) }\\
  &\iff tr L_t'  \geq \sum_{i=1}^t Z_i- t(1-r)/2 -\sqrt{\frac{2(t\rho^2 / 4 + 1)}{\rho^2} \log \left (1 + \frac{\sqrt{t\rho^2 / 4 + 1}}{2\alpha} \right ) }\\
  &\iff L_t' \geq \frac{\sum_{i=1}^t (Z_i - (1-r)/2)}{tr} -\sqrt{\frac{2(t\rho^2 / 4 + 1)}{(tr\rho)^2} \log \left (1 + \frac{\sqrt{t\rho^2 / 4 + 1}}{2\alpha} \right ) }\\
  &\iff L_t' \geq \underbrace{\frac{\sum_{i=1}^t (Z_i - (1-r)/2)}{tr} -\sqrt{\frac{t\beta^2 + 1}{2(tr\beta)^2} \log \left (1 + \frac{\sqrt{t\beta^2 + 1}}{2\alpha} \right ) }}_{\widetilde L_t},
\end{align*}
where we set $\rho = 2\beta$ in the right-hand side of the final inequality. This precisely yields $\widetilde L_t$ as given in Proposition~\ref{proposition:one-sided-cs-mean-so-far}, completing the proof.

\end{proof}

\subsection{Proof of \cref{theorem:a/b-test}}\label{proof:variance-adaptive-a/b-test}
\AdaptiveAbTest*
\begin{proof}

  The proof proceeds in three steps and follows a similar form to the proof of \citet[Theorem 2]{waudby2022anytime}. First, we show that a collection of processes (indexed by $\lambda \in (0, 1)$) each form $\QcalNPRRstarinf$-NSMs with respect to the private filtration $\Zcal$. Second, we mix over $\lambda \in (0, 1)$ using the truncated gamma density to obtain the NSM obtained in \cref{theorem:a/b-test}. Third and finally, we ``invert'' the aforementioned NSM to obtain the \lpcs{} of \cref{theorem:a/b-test}.

  \paragraph{Step 1: Showing that $M_t^\lambda$ forms an NSM.}
  Consider the process $\infseqt{M_t^\lambda}$ given by
  \begin{equation}
    M_t^\lambda := \prod_{i=1}^t \exp \left \{ \lambda (\theta_i - \Delta_i) - \psi_E(\lambda) (\theta_i - \widehat \theta_{i-1})^2/4\right \}.
  \end{equation}
  We will show that $\infseqt{M_t^\lambda}$ forms an NSM\@.
First, note that $M_0^\lambda \equiv 1$ by construction, and $M_t^\lambda$ is always positive. It remains to show that $M_t^\lambda$ forms a supermartingale. Writing out the conditional expectation of $M_t^\lambda$ given $\Zcal_{t-1}$, we have that
\begin{align}
    \EE(M_t^\lambda \mid \Zcal_{t-1}) = M_{t-1}^\lambda \underbrace{\EE \left [ \exp \left \{ \lambda (\theta_t - \Delta_t) - \psi_E(\lambda) (\theta_t - \widehat \theta_{t-1})^2/4\right \} \mid \Zcal_{t-1} \right ]}_{(\dagger)},
\end{align}
and hence it suffices to prove that $(\dagger) \leq 1$.
Denote for the sake of succinctness,
\[\xi_{t} := \theta_t - \Delta_t ~~~\text{ and }~~~ \eta_t := \widehat \theta_{t-1} - \Delta_t, \]
and note that $\EE(\xi_t \mid \Zcal_{t-1}) = 0$.
Using the proof of \citet[Proposition 4.1]{fan2015exponential}, we have that $\exp\{b\lambda  - b^2 \psi_{E} (\lambda)\} \leq 1 + b \lambda$ for any $\lambda \in [0, 1)$ and $b \geq -1$. Noticing that $(\theta_t - \widehat \theta_{t-1})/2 \geq -1$ and setting $b := (\xi_{t} - \eta_{t})/2 = (\theta_t- \widehat \theta_{t-1})/2$, we have that
\begin{align*}
	&\EE \left [\exp \left \{ \lambda \xi_{t} - \psi_{E}(\lambda )(\xi_{t} - \eta_{t})^2 /4 \right \}  \Bigm | \Zcal_{t-1} \right ]\\
	=\ &\EE \left [\exp \left \{ \lambda (\xi_{t}-\eta_{t}) - \psi_{E}(\lambda)(\xi_{t} - \eta_{t})^2 /4 \right \}  \bigm | \Zcal_{t-1} \right ] \exp(\lambda \eta_{t})\\
  \leq \ &\EE\left  [1 + (\xi - \eta_{t} )\lambda \mid \Zcal_{t-1} \right ]\exp(\lambda \eta_{t}) \\
  = \ &\EE\left [1 - \eta_{t} \lambda \mid \Zcal_{t-1} \right ]\exp(\lambda \eta_{t}) \leq 1,
\end{align*}
where the last line follows from the fact that $\xi_{t}$ is conditionally mean zero
and the inequality $1-x \leq \exp(-x)$ for all $x \in \mathbb R$. This completes Step 1 of the proof.

\paragraph{Step 2: Mixing over $\lambda$ using the truncated gamma density.}
For any distribution $F$ on $(0, 1)$,
\begin{align}
    \widetilde M_t^\EB := \int_{\lambda \in (0, 1)} M_t^\lambda \dd F(\lambda)
\end{align}
forms a test supermartingale by Fubini's theorem. In particular, we will use the truncated gamma density $f(\lambda)$ given by
\begin{equation}
f(\lambda) = \frac{\rho^\rho e^{-\rho \left(1 - \lambda\right)} \left(1 - \lambda\right)^{\rho - 1}}{\Gamma(\rho) - \Gamma(\rho, \rho)}, 
\end{equation}
as the mixing density. Writing out $\widetilde M_t^\EB \equiv \widetilde M_t^\EB(\widetilde \Delta_t)$ using $dF(\lambda) := f(\lambda)d\lambda$, and using the shorthand $S_t \equiv S_t(\widetilde \Delta_t)$, we have 
\begin{align*}
\widetilde M_t^\EB &:= \int_0^1 \exp \left \{ \lambda S_t - V_t \psi_E(\lambda) \right \} f(\lambda) \dd\lambda \\
&=\int_0^1 \exp \left \{ \lambda S_t - V_t \psi_E(\lambda) \right \} \frac{\rho^\rho e^{-\rho \left(1 - \lambda\right)} \left(1 - \lambda\right)^{\rho - 1}}{\Gamma(\rho) - \Gamma(\rho, \rho)} \dd\lambda \\
&= \frac{\rho^\rho e^{-\rho}}{\Gamma(\rho) - \Gamma(\rho, \rho)} \int_0^1 \exp \{{\lambda \left(\rho + S_t + V_t\right)}\} \left(1 - \lambda\right)^{V_t + \rho - 1} \dd\lambda \\
&= \left(\frac{\rho^\rho e^{-\rho}}{\Gamma(\rho) - \Gamma(\rho, \rho)}\right) \left(\frac{1}{V_t + \rho}\right) \left.\left(  \frac{\Gamma(b)}{\Gamma(a) \Gamma(b-a)} \int_0^1 e^{z u} u^{a-1} (1 - u)^{b - a - 1} \dd u \right)\right|_{\substack{a = 1 \\ b = V_t + \rho + 1 \\ z = S_t + V_t + \rho }} \\
&= \left(\frac{\rho^\rho e^{-\rho}}{\Gamma(\rho) - \Gamma(\rho, \rho)}\right) \left(\frac{1}{V_t + \rho}\right) \onefone(1, V_t + \rho + 1, S_t + V_t + \rho),
\end{align*}
which completes this step.

\paragraph{Step 3: Applying Ville's inequality and inverting the mixture NSM.}
Notice that $\widetilde \Delta_t < \widetilde L_t^\Delta$ if and only if $\widetilde M_t(\widetilde \Delta_t) \geq 1/\alpha$, and hence by Ville's inequality for nonnegative supermartingales \citep{ville1939etude}, we have that
\begin{equation*}
    \PP(\exists t : \widetilde \Delta_t < \widetilde L_t^\Delta) = \PP(\exists t : \widetilde M_t^\EB \geq 1/\alpha) \leq \alpha,
\end{equation*}
and hence $\widetilde L_t^\Delta$ forms a lower $(1-\alpha, \eps)$-\lpcs{} for $\widetilde \Delta_t$. This completes the proof.
\end{proof}
  
% Using a simple application of the Cram\'er-Chernoff technique, we will first derive the inequality $1-\Phi(x) \leq \exp\{-x^2/2\}$ for any $x > 0$. Indeed, for a standard Gaussian $\xi \sim N(0, 1)$ we have that for any $\lambda > 0$,
% \begin{align*}
%   1-\Phi(x) &= \PP(\xi > x)\\
%           &= \PP(\lambda \xi > \lambda x) \\
%           &= \PP(\exp\{\lambda \xi\} > \exp\{\lambda x \}) \\
%           &\leq \EE (\exp\{ \lambda \xi \})\exp\{- \lambda x\}\\
%           &= \exp\{ \lambda^2/2 - \lambda x\}, \\
% \end{align*}
% where the fourth line follows from Markov's inequality and the fifth from the fact that the moment generating function of $\xi$ is $\exp\{\lambda^2/2\}$. In particular, plugging in $\lambda := x$, we have
% \[ 1-\Phi(|x|) \leq \exp \{ -x^2/2 \}, \]
% as desired.

%%% Local Variables:
%%% mode: latex
%%% TeX-master: "main_arxiv"
%%% End:

\section{A more detailed survey of related work}
\label{section:related-detailed}
There is a rich literature exploring the intersection of statistics and differential privacy. \citet{wasserman2010statistical} studied DP estimation rates under various metrics for several privacy mechanisms. \citet{duchi2013local-FOCS,duchi2013local-NeurIPS,duchi2018minimax} articulated a new ``locally private minimax rate'' --- the fastest worst-case estimation rate with respect to any estimator \emph{and} LDP mechanism together --- and studied them in several estimation problems. To accomplish this they provide locally private analogues of the famous Le Cam, Fano, and Assouad bounds that are common in the nonparametric minimax estimation literature. As an example application, \citet{duchi2013local-FOCS,duchi2013local-NeurIPS,duchi2018minimax} derived minimax rates for nonparametric density estimation in Sobolev spaces, and showed that a naive application of the Laplace mechanism cannot achieve said rates, but a different carefully designed mechanism can. This study of density estimation was extended to Besov spaces by \citet{butucea2020local}. \citet{butucea2021locally} employed this minimax framework to study the fundamental limits of private estimation of nonlinear functionals.
\citet{acharya2021differentially} extended the locally private Le Cam, Fano, and Assouad bounds to central DP setting. \citet{duchi2018right} developed a framework akin to \citet{duchi2013local-FOCS,duchi2013local-NeurIPS,duchi2018minimax} but from the \emph{local} minimax point of view (here, ``local'' refers to the type of minimax rate considered, not ``local DP''). \citet{barnes2020fisher} studied the locally private Fisher information for parametric models. All of the aforementioned works are focused on estimation rates, rather than inference --- i.e. confidence sets, $p$-values, and so on (though some asymptotic inference is implicitly possible in \citep{duchi2018right,barnes2020fisher}).

The first work to explicitly study inference under DP constraints was \citet{vu2009differential}, who developed asymptotically valid private hypothesis tests for some parametric problems, including Bernoulli proportion estimation, and independence testing. Several works have furthered the study of differentially private goodness-of-fit and independence testing~\citep{wang2015revisiting,gaboardi2016differentially,berrett2020locally,amin2020pan,acharya2020domain,acharya2020inference,acharya2021inference,acharya2022interactive}. \citet{couch2019differentially} develop nonparametric tests of independence between categorical and continuous variables. \citet{awan2018differentially} derive private uniformly most powerful nonasymptotic hypothesis tests in the binomial case. \citet{karwa2018finite}, \citet{gaboardi2019locally}, and \citet{joseph2019locally} study nonasymptotic \ci{}s for the mean of Gaussian random variables. \citet{canonne2019structure} study optimal private tests for simple nulls against simple alternatives. \citet{covington2021unbiased} derive nonasymptotic \ci{}s for parameters of location-scale families. \citet{ferrando2020general} introduces a parametric bootstrap method for deriving asymptotically valid \ci{}s.

All of the previously mentioned works either consider goodness-of-fit testing, independence testing, or parametric problems where distributions are known up to some finite-dimensional parameter. \citet{drechsler2021non} study nonparametric \ci{}s for medians. To the best of our knowledge, no prior work derives private nonasymptotic \ci{}s (nor \cs{}s) for means of bounded random variables.

Moreover, like most of the statistics literature, the prior work on private statistical inference is non-sequential, with the exception of \citet{wang2020differential} who study private analogues of Wald's sequential probability ratio test \citep{wald1945sequential} for simple hypotheses, and \citet{berrett2021locally} who study locally private online changepoint detection. Another interesting paper is that of \citet[Sections 7.1 \& 7.2]{jun2019parameter} --- the authors study online convex optimization with sub-exponential noise, but also consider applications to martingale concentration inequalities (and thus \cs{}s) as well as locally private stochastic subgradient descent.

%%% Local Variables:
%%% mode: latex
%%% TeX-master: "main_arxiv"
%%% End:

\end{document}